\documentclass[aps,prl,reprint,superscriptaddress,floatfix,footnote, onecolumn,notitlepage]{revtex4-1}

\usepackage{appendix}

\usepackage{tikz}  
\usetikzlibrary{arrows,shapes,positioning,shadows,backgrounds,fit}

\usepackage{relsize}
\usepackage{graphicx}
\usepackage{amsmath, bbm}
\usepackage{amssymb}
\usepackage{amsthm}
\usepackage{mathrsfs}
\usepackage{xcolor}
\usepackage{cancel}
\usepackage{titlesec}
\usepackage{enumitem}  
\usepackage{mathtools}
\usepackage{subfig}

\usepackage[normalem]{ulem}

\titlespacing{\section}{0ex}{2ex}{0.4ex}
\titleformat{name=\section}{\bf}{\thesection.}{.3em}{}

\usepackage{soul}
\usepackage{dsfont}

\def\be{\begin{eqnarray}}
\def\ee{\end{eqnarray}}
\newcommand{\tr}[1]{\text{tr}\left(#1\right)}
\newcommand{\Tr}[1]{\text{Tr}\left[#1\right]}
\newcommand{\<}{\langle}

\newcommand{\ket}[1]{|{#1}\rangle}
\newcommand{\bra}[1]{\langle{#1}|}

\theoremstyle{plain}

\newtheorem{cor}{Corollary}
\newtheorem{thm}{Theorem}
\newtheorem{conj}{Conjecture}

\newcommand{\RE}{\mathbb{R}}

\newcommand{\Nn}{\mathcal{M}_{N}}
\newcommand{\TAN}{\mathcal{T}_{(\Vb)}\Nn}

\newcommand{\id}{\mathbb{I}}
\newcommand{\Vb}{\mathbf{V}}
\newcommand{\Xb}{\mathbf{X}}
\newcommand{\Ab}{\mathbf{A}}
\newcommand{\Ub}{\mathbf{U}}

\newcommand{\Bb}{\mathbf{B}}
\newcommand{\Cb}{\mathbf{C}}
\newcommand{\Db}{\mathbf{D}}
\newcommand{\Hb}{\mathbf{H}}
\newcommand{\Sb}{\mathbf{S}}
\newcommand{\Yb}{\mathbf{Y}}

\newcommand{\bb}{\mathbf{b}}
\newcommand{\eb}{\mathbf{e}}

\newcommand{\Omb}{\mathbf{\Omega}}

\definecolor{myblue}{rgb}{0.2,0.2,0.8}
\definecolor{myblack}{rgb}{0,0,0}
\definecolor{myurl}{rgb}{0.1,0.1,0.4}

\usepackage[colorlinks=true,citecolor=myblue,linkcolor=myblack,urlcolor=myurl]{hyperref}

\begin{document}

\title{Curvature of Gaussian quantum states}

\date{\today}

\author{Harry J.~D. Miller}
\affiliation{Department of Physics and Astronomy, The University of Manchester, Manchester M13 9PL, UK}

\begin{abstract}
    The space of quantum states can be endowed with a metric structure using the second order derivatives of the relative entropy, giving rise to the so-called Kubo-Mori-Bogoliubov inner product. We explore its geometric properties on the submanifold of faithful, zero-displacement Gaussian states parameterised by their covariance matrices, deriving expressions for the geodesic equations, curvature tensors and scalar curvature. Our analysis suggests that the curvature of the manifold is strictly monotonic with respect to the von Neumann entropy, and thus can be interpreted as a measure of state uncertainty. This provides supporting evidence for the Petz conjecture in continuous variable systems. 
\end{abstract}

\maketitle

\section{Introduction}

\

Information geometry is a multidisciplinary branch of statistics used for analysing the space of parametric probability distributions by introducing notions of distance and curvature \cite{amari2016information,nielsen2020elementary}. The field has found applications in a diverse number of areas, such as statistical physics \cite{ruppeiner1998riemannian,janke2004information}, complexity theory \cite{felice2018information}, signal processing \cite{gambini2014parameter}, inductive inference \cite{caticha2004maximum}, and machine learning \cite{amari2010information}. It is also closely tied to geometric approaches to thermodynamic fluctuation theory \cite{ruppeiner1995riemannian}, providing tools for minimising entropy production \cite{salamon1983thermodynamic} and even investigating black-hole thermodynamics \cite{mirza2007ruppeiner,ruppeiner2008thermodynamic}. The concepts of information geometry can also be  extended to describe quantum mechanical systems, where it has arguably found even more fruitful applications including the foundations of quantum theory \cite{brody2001geometric}, metrology \cite{braunstein1994statistical,sidhu2020geometric}, quantum thermodynamics \cite{scandi2019thermodynamic,brandner2020thermodynamic}, phase transitions \cite{banchi2014quantum}, random state preparation \cite{hall1998random,zyczkowski2011generating} and quantum speed limits \cite{pires2016generalized,deffner2017quantum}. 

The formalism of quantum information geometry begins by equipping the space of parameterised quantum states with a Riemannian metric constructed from the second order derivatives of a chosen divergence measure \cite{amari2016information,bengtsson2017geometry}. This gives rise to a notion of \textit{statistical distance} between different states, where length is assigned based upon their degree of distinguishability \cite{wootters1981statistical}. One choice of measure is the relative entropy which induces the so-called Kubo-Mori-Bogoliubov (KMB) inner product on the state manifold \cite{petz1993bogoliubov,petz1994geometry,balian2014entropy}. The KMB metric is notable because it naturally arises in linear response theory \cite{kubo1966fluctuation} and quantum estimation theory \cite{hayashi2002two}. It has been used to define quantum generalisations of thermodynamic length \cite{abiuso2020geometric} and address foundational problems in statistical mechanics \cite{balian1986dissipation,floerchinger2020thermodynamics}. One of the most interesting properties of the KMB metric is the form of its the scalar curvature \cite{michor2000curvature,dittmann2000curvature,gibilisco2005monotonicity}, which is believed to share a deep connection with entropy \cite{petz2002covariance}. A conjecture due to Petz states that the KMB scalar curvature monotonically increases as a state becomes more mixed \cite{petz1994geometry}, implying that it should be viewed as an entropic-like measure of statistical uncertainty. This information-theoretic interpretation of the curvature remains unproven but has been confirmed in a variety of finite-dimensional systems \cite{petz1994geometry,dittmann2000curvature}. In thermodynamic systems, the KMB curvature is also connected to the strength and direction of interparticle interactions \cite{janyszek1986geometrical} and can be used as a witness of criticality \cite{janyszek1989riemannian}. Overall, there is motivation for studying the KMB curvature as a theoretical tool for understanding the geometric structure of quantum mechanical state space alongside its relationship to information theory and statistical mechanics.

While the KMB metric has been well studied in finite-dimensional quantum systems, almost nothing is known about its structure within continuous variable systems. Here an important class are the Gaussian quantum states which commonly arise in quantum optical settings \cite{wang2007quantum,weedbrook2012gaussian,adesso2014continuous}. Gaussian states are unique due to the symplectic structure of their state space \cite{simon1987gaussian}, and so there has been a considerable amount of attention devoted to understanding their geometric properties \cite{tanaka2006kubo,link2015geometry,marian2016quantum,felice2017volume,di2020complexity,siudzinska2019geometry,windt2021local,mehboudi2022thermodynamic,poggi2021diverging}. In order to facilitate an understanding of the KMB metric in infinite-dimensional settings, this paper provides a full characterisation of its properties within the manifold of faithful $N$-mode, zero-displacement Gaussian states, including derivations of the geodesic equation, Riemann and Ricci curvature tensors, and finally the scalar curvature. It is shown that the KMB metric is invariant under symplectic transformations; this important property is used to obtain a formula for the scalar curvature purely in terms of the symplectic eigenvalues of the Gaussian state covariance matrix, allowing for direct calculations of the curvature with arbitrarily large numbers of modes. Using the formula we are able to prove the Petz conjecture for the single mode case. Secondly, we provide both analytic and numerical supporting evidence that the Petz conjecture holds true in larger Gaussian systems, as we observe monotonic growth in the scalar curvature with increasing temperature in a periodic Gaussian thermal chain. In all cases the curvature is found to be negative, recovering the classical constant-curvature prediction in the highly mixed limit while diverging whenever any of the modes are pure. 

\ 

\section{Information geometry and curvature of classical Gaussian distributions}

\

Before we begin discussing Gaussian quantum states it is useful to first recall the well-known information geometry underlying the classical Gaussian distributions. A comprehensive overview of the Gaussian family geometry can be found in \cite{andai2009geometry}. Given a random vector $\underline{x}\in\mathbb{R}^d$, the family of $d$-dimensional standard Gaussian distributions take the form
\begin{align}\label{eq:gauss}
    p(\underline{x};\Vb):=\sqrt{\frac{1}{\text{det}(\Vb)}} \ \text{exp}\bigg[-\frac{1}{2}\underline{x}^T \Vb^{-1} \underline{x}\bigg],
\end{align}
where $[\Vb]_{ij}:=\langle x_i x_j \rangle$ represents the positive, symmetric covariance matrix of dimension $d\times d$ and the mean is assumed to be zero ($\langle \underline{x}\rangle=0$). This is a parametric family of distributions with $\Vb$ providing a set of $d(d+1)/2$ independent real variables for the manifold. The space of classical probability distributions is naturally equipped with a Riemannian structure based on the Fisher-Rao metric \cite{amari2016information}, which is found from the second order expansion of the Kullback-Liebler divergence $D[\Vb|\Vb']=\langle \text{ln} \ [p(\underline{x};\Vb)/p(\underline{x};\Vb')]\rangle$ for infinitesimally close distributions. In short, distances between neighbouring distributions can be quantified in terms of their statistical distinguishability. In the case of the standard Gaussian family the Fisher-Rao metric takes the form \cite{burbea1984informative}
\begin{align}\label{eq:siegel}
    ds^2=\frac{1}{2}\Tr{(\Vb^{-1} d\Vb)^2 }.
\end{align}
In geometry this is equivalent to the Siegel metric \cite{siegel2014symplectic}, describing a hyperbolic manifold of constant negative curvature. Moreover, the scalar curvature of the manifold is given by the elegant formula \cite{andai2009geometry}
\begin{align}\label{eq:classcurv}
    \text{Scal}(\Vb)=-\frac{d (d-1)(d+2)}{4},
\end{align}
Therefore, the standard Gaussian family holds a notable place in the field of information geometry due to its simple structure, computable geodesics and scale invariance. Moreover, information geometric analysis of the Gaussian family has found applications in diffusion tensor imaging \cite{lenglet2006statistics} and Gaussian mixture modelling \cite{pinele2020fisher}. We will eventually see that quantum mechanical Gaussian states do not have this simple structure, since the introduction of quantum fluctuations changes the geometric features of the manifold. In particular it will be shown that the curvature is no longer constant, signifying a breakdown of scale invariance and classicality.

\section{Quantum information geometry and Gaussian states}

\

Our goal will be to develop a quantum generalisation of the classical formalism shown above by replacing the standard Gaussian distributions with Gaussian \textit{states}. Let's first consider the geometry of the full space of quantum states. For a separable Hilbert space $\mathcal{H}$ the set of quantum states are the positive, unit-trace density operators $\mathcal{S}(\mathcal{H}):=\{\hat{\rho} \ | \ \hat{\rho}\geq 0, \ \tr{\hat{\rho}}=1 \}$. An important quantifier of distinguishability between two states is given by the quantum relative entropy,
\begin{align}\label{eq:rel_ent}
    S(\hat{\rho}|| \hat{\rho}')=\tr{\hat{\rho}(\text{ln} \ \hat{\rho}-\text{ln} \   \hat{\rho}')}
\end{align}
which is the non-commutative generalisation of the Kullback-Liebler divergence \cite{umegaki1962conditional}. Following the same rationale used in classical information geometry, one can build a metric connecting different states from the partial derivatives of the relative entropy \cite{petz1994geometry,lesniewski1999monotone}. This can be characterised by the squared line element $ds^2=g_{\hat{\rho}}(d\hat{\rho},d\hat{\rho})$ where
\begin{align}\label{eq:KMB}
    g_{\hat{\rho}}(\hat{A},\hat{B}):=-\frac{\partial^2}{\partial s\partial t }\bigg|_{s=t=0} S\big(\hat{\rho}+t\hat{A}|| \ \hat{\rho}+s\hat{B}\big),
\end{align}
The smooth bilinear form $g_{\hat{\rho}}(\hat{A},\hat{B})$ is symmetric and positive definite, thus defining a Riemannian metric for $\mathcal{S}(\mathcal{H})$. In the literature this is often referred to as the Kubo-Mori-Bogoliubov (KMB) inner product \cite{petz1993bogoliubov}, and it belongs to the family of monotone metrics which all provide non-commutative generalisations of the classical Fisher-Rao metric \cite{petz1996monotone}. It is notable that the KMB metric can alternatively be obtained from the Hessian of the von Neumann entropy $S(\hat{\rho})=-\tr{\hat{\rho} \ \text{ln} \ \hat{\rho}}$ since $ds^2=-d^2 S(\hat{\rho})$ \cite{balian2014entropy}.

In this paper we will be interested in the geometric structure of the subset of Gaussian quantum states, and so we begin by considering the set of continuous-variable quantum systems represented by an $N$-mode bosonic state, where $N$ is a fixed positive integer. The Hilbert space of such  states are given by $\mathcal{H}^{\otimes N}=\bigotimes_{k=1}^N \mathcal{H}_k$, where $\mathcal{H}_k$ represents an infinite dimensional and separable Hilbert space. Let $\hat{x}_j=(\hat{q}_j,\hat{p}_j)$ denote a quadrature operator for the subspace $\mathcal{H}_k$, with $\hat{q}_j$ denoting position and $\hat{p}_j$ the momentum. One then has $2N$ quadrature operators in total, composing a vector $\underline{\hat{x}}=[\hat{q}_1,...,\hat{q}_N, \hat{p}_1,...,\hat{p}_N]=[\hat{x}_1,...,\hat{x}_N]$. These quadrature operators fulfill the commutation relations
\begin{align}
    [\hat{x}_j,\hat{x}_k]=i\Omega_{jk},
\end{align}
where $\Omega_{jk}$ is an element of the $2N\times 2N$ matrix
\begin{align}
    \Omb:=\bigoplus^{N}_{k=1} \ \mathbf{\omega}, \ \ \ \ \ \ \  \mathbf{\omega}=\left(\begin{array}{cc}
          0 & 1   \\ 
          -1 & 0  
    \end{array}\right)
\end{align}
known as the symplectic form (we take $\hbar=1$ throughout). The subset of \textit{standard} Gaussian states $\{\hat{\rho}(\Vb)\}\subset \mathcal{S}(\mathcal{H}^{\otimes N})$ take the form \cite{banchi2015quantum}
\begin{align}\label{eq:gauss}
    \hat{\rho}(\Vb):=\sqrt{\frac{1}{\text{det} \ (\Vb+i\Omb/2)}} \ \text{exp}\bigg[-\frac{1}{2}\underline{\hat{x}}^T \Hb_\Vb \underline{\hat{x}}\bigg]
\end{align}
These states are parameterised entirely in terms of a $2N\times 2N$-dimensional symmetric covariance matrix $\Vb$, which has elements
\begin{align}
    \big[\Vb\big]_{jk}:=\frac{1}{2}\big<\big\{\hat{x}_j, \ \hat{x}_k\big\}_+\big>_{\hat{\rho}}
\end{align}
Note that in analogy to the standard family of Gaussian probability distributions~\eqref{eq:gauss}, we define the the class of standard Gaussian states as those with zero displacement, $\langle \underline{\hat{x}} \rangle_{\hat{\rho}}=0$. Furthermore, the matrix $\Hb_\Vb\geq0$ is a $2N\times 2N$ positive-definite matrix determined by $\Vb$ according to
\begin{align}\label{eq:cayley}
    \Hb_\Vb:=2i\Omb \ \text{arccoth}\big(2\Vb i\Omb\big).
\end{align}
This function may be evaluated using \textit{Williamson's theorem} \cite{weedbrook2012gaussian}, which states that there exists a diagonalisation of the covariance matrix such that 
\begin{align}\label{eq:symp}
    \Vb=\Sb_\Vb \big(\Db_\Vb\oplus\Db_\Vb\big)\Sb_\Vb^T,
\end{align}
where $\Sb_\Vb$ is a symplectic matrix (ie. it preserves $\Sb_\Vb \ \Omb \  \Sb_\Vb^T=\Omb$), and $\Db_\Vb=\text{diag}(\nu_1,...\nu_N)$ is a diagonal matrix of symplectic eigenvalues. Crucially, a valid covariance matrix state must satisfy the uncertainty principle \cite{simon1994quantum}
\begin{align}\label{eq:uncertain}
    \Vb+i\Omb/2\geq 0. 
\end{align}
This is saturated by the pure vacuum state $\ket{0}\bra{0}$ which has minimal uncertainty in position and momentum. As a consequence of~\eqref{eq:uncertain} the symplectic eigenvalues must be bounded as $\nu_j\geq 1/2 \ \forall j\in\{1,..,N\}$. A Gaussian state is said to be \textit{faithful} if and only if $\nu_j> 1/2 \ \forall j\in\{1,..,N\}$ \cite{seshadreesan2018renyi}; in this paper we will always assume this to be true and only consider non-faithful states as limiting cases of all subsequent expressions. The uncertainty relation provides a clear distinction between the family of Gaussian states versus Gaussian probability distributions, since~\eqref{eq:uncertain} shows that not all positive covariance matrices are consistent with a valid quantum state. The relative entropy between two Gaussian states parameterised by $\Vb$ and $\Vb'$ respectively is given by \cite{chen2005gaussian}
\begin{align}\label{eq:rel_ent_gauss}
    S\big(\hat{\rho}(\Vb)|| \hat{\rho}(\Vb')\big)=\frac{1}{4}\ln \big[\mathcal{Z}(\Vb')/\mathcal{Z}(\Vb)\big]+\frac{1}{2}\Tr{(\Hb_{\Vb'}-\Hb_{\Vb})\Vb}.
\end{align}

In analogy to the classical case we may view the set of covariance matrices as a collection of parameters on a manifold. Let $\Nn$ denote the set of real $2N\times 2N$ symmetric matrices that fulfill the uncertainty relation~\eqref{eq:uncertain} in the strict sense, ie. $\nu_j> 1/2 \ \forall j\in\{1,..,N\}$. The set of faithful standard Gaussian quantum states then form a manifold parameterised by the matrices $\Vb\in\Nn$ and we denote the corresponding tangent space by $\mathcal{T}_{(\Vb)}\Nn$. To identify this tangent space, consider the directional derivative at a point $\Vb\in\Nn$ along the vector $\Ab\in\mathcal{T}_{(\Vb)}\Nn$, defined for a smooth function $h:\Nn\to \RE$ by
\begin{align}\label{eq:deriv}
    \frac{\partial h(\Vb)}{\partial\Ab}:=\frac{d}{dt}h(\Vb+t\Ab)\bigg|_{t=0},
\end{align}
From this we can see that the tangent space is equivalent to the set of $2N\times 2N$ real, symmetric matrices. The KMB metric~\eqref{eq:KMB} on the Gaussian state manifold is then a smooth map of the form
\begin{align}
    g_{\Vb}: \ \TAN\times\TAN \to \RE \ \ \ \bigg(\big(\Ab,\Bb)\mapsto g_{\Vb}\big(\Ab,\Bb\big)\bigg),
\end{align}
for any $\Vb\in\Nn$, which is given by the negative Hessian of the quantum relative entropy~\eqref{eq:rel_ent_gauss} with respect to variations in the covariance matrix,
\begin{align}\label{eq:relentmet}
    g_{\Vb}(\Ab,\Bb):=-\frac{1}{2}\frac{\partial^2}{\partial s\partial t }\bigg|_{s=t=0} \Tr{\Hb_{\Vb+s\Bb}(\Vb+t\Ab)},
\end{align}
It will be useful throughout our analysis to note the following matrix derivative  which will be used repeatedly:
\begin{align}
    \frac{d}{dt}[\Xb(t)]^{-1}=-[\Xb(t)]^{-1} \dot{\Xb}(t)[\Xb(t)]^{-1}, 
\end{align}
To begin we first introduce an integral representation of the matrix $\Hb_\Vb$, which is proven in Appendix A:
\begin{align}\label{eq:int_rep}
    \Hb_\Vb=\int^1_{-1}d\lambda \ \big[2\Vb+\lambda(i\Omb)\big]^{-1}.
\end{align}
It then follows that the directional derivative is
\begin{align}\label{eq:deriv_D}
    \frac{d}{dt}\Hb_{\Vb+t\Cb}\bigg|_{t=0}=- 2\Delta_\Vb[\Cb]
\end{align}
where
\begin{align}\label{eq:delta_map}
    \Delta_\Vb(.):=\int^1_{-1}d\lambda \ \big[2\Vb+i\lambda\Omb\big]^{-1}(.)  \ \big[2\Vb+i\lambda\Omb\big]^{-1}.
\end{align}
We combine~\eqref{eq:deriv_D} with~\eqref{eq:relentmet} to get the metric:
\begin{align}\label{eq:metric_final}
    g_{\Vb}\big(\Ab,\Bb\big)=\Tr{\Bb \ \Delta_\Vb(\Ab)},
\end{align}
To verify this is a valid metric for $\Nn$ we need to check $(i)$ symmetry and $(ii)$ positivity. Cyclicity of the trace confirms symmetry in $\Ab,\Bb$, while positivity can be shown by taking the transpose of the uncertainty relation~\eqref{eq:uncertain} to get $\Vb+i\lambda \Omb>0$ for $\lambda\in[-1,1]$. This implies that we can define the inverse hermitian square root $(\Vb+i\lambda \Omb)^{-1/2}$ at each point inside the integral over $d\lambda$, and use cyclicity of the trace to get
\begin{align}
    g_{\Vb}\big(\Ab,\Ab\big)=\int^1_{-1}d\lambda \ \Tr{\bigg(2\Vb+i\lambda \Omb)^{-1/2}\Ab (2\Vb+i\lambda \Omb)^{-1/2}\bigg)^\dagger\bigg(2\Vb+i\lambda \Omb)^{-1/2}\Ab (2\Vb+i\lambda \Omb)^{-1/2}\bigg)}\geq  0,
\end{align}
as desired. The pair $(\Nn,g)$ thus defines a Riemann manifold for the faithful, standard  $N-$mode Gaussian states, and when choosing the covariance matrices as the parameters the corresponding squared line element is
\begin{align}\label{eq:ds}
    ds^2=\int^1_{-1}d\lambda \ \Tr{d\Vb\big[2\Vb+i\lambda\Omb\big]^{-1}d\Vb  \ \big[2\Vb+i\lambda\Omb\big]^{-1}},
\end{align}
which is a quantum generalisation of~\eqref{eq:siegel}. We now state an important symmetry  property of the metric,
\begin{thm}
    The KMB metric on the manifold of standard Gaussian states is invariant under symplectic transformations  of the form
    \begin{align}
        g_{\Sb\Vb \Sb^T}\big(\Sb\Ab\Sb^T,\Sb\Bb \Sb^T\big)=g_{\Vb}\big(\Ab,\Bb\big).
    \end{align}
where $\Sb$ is any symplectic matrix.     
\end{thm}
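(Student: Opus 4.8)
The plan is to work directly from the integral representation of the metric given in \eqref{eq:metric_final} and \eqref{eq:ds}, namely $g_{\Vb}(\Ab,\Bb)=\int_{-1}^1 d\lambda\,\Tr{\Bb\,[2\Vb+i\lambda\Omb]^{-1}\Ab\,[2\Vb+i\lambda\Omb]^{-1}}$, and to show that the integrand is invariant term-by-term under the substitution $\Vb\mapsto\Sb\Vb\Sb^T$, $\Ab\mapsto\Sb\Ab\Sb^T$, $\Bb\mapsto\Sb\Bb\Sb^T$. First I would record the defining property of a symplectic matrix in the form most useful here: from $\Sb\Omb\Sb^T=\Omb$ one gets $\Omb=\Sb^{-1}\Omb\Sb^{-T}$, equivalently $\Sb^T\Omb\Sb=\Omb$ as well (since the symplectic group is closed under transpose), so that $\Sb\Omb\Sb^T=\Omb$ can be rearranged to $i\lambda\Omb=\Sb(i\lambda\Omb)\Sb^T$. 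That is the one algebraic fact the whole proof rests on.

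The key computation is then to rewrite the shifted, transformed inverse blocks. Using $i\lambda\Omb=\Sb(i\lambda\Omb)\Sb^T$ we have
\begin{align}
    2\Sb\Vb\Sb^T+i\lambda\Omb=\Sb\big(2\Vb+i\lambda\Omb\big)\Sb^T,
\end{align}
so that
\begin{align}
    \big[2\Sb\Vb\Sb^T+i\lambda\Omb\big]^{-1}=\Sb^{-T}\big[2\Vb+i\lambda\Omb\big]^{-1}\Sb^{-1}.
\end{align}
Substituting this and $\Ab\mapsto\Sb\Ab\Sb^T$, $\Bb\mapsto\Sb\Bb\Sb^T$ into the integrand, the factors $\Sb^{-1}\Sb$ and $\Sb^T\Sb^{-T}$ telescope, leaving $\Tr{\Sb\Bb\Sb^T\cdot\Sb^{-T}[2\Vb+i\lambda\Omb]^{-1}\Sb^{-1}\cdot\Sb\Ab\Sb^T\cdot\Sb^{-T}[2\Vb+i\lambda\Omb]^{-1}\Sb^{-1}}$, and cyclicity of the trace removes the outermost $\Sb^T$ and $\Sb^{-T}$. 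Integrating over $\lambda$ recovers $g_{\Vb}(\Ab,\Bb)$ exactly. One should also check that $\Sb\Ab\Sb^T$ and $\Sb\Bb\Sb^T$ are again valid tangent vectors (real symmetric matrices) and that $\Sb\Vb\Sb^T$ still lies in $\Nn$ — both are immediate, since congruence by a real $\Sb$ preserves symmetry and reality, and symplectic congruence preserves the symplectic eigenvalues by Williamson's theorem \eqref{eq:symp}, hence preserves the strict uncertainty relation \eqref{eq:uncertain}.

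I do not anticipate a genuine obstacle here; the statement is essentially a bookkeeping exercise once the identity $2\Sb\Vb\Sb^T+i\lambda\Omb=\Sb(2\Vb+i\lambda\Omb)\Sb^T$ is in hand. The only point requiring a little care is the direction of the inverse transposes — it is easy to slip and write $\Sb$ where $\Sb^{-T}$ belongs — so I would be explicit that conjugating a matrix $\Xb$ by $\Sb$ on the two sides, $\Xb\mapsto\Sb\Xb\Sb^T$, sends $\Xb^{-1}\mapsto(\Sb^T)^{-1}\Xb^{-1}\Sb^{-1}$, and that this is exactly what makes the $\Sb$'s coming from $\Ab\mapsto\Sb\Ab\Sb^T$ and $\Bb\mapsto\Sb\Bb\Sb^T$ cancel against them. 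Alternatively, if one prefers to avoid the integral representation, the same result follows even more directly from the relative-entropy formula \eqref{eq:rel_ent_gauss} together with the fact that $\Hb_{\Sb\Vb\Sb^T}=\Sb^{-T}\Hb_\Vb\Sb^{-1}$ (a consequence of \eqref{eq:int_rep} and the block identity above) and the invariance of $\mathcal{Z}(\Vb)=\det(\Vb+i\Omb/2)$ under symplectic congruence; I would mention this as a remark but carry out the tangent-space version since the metric is what the later curvature computations use.
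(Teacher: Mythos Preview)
Your proposal is correct and follows essentially the same route as the paper's proof: both use the symplectic identity $\Sb\Omb\Sb^T=\Omb$ to factor $2\Sb\Vb\Sb^T+i\lambda\Omb=\Sb(2\Vb+i\lambda\Omb)\Sb^T$, invert, and cancel via trace cyclicity. Your additional remarks verifying that $\Sb\Vb\Sb^T\in\Nn$ and that symplectic congruence preserves the tangent space, as well as the alternative argument via $\Hb_{\Sb\Vb\Sb^T}$, go slightly beyond what the paper records but are entirely consistent with it.
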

\begin{proof}
    Since $\Sb\Omb \Sb^T=\Omb$ we may write
    \begin{align}\label{eq:Sinv}
        [2\Sb \Vb \Sb^T+i\lambda\Omb]^{-1}=[\Sb(2\Vb +i\lambda\Omb)\Sb^T]^{-1}=(\Sb^T)^{-1}[2\Vb +i\lambda\Omb]^{-1}\Sb^{-1}.
    \end{align}
By cyclicity of the trace we have
\begin{align}
    \nonumber g_{\Sb\Vb \Sb^T}\big(\Sb\Ab\Sb^T,\Sb\Bb \Sb^T\big)&=\int^1_{-1}d\lambda \ \Tr{\Sb^{-1}\Sb\Bb\Sb^T (\Sb^T)^{-1}\big[2\Vb+i\lambda\Omb\big]^{-1}\Sb^{-1}\Sb \Ab \Sb^T (\Sb^T)^{-1}  \ \big[2\Vb+i\lambda\Omb\big]^{-1}}, \\
    &=g_{\Vb}\big(\Ab,\Bb\big).
\end{align}
\end{proof}
The symplectic invariance of the metric is reminiscent  of the general unitary invariance of the KMB metric on the full space of quantum states \cite{petz1996monotone}, since any unitary transformation of the form $\hat{\rho}\mapsto \hat{U}\hat{\rho}\hat{U}^\dagger$ that connects two Gaussian states is equivalent to applying a symplectic transformation on the corresponding covariance matrix $\Vb\mapsto \Sb\Vb\Sb^T$ \cite{simon1987gaussian}.

To explore the geometric properties of the KMB metric we can introduce a covariant derivative arising from the Levi-Civita connection of $g$ \cite{amari2016information,nielsen2020elementary}. One can start by taking the directional derivative of the metric, given by
\begin{align}\label{eq:metric_deriv}
    dg_{\Vb}(\Cb)\big(\Ab,\Bb\big):=\frac{\partial }{\partial\Cb} \ g_{\Vb}\big(\Ab,\Bb\big)\big)
\end{align}
For a given point $\Vb\in\Nn$ and tangent vectors $\Ab,\Bb\in\TAN$, define the linear functional
\begin{align}\label{eq:theta}
    \Theta_{(\Vb,\Ab,\Bb)}: \ \TAN\to\RE \ \ \ \ \Cb\mapsto \frac{1}{2}\bigg(dg_{\Vb}(\Bb)\big(\Ab,\Cb\big)+dg_{\Vb}(\Ab)\big(\Bb,\Cb\big)-dg_{\Vb}(\Cb)\big(\Ab,\Bb\big)\bigg),
\end{align}
Linearity with respect to $\Cb$ implies that for any $\Cb\in\TAN$ we can find a unique tangent vector $\Gamma_{\Vb}\big(\Ab,\Bb\big)\in\TAN$, such that
\begin{align}\label{eq:covderiv}
    g_{\Vb}\big(\Gamma_{\Vb}\big(\Ab,\Bb\big),\Cb\big)=\Theta_{(\Vb,\Ab,\Bb)}(\Cb).
\end{align}
We define this vector as the \textit{covariant derivative}. To find its explicit form, consider the map
\begin{align}\label{eq:deriv_g2}
    d\Delta_{\Vb}(\Cb)(\Ab):= -\frac{\partial }{\partial\Cb}\Delta_{\Vb}(\Ab)=2\int^1_{-1}d\lambda \ \theta_\lambda(\Vb)\bigg(\Cb \ \theta_\lambda(\Vb) \ \Ab+\Ab \ \theta_\lambda(\Vb) \ \Cb\bigg)\theta_\lambda(\Vb)
\end{align}
where for shorthand we denote the matrix function $\theta_\lambda(\Vb):=[2\Vb+i\lambda\Omb]^{-1}$. The metric derivative is then 
\begin{align}
    dg_{\Vb}(\Cb)\big(\Ab,\Bb\big)=-\Tr{\Bb \  d\Delta_{\Vb}(\Cb)(\Ab)},
\end{align}
Since this is symmetric with respect to $\Ab,\Bb,\Cb$ due to cyclicity of the trace, we have from~\eqref{eq:theta}
\begin{align}\label{eq:theta1}
    \Theta_{(\Vb,\Ab,\Bb)}(\Cb)=-\frac{1}{2}\Tr{\Cb \  d\Delta_{\Vb}(\Bb)(\Ab)}
\end{align}
Finally, we note that the map $\Delta_{\Vb}(.)$ has a well-defined inverse $\Delta_{\Vb}^{-1}(.)$ (an explicit expression is given in the next section). Comparing~\eqref{eq:theta1} with~\eqref{eq:covderiv} verifies
\begin{align}\label{eq:covderiv_test}
    \Gamma_{\Vb}\big(\Ab,\Bb\big)=-\frac{1}{2}\Delta_{\Vb}^{-1}\cdot d\Delta_{\Vb}(\Bb)(\Ab),
\end{align}
The metric and its covariant derivative are enough to characterise the geodesic paths on $\Nn$. Recalling the geodesic equation, we can say that a curve $\gamma:\mathbb{R}\mapsto \Nn$, $\gamma(t)=\Vb(t)$, $t\in[0,1]$ is a geodesic if an only if
\begin{align}
    \forall t\in[0,1]: \ \ \ \ \ \ddot{\gamma}(t)=\frac{1}{2}\Delta_{\gamma(t)}^{-1}\cdot d\Delta_{\gamma(t)}\big(\dot{\gamma}(t)\big)\big(\dot{\gamma}(t)\big).
\end{align}
and the geodesic length connecting two points $\Vb_0=\gamma(0)$ and $\Vb_1=\gamma(1)$ is
\begin{align}\label{eq:length}
    \mathcal{L}(\Vb_0,\Vb_1)=\int^1_0 dt \ \sqrt{\Tr{\dot{\gamma}(t) \ \Delta_{\gamma(t)}(\dot{\gamma}(t))}}
\end{align}
Since we have constructed this distance from the relative entropy, $\mathcal{L}(\Vb_0,\Vb_1)$ quantifies the effort at which we can distinguish the two states, playing the role of a statistical distance \cite{wootters1981statistical}. Unlike the classical case, analytic expressions for the geodesic curves of the KMB metric under arbitrary boundary conditions are not yet known but we will not need to pursue this in the present paper.

\section{Scalar curvature of the Gaussian state manifold }

\

\noindent In this section we now move on to determining the curvature properties of $(\Nn,g)$ culminating in a closed form expression for the Ricci scalar curvature. We will assume familiarity with the general concepts and definitions of curvature in Riemannian geometry, though for background the reader can refer to \cite{amari2016information,nielsen2020elementary}. To begin we will need the derivative of the covariant derivative, which is a map
\begin{align}\label{eq:deriv_cov_deriv2}
     d\Gamma_{\Vb}(\Cb)\big(\Ab,\Bb\big)&=\frac{\partial}{\partial\Cb}\Gamma_{\Vb}\big(\Ab,\Bb\big)=\frac{1}{2}\Delta_{\Vb}^{-1}\cdot d\Delta_{\Vb}(\Cb)\cdot\Delta_{\Vb}^{-1}\cdot d\Delta_{\Vb}(\Bb)(\Ab)-\frac{1}{2}\Delta_{\Vb}^{-1}\cdot \frac{\partial}{\partial \Cb} d \Delta_{\Vb}(\Bb)(\Ab).
\end{align}
The Riemann curvature tensor is defined by
\begin{align}
    R_{\Vb}\big(\Ab,\Bb,\Cb\big):=d\Gamma_{\Vb}(\Ab)\big(\Bb,\Cb\big)\big)-d\Gamma_{\Vb}(\Bb)\big(\Ab,\Cb\big)\big)+\Gamma_{\Vb}\big(\Ab,\Gamma_{\Vb}\big(\Bb,\Cb\big)\big)-\Gamma_{\Vb}\big(\Bb,\Gamma_{\Vb}\big(\Ab,\Cb\big)\big),
\end{align}
Combining~\eqref{eq:deriv_cov_deriv2} with~\eqref{eq:covderiv_test} and using the fact that $\frac{\partial}{\partial \Cb} d \Delta_{\Vb}(\Bb)(\Ab)$ is symmetric in $\Ab,\Bb,\Cb$, we find
\begin{align}\label{eq:riemann}
    R_{\Vb}\big(\Ab,\Bb,\Cb\big)=\frac{1}{4}\Delta_{\Vb}^{-1}\cdot d\Delta_{\Vb}(\Bb)\cdot\Delta_{\Vb}^{-1}\cdot d\Delta_{\Vb}(\Ab)(\Cb)-\frac{1}{4}\Delta_{\Vb}^{-1}\cdot d\Delta_{\Vb}(\Ab)\cdot\Delta_{\Vb}^{-1}\cdot d\Delta_{\Vb}(\Bb)(\Cb).
\end{align}
Alongside this we can also introduce the Ricci curvature tensor, which is defined from the trace of the map $R_{\Vb}\big((.),\Bb,\Cb\big)$,
\begin{align}
    \text{Ric}_{\Vb}\big(\Ab,\Bb\big):=\Tr{R_{\Vb}\big((.),\Ab,\Bb\big)}.
\end{align}
Now let the set $\{\Xb_s\}_{s=1,2,...2N^2+N}$ be an orthonormal basis for the tangent space $\TAN$. The Ricci curvature tensor can then be determined by expanding within this chosen basis, so that 
\begin{align}\label{eq:ric}
\text{Ric}_{\Vb}\big(\Ab,\Bb\big)=\sum_{s=1}^{2N^2+1}\Tr{R_{\Vb}\big(\Xb_s,\Ab,\Bb\big)\Xb_s}.
\end{align}
For this linear map we can always find a unique tangent vector $\tilde{\Ab}\in\TAN$ such that
\begin{align}
    g_{\Vb}\big(\tilde{\Ab},\Bb\big)=\text{Ric}_{\Vb}\big(\Ab,\Bb\big).
\end{align}
We can therefore define a map 
\begin{align}
    \tilde{\text{Ric}}: \Nn\to\text{Lin}(\TAN,\TAN) \ \ \ \Vb\mapsto\big(\Ab\mapsto\tilde{\Ab}\big),
\end{align}
The KMB scalar curvature of the Gaussian manifold is then given by the trace of this map, namely
\begin{align}\label{eq:scalar}
    \text{Scal}:\Nn\to\mathbb{R} \ \ \ \Vb\mapsto \Tr{\tilde{\text{Ric}}_{\Vb}}.
\end{align}
It is straightforward to show that the scalar curvature at a point $\Vb\in\Nn$ can be expressed as a summation over the basis $\{\Xb_s\}_{s=1,2,...2N^2+N}$, with
\begin{align}\label{eq:scalar1}
    \text{Scal}(\Vb)=\sum^{2N^2+N}_{s,t=1} \big\langle R_{\Vb}\big(\Xb_t,\Xb_s,\Delta^{-1}_{\Vb}(\Xb_t)\big) ,\Xb_s\big\rangle, 
\end{align}
where $\langle \Xb_s, \Xb_t  \rangle=\Tr{\Xb_s^T \Xb_t}$ denotes the Hilbert-Schmidt inner product for real matrices. To perform this complicated summation for an arbitrary number of modes, we can exploit the symplectic structure of the set of Gaussian states and the symplectic invariance of the metric (Theorem 1). We summarise our main result as the following theorem: 
\begin{thm}
    The KMB scalar curvature of an $N$-mode standard Gaussian state with covariance matrix $\Vb$ is a symplectic invariant, and can be expressed in terms of three symmetric functions of the symplectic eigenvalues $\{\nu_1,\nu_2,...\nu_N\}$ of $\Vb$:
\begin{align}\label{eq:mainresult}
    \mathrm{Scal}(\Vb)=\sum_{i=1}^N \varphi_1[\nu_i]+\sum^N_{i<j} \varphi_2[\nu_i,\nu_j]+\sum^N_{i<j<k} \varphi_3[\nu_i,\nu_j,\nu_k],
\end{align}
where
\begin{align}
    \nonumber\varphi_1[\nu_i]:=\frac{B_{iii} (2 A_{iii}g_{ii}-B_{iii} f_{ii})}{f_{ii}^2 \ g_{ii}^2},
\end{align}
\begin{align}
    \nonumber\varphi_2[\nu_i,\nu_j]&:=\frac{A_{iij}}{f_{ii}f_{ij}}\bigg[\frac{A_{iij}}{4f_{ij}}-\frac{B_{iji}}{g_{ij}}-\frac{2B_{iii}}{g_{ii}}-\frac{A_{iii}}{f_{ii}}\bigg]+\frac{A_{ijj}}{f_{jj}f_{ij}}\bigg[\frac{A_{ijj}}{4f_{ij}}-\frac{B_{jij}}{g_{ij}}-\frac{2B_{jjj}}{g_{jj}}-\frac{A_{jjj}}{f_{jj}}\bigg] \\
    \nonumber& \ \ \ \ \ \ \ \ \ +\frac{3}{f_{ij}g_{ij}}\bigg[\frac{B_{iij}^2}{g_{ii}}+\frac{B_{ijj}^2}{g_{jj}}\bigg]+\frac{B_{iji}}{f_{ii}g_{ij}}\bigg[\frac{B_{iji}}{4g_{ij}}-\frac{2B_{iii}}{g_{ii}}-\frac{A_{iii}}{f_{ii}}\bigg]+\frac{B_{jij}}{f_{jj}g_{ij}}\bigg[\frac{B_{jij}}{4g_{ij}}-\frac{2B_{jjj}}{g_{jj}}-\frac{A_{jjj}}{f_{jj}}\bigg]
\end{align}    
\begin{align}
    \nonumber\varphi_3[\nu_i,\nu_j,\nu_k]&:=\frac{3}{2}\bigg[\frac{A_{ijk}^2}{f_{ij}f_{ik}f_{jk}}+\frac{B_{ikj}^2}{f_{ij}g_{ik}g_{jk}}+\frac{B_{ijk}^2}{f_{ik}g_{jk}g_{ij}}+\frac{B_{jik}^2}{f_{jk}g_{ij}g_{ik}}\bigg]-\bigg[\frac{A_{jjk}}{f_{jj}f_{jk}}+\frac{B_{jkj}}{f_{jj}g_{jk}}\bigg]\bigg[\frac{B_{jij}}{g_{ij}}+\frac{A_{ijj}}{f_{ij}}\bigg] \\
    \nonumber& \ \ \ \ \ \ \ \ -\bigg[\frac{A_{ijj}}{f_{ii}f_{ij}}+\frac{B_{iji}}{f_{ii}g_{ij}}\bigg]\bigg[\frac{B_{iki}}{g_{ik}}+\frac{A_{iik}}{f_{ik}}\bigg]-\bigg[\frac{A_{ikk}}{f_{kk}f_{ik}}+\frac{B_{kik}}{f_{kk}g_{ik}}\bigg]\bigg[\frac{B_{kjk}}{g_{jk}}+\frac{A_{jkk}}{f_{jk}}\bigg],
\end{align} 
and 
\begin{align}
\nonumber&f_{ij}:=\frac{f(\nu_j)-f(\nu_i)}{\nu_i-\nu_j}, \ \ \ \ \ \ \ \ g_{ij}:=\frac{f(\nu_i)+f(\nu_j)}{\nu_i+\nu_j}, \\
\nonumber&A_{ijk}:=\bigg[\frac{f_{jk}-f_{ij}}{\nu_i-\nu_k}\bigg], \ \ \ \ \ \ \ \ B_{ijk}:=\bigg[\frac{g_{jk}-g_{ij}}{\nu_i-\nu_k}\bigg]
\end{align}
with $f(x)=\mathrm{arccoth}(2x)$. The formula remains well defined regardless of any degeneracies in the symplectic spectrum.  
\end{thm}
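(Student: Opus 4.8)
\section*{Proof proposal for Theorem 2}

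\textbf{Overall strategy.} The plan is to reduce the basis sum~\eqref{eq:scalar1} to a finite computation by exploiting Theorem~1, then carry out that computation in a Williamson-diagonal frame where $\Delta_\Vb$ and all its derivatives act diagonally on a natural basis of symmetric $2N\times 2N$ matrices. First I would invoke the symplectic invariance of the metric, covariant derivative, Riemann tensor and hence of $\mathrm{Scal}$: since $\mathrm{Scal}$ is a scalar built entirely from $g$, it is constant on symplectic orbits, so by Williamson's theorem~\eqref{eq:symp} it suffices to evaluate it at $\Vb=\Db_\Vb\oplus\Db_\Vb$ with $\Db_\Vb=\mathrm{diag}(\nu_1,\dots,\nu_N)$. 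This immediately establishes the claimed symplectic invariance and the dependence on the $\nu_i$ alone; what remains is to produce the explicit functional form.

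\textbf{Diagonalising $\Delta_\Vb$.} At a diagonal point the resolvent $\theta_\lambda(\Vb)=[2\Vb+i\lambda\Omb]^{-1}$ block-decomposes over the $N$ modes into $2\times2$ blocks of the form $[2\nu_i\,\id_2 + i\lambda\,\omega]^{-1}$, and more importantly over pairs of modes. The key is to choose an orthonormal (Hilbert--Schmidt) basis $\{\Xb_s\}$ of the space of real symmetric $2N\times2N$ matrices adapted to this structure: matrices supported on the $(i,i)$ mode-diagonal $4\times4$ block, and matrices supported on the $(i,j)$ off-diagonal blocks for $i<j$. On each such sector $\Delta_\Vb$ and the trilinear map $d\Delta_\Vb$ act within a small fixed-dimensional space, and one can compute the eigenvalues of $\Delta_\Vb$ explicitly by doing the $\int_{-1}^1 d\lambda$ integral of products of $2\times2$ resolvents. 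This is where the functions $f(\nu)=\mathrm{arccoth}(2\nu)$, the divided differences $f_{ij}$, $g_{ij}$, and the second divided differences $A_{ijk}$, $B_{ijk}$ appear: they are precisely the matrix elements of $\Hb_\Vb$, of $\Delta_\Vb$, and of $d\Delta_\Vb$ in this adapted basis. I would first verify the $\Delta_\Vb$ eigenvalues (giving $f_{ij}$, $g_{ij}$ as the ``$ds^2$''-level data and hence $\Delta_\Vb^{-1}$ explicitly, as promised in the text), then differentiate to get $d\Delta_\Vb$ in terms of $A$'s and $B$'s.

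\textbf{Assembling $R_\Vb$ and the double sum.} With $\Delta_\Vb^{-1}$ and $d\Delta_\Vb$ in hand, plug into~\eqref{eq:riemann} for $R_\Vb(\Xb_t,\Xb_s,\Delta_\Vb^{-1}(\Xb_t))$ and then into~\eqref{eq:scalar1}. Because each $d\Delta_\Vb(\Bb)$ maps a matrix supported on block-pair $(i,j)$ into a combination of matrices supported on block-pairs sharing an index with $(i,j)$, the double sum over $s,t$ collapses: a nonzero contribution requires the index sets of $\Xb_s$ and $\Xb_t$ (and the intermediate matrix) to overlap, so only configurations involving at most three distinct mode-indices survive. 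Sorting the surviving terms by whether one, two, or three distinct indices appear yields exactly the three sums $\sum_i\varphi_1$, $\sum_{i<j}\varphi_2$, $\sum_{i<j<k}\varphi_3$. One then reads off $\varphi_1,\varphi_2,\varphi_3$ by collecting, within each index class, the products of eigenvalues of $\Delta_\Vb^{-1}$ (the $1/f$, $1/g$ factors) with the structure constants of $d\Delta_\Vb$ (the $A$'s, $B$'s), matching the stated expressions.

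\textbf{Main obstacle and the degeneracy clause.} The genuine difficulty is bookkeeping: enumerating the adapted basis of symmetric matrices on each $4\times4$ block-pair (there are $10$ per diagonal block and $16$ per off-diagonal block, minus symmetry reductions), tracking how $d\Delta_\Vb$ permutes these sectors, and organizing the $O(N^3)$ surviving terms without sign or combinatorial-factor errors; the numerical prefactors $\tfrac14$, $\tfrac32$, $3$, $2$ in $\varphi_2,\varphi_3$ come out of exactly this accounting and are the most error-prone part. Finally, the ``well defined regardless of degeneracies'' remark follows because $f_{ij}$, $g_{ij}$, $A_{ijk}$, $B_{ijk}$ are divided differences of the smooth function $f$, hence extend analytically to coincident arguments (e.g. $f_{ii}=f'(\nu_i)$, $A_{iii}=\tfrac12 f''(\nu_i)$); I would note that the apparent poles at $\nu_i=\nu_j$ in~\eqref{eq:mainresult} are removable for this reason, and that one can also see it directly from the $\lambda$-integral representation, which never develops a singularity for $\nu_i>1/2$.
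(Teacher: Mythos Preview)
Your proposal is correct and follows essentially the same route as the paper: symplectic reduction to a Williamson-diagonal point, computation of $\Delta_\Vb$ and $d\Delta_\Vb$ in an adapted product basis where they act by the divided differences $f_{ij},g_{ij},A_{ijk},B_{ijk}$, collapse of the double sum in~\eqref{eq:scalar1} to configurations with at most three distinct mode indices, and then combinatorial bookkeeping to read off $\varphi_1,\varphi_2,\varphi_3$. The one tactical refinement in the paper that you leave implicit is the further unitary rotation $\tilde{\Ub}=\Ub\otimes\id_N$ that fully diagonalises each $2\times2$ resolvent block (rather than merely block-diagonalising), which is what makes the remaining traces genuinely $2\times2$ and the appearance of $f_{ij}$ versus $g_{ij}$ transparent; otherwise your outline matches the paper's argument step for step.
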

\begin{proof}
The first step is to recognise that the scalar curvature is invariant under symplectic transformations, ie. 
\begin{align}\label{eq:invar}
    \text{Scal}(\Vb)=\text{Scal}(\Sb\Vb\Sb^T),
\end{align}
where $\Sb\Omb\Sb^T=\Omb$ is any symplectic matrix that is independent of the eigenvalue set $\{\nu_1,\nu_2,...\nu_N\}$. This follows from symplectic invariance of the KMB inner product (Theorem 1), implying that the transformation $\Vb\mapsto \Sb \Vb\Sb^T$ is an isometry (ie. preserves the distance~\eqref{eq:length}). The scalar curvature is isometrically invariant \cite{gallot2004riemannian}, hence proving~\eqref{eq:invar}. Next, we can use the diagonalisation~\eqref{eq:symp} of the covariance matrix to show
\begin{align}\label{eq:diag}
    \theta_\lambda(\Vb):=[2\Vb+i\lambda\Omb]^{-1}=\Omb\Sb_\Vb\Omb\tilde{\Ub}\big((2\Db_\Vb-\lambda)^{-1}\oplus (2\Db_\Vb+\lambda)^{-1}\big)\tilde{\Ub}^\dagger\Omb \Sb^T \Omb,
\end{align}
where $\tilde{\Ub}=\Ub\otimes\id_N$ is a unitary matrix with 
\begin{align}
    \Ub=\frac{1}{\sqrt{2}}\left(\begin{array}{cc}
          1 & 1   \\ 
          i & -i  
    \end{array}\right)
\end{align}
and $\Sb$ is a symplectic matrix independent of the eigenvalues as desired (see Appendix B). Then using the invariance~\eqref{eq:invar} we can therefore neglect the symplectic transformation $\Omb \Sb \Omb(.)\Omb\Sb^T\Omb$ and fix
\begin{align}\label{eq:diag2}
    \tilde{\theta}_\lambda(\Vb)=(2\Db_\Vb-\lambda)^{-1}\oplus (2\Db_\Vb+\lambda)^{-1}. 
\end{align}
All subsequent quantities are computed under a replacement $\theta_\lambda(\Vb)\mapsto \tilde{\theta}_\lambda(\Vb)$, and the transformed metric~\eqref{eq:metric_final} becomes \begin{align}
    g_{\Vb}(\Ab,\Bb)=\int^1_{-1}d\lambda \ \Tr{\tilde{\Ub}^\dagger\Bb \tilde{\Ub} \ \tilde{\theta}(\Vb)\tilde{\Ub}^\dagger\Ab\tilde{\Ub} \ \tilde{\theta}(\Vb)},
\end{align}
 Since we have fixed the covariance matrix to be of a diagonal form, it follows from~\eqref{eq:diag2} that we can expand the map $\Delta_{\Vb}(.)$ in terms of the symplectic eigenvalues as follows. Consider the two matrices $\Xb\otimes \bar{\Xb}$ and $\Yb\otimes\bar{\Yb}$, where $\bar{\Xb}$ and $\bar{\Yb}$ are a pair of $N\times N$ matrices while
\begin{align}\label{eq:basis2}
    \Xb=\left(\begin{array}{cc}
          x_1 & x_2   \\ 
          x_3 & x_4  
    \end{array}\right), \ \ \ \  \ \ 
    \Yb=\left(\begin{array}{cc}
          y_1 & y_2   \\ 
          y_3 & y_4  
    \end{array}\right), 
\end{align}
Then we see
\begin{align}
   \nonumber\Delta_{\Vb}(\Xb\otimes \bar{\Xb})&=\int^1_{-1}d\lambda \ \tilde{\theta}(\Vb)(\Xb\otimes \bar{\Xb}) \ \tilde{\theta}(\Vb), \\
   &=\sum_{i,j=1}^{N} \bar{X}_{ij} \ m_{ij}(\Xb)\otimes \eb_{ij},
\end{align}
where $\bar{X}_{ij}$ are the matrix elements of $\bar{\Xb}$ and we have defined the linear map given by the following Hadamard product
\begin{align}
    m_{ij}(\Xb):=\Xb\circ\left(\begin{array}{cc}
          f_{ij} & g_{ij}   \\ 
          g_{ij} & f_{ij}  
    \end{array}\right)
\end{align}
This map is well defined for $\nu_i=\nu_j$ since
\begin{align}
    \lim_{\nu_i\to\nu_j}f_{ij}=\frac{2 }{4\nu_j^2-1};
\end{align}
We can directly determine the inverse
\begin{align}\label{eq:decomp1}
    \Delta_{\Vb}^{-1}(\Xb\otimes \bar{\Xb})=\sum_{i,j=1}^{N} \bar{X}_{ij} \ m^{-1}_{ij}(\Xb)\otimes \eb_{ij},
\end{align}
with
\begin{align}
    m^{-1}_{ij}(\Xb):=\Xb\circ\left(\begin{array}{cc}
          1/f_{ij} & 1/g_{ij}   \\ 
          1/g_{ij} & 1/f_{ij}  
    \end{array}\right)
\end{align}
Similarly we can expand the map $d\Delta_{\Vb}(\Bb)(\Ab)$ as
\begin{align}\label{eq:decomp2}
    \nonumber d\Delta_{\Vb}(\Yb\otimes \bar{\Yb})(\Xb\otimes \bar{\Xb})&=2\int^1_{-1}d\lambda \ \tilde{\theta}_\lambda(\Vb)\bigg((\Xb\otimes \bar{\Xb}) \ \tilde{\theta}_\lambda(\Vb) \ (\Yb\otimes \bar{\Yb})+(\Yb\otimes \bar{\Yb}) \ \tilde{\theta}_\lambda(\Vb) \ (\Xb\otimes \bar{\Xb})\bigg)\tilde{\theta}_\lambda(\Vb), \\
    &=\sum_{i,j,k=1}^{N} m_{ijk}(\Xb,\Yb)\otimes\big(\eb_{ii} \ \bar{\Xb} \ \eb_{jj} \ \bar{\Yb}\eb_{kk}\big)+m_{ijk}(\Yb,\Xb)\otimes\big(\eb_{ii} \ \bar{\Yb} \ \eb_{jj} \ \bar{\Xb} \ \eb_{kk}\big),
\end{align}
where we introduce a quadratic matrix function 
\begin{align}
    m_{ijk}(\Xb,\Yb):=\left(\begin{array}{cc}
          x_1y_1 A_{ijk}+x_2 y_3 B_{ijk} \ \ \   &  \ \ \  x_1 y_2B_{ikj}+x_2y_4 B_{jik}   \\[0.5cm]
          x_4y_3 B_{ikj}+x_3 y_1 B_{jik} \ \ \  &  \ \ \  x_4 y_4 A_{ijk}+x_3 y_2 B_{ijk}  
    \end{array}\right),
\end{align}
The map remains well defined in the presence of degeneracies due to the limits
\begin{align}
    \nonumber&A_{iii}=2 \nu_i f_{ii}^2,  \\
    \nonumber&A_{iji}=A_{iij}, \\
    \nonumber&B_{iii}=\frac{\nu_i f_{ii}+f(\nu_i)}{2\nu_i^2}, \\
    &B_{iji}=\bigg[\frac{g_{ij}+f_{ii}}{\nu_i+\nu_j}\bigg], 
\end{align}
To evaluate the curvature, we will need to choose a basis for the real, symmetric $2N\times 2N$ matrices that is orthonormal with respect to the Hilbert-Schmidt inner product.  It is convenient to choose the following set of $2N^2+N$ matrices:
\begin{align}\label{eq:basis}
    \mathcal{B}_{2N}:=\big\{\mathbf{a}_n\otimes\eb_{jj} \big\}_{\substack{j=1,2,...N \\ n=1,2,3}}\bigcup\big\{\mathbf{a}_n\otimes\bb_{jk}\big\}_{\substack{1\leq j< k\leq N \\ n=1,2}}\bigcup \big\{\mathbf{g}_{jk}\big\}_{1\leq j< k\leq N }\bigcup \big\{\tilde{\mathbf{g}}_{jk}\big\}_{1\leq j< k\leq N}.
\end{align}
with
\begin{align}
    \nonumber&\bb_{jk}=\frac{1}{\sqrt{2}}\big(\eb_{jk}+\eb_{kj}\big), \\
    \nonumber&\tilde{\bb}_{jk}=\frac{1}{\sqrt{2}}\big(\eb_{jk}-\eb_{kj}\big), \\
    \nonumber&\mathbf{g}_{jk}:=\frac{1}{\sqrt{2}}\big(\mathbf{a}_3\otimes\bb_{jk}+\mathbf{a}_4\otimes\tilde{\bb}_{jk}\big), \\
    &\tilde{\mathbf{g}}_{jk}:=\frac{1}{\sqrt{2}}\big(\mathbf{a}_3\otimes\bb_{jk}-\mathbf{a}_4\otimes\tilde{\bb}_{jk}\big),
\end{align}
where $\eb_{jk}$ are the standard matrix units and 
\begin{align}\label{eq:basis2}
    \mathbf{a}_1=\left(\begin{array}{cc}
          1 & 0   \\ 
          0 & 0  
    \end{array}\right), \ \ \ \  \ \ 
    \mathbf{a}_2=\left(\begin{array}{cc}
          0 & 0   \\ 
          0 & 1  
    \end{array}\right), \ \ \ \  \ \ \mathbf{a}_3=\frac{1}{\sqrt{2}}\left(\begin{array}{cc}
          0 & 1   \\ 
          1 & 0  
    \end{array}\right),
    \ \ \ \  \ \ \mathbf{a}_4=\frac{1}{\sqrt{2}}\left(\begin{array}{cc}
          0 & 1   \\ 
          -1 & 0  
    \end{array}\right),
\end{align}
Using this the scalar curvature~\eqref{eq:scalar1} can be written in the form
\begin{align}\label{eq:scalar2}
    \text{Scal}(\Vb)=\sum_{s\neq t} \mathcal{K}(\Xb_s ,\Xb_t ),
\end{align}
where we will set $\{\Xb_s\}=\mathcal{B}_{2N}$ as our basis and define
\begin{align}\label{eq:scalar_HS}
    \nonumber\mathcal{K}(\Ab,\Bb):=\frac{1}{4}&\big\<\Delta_{\Vb}^{-1}\cdot d\Delta_{\Vb}(\tilde{\Ub}^\dagger\Ab \tilde{\Ub} )\cdot\Delta_{\Vb}^{-1}\cdot d\Delta_{\Vb}(\tilde{\Ub}^\dagger\Bb\tilde{\Ub})\cdot \Delta_{\Vb}^{-1}(\tilde{\Ub}^\dagger\Ab\tilde{\Ub}),\tilde{\Ub}^\dagger\Bb\tilde{\Ub} \big> \\
    &-\frac{1}{4}\big\<\Delta_{\Vb}^{-1}\cdot d\Delta_{\Vb}(\tilde{\Ub}^\dagger\Bb\tilde{\Ub} )\cdot\Delta_{\Vb}^{-1}\cdot d\Delta_{\Vb}(\tilde{\Ub}^\dagger\Ab\tilde{\Ub})\cdot \Delta_{\Vb}^{-1}(\tilde{\Ub}^\dagger\Ab \tilde{\Ub}),\tilde{\Ub}^\dagger\Bb \tilde{\Ub} \big>,
\end{align}
To compute this we first expand the following trace functional
\begin{align}
    T(\mathbf{W},\Xb,\Yb,\mathbf{Z})=\Tr{\Delta_{\Vb}^{-1}(\mathbf{W}) d\Delta_{\Vb}(\mathbf{Z} )\cdot\Delta_{\Vb}^{-1}\cdot d\Delta_{\Vb}(\Yb)\cdot \Delta_{\Vb}^{-1}(\Xb)},
\end{align}
where all matrix inputs are of product sum form 
\begin{align}
    \mathbf{W}=\sum_{\nu}\sum_{ij} \bar{w}_{ij}^\nu\mathbf{w}^\nu \otimes \eb_{ij}, \ \ \Xb=\sum_{\nu}\sum_{ij} \bar{x}_{ij}^\nu\mathbf{x}^\nu \otimes \eb_{ij}, \ \ \Yb=\sum_{\nu}\sum_{ij} \bar{y}_{ij}^\nu\mathbf{y}^\nu \otimes \eb_{ij}, \ \ \mathbf{Z}=\sum_{\nu}\sum_{ij} \bar{z}_{ij}^\nu\mathbf{z}^\nu \otimes \eb_{ij}.
\end{align}
with $\{\mathbf{w}^\nu$, $\mathbf{x}^\nu$, $\mathbf{y}^\nu$, $\mathbf{z}^\nu\}$ all $2\times 2$ matrices. Using the expansions~\eqref{eq:decomp1} and~\eqref{eq:decomp2} and contracting indices eventually yields
\begin{align}\label{eq:Trace}
\nonumber T(\mathbf{W},\Xb,\Yb,\mathbf{Z})=&\sum_{\nu\mu\nu'\mu'}\sum_{ijkl}(\bar{w}^{\mu'}_{li} \ \bar{z}^{\nu'}_{kl} \ \bar{y}^\mu_{jk} \ \bar{x}^\nu_{ij}) \ \Tr{m^{-1}_{il}(\mathbf{w}^{\mu'})m_{ikl}[m_{ik}^{-1}(m_{ijk}[m^{-1}_{ij}(\mathbf{x}^\nu),\mathbf{y}^\mu]),\mathbf{z}^{\nu'}]} \\
\nonumber & \ \ \ \ \ \ +\sum_{\nu\mu\nu'\mu'}\sum_{ijkl}(\bar{w}^{\mu'}_{kl} \ \bar{z}^{\nu'}_{li} \ \bar{y}^\mu_{jk} \ \bar{x}^\nu_{ij}) \ \Tr{m^{-1}_{kl}(\mathbf{w}^{\mu'})m_{lik}[\mathbf{z}^{\nu'},m_{ik}^{-1}(m_{ijk}[m^{-1}_{ij}(\mathbf{x}^\nu),\mathbf{y}^\mu])]} \\
\nonumber& \ \ \ \ \ \ +\sum_{\nu\mu\nu'\mu'}\sum_{ijkl}(\bar{w}^{\mu'}_{lk} \ \bar{z}^{\nu'}_{jl} \ \bar{y}^\mu_{ki} \ \bar{x}^\nu_{ij}) \ \Tr{m^{-1}_{kl}(\mathbf{w}^{\mu'})m_{kjl}[m_{jk}^{-1}(m_{kij}[\mathbf{y}^\mu,m^{-1}_{ij}(\mathbf{x}^\nu)]),\mathbf{z}^{\nu'}]} \\
& \ \ \ \ \ \ +\sum_{\nu\mu\nu'\mu'}\sum_{ijkl}(\bar{w}^{\mu'}_{jl} \ \bar{z}^{\nu'}_{lk} \ \bar{y}^\mu_{ki} \ \bar{x}^\nu_{ij}) \ \Tr{m^{-1}_{lj}(\mathbf{w}^{\mu'})m_{lkj}[\mathbf{z}^{\nu'},m_{jk}^{-1}(m_{kij}[\mathbf{y}^\mu,m^{-1}_{ij}(\mathbf{x}^\nu)])]}.
\end{align}
The benefit to choosing our basis as a product-sum form should now be apparent since it is possible to carry out the the unitary rotation $\Ub^\dagger (.) \Ub$ on the terms inside the remaining trace explicitly. The traces are feasible to compute since they are all over sets of $2\times 2$ matrices. Then, we are left with a tedious combinatoric problem that involves substituting the matrix elements from each member of the basis $\mathcal{B}_{2N}$ into the function~\eqref{eq:Trace}, and subsequently combining them to get the curvature via~\eqref{eq:scalar2}. Fortunately, we can also exploit some symmetries and orthogonality properties of the basis elements which nullify terms with disjoint indices and reduce the number of contributions to the summation in~\eqref{eq:scalar2}. Further details of this remaining part of the calculation can be found in Appendix C. Summations over $\mathcal{B}_{2N}$ finally lead to the desired result~\eqref{eq:mainresult}. 
\end{proof}
We can see that, unlike the simple classical formula~\eqref{eq:classcurv} the quantum scalar curvature is not constant and instead is a complicated three-point correlated function of the symplectic spectrum of a given point $\Vb\in\Nn$. In any case our formula is convenient because we can compute the curvature using just a finite summation over the symplectic eigenvalues of the Gaussian state, thus giving us access to its geometric structure without the need to diagonalise and sum over the infinite number of eigenvalues of the density operator. It should also be noted that if there are any degeneracies in the symplectic spectrum we can compute the curvature by taking well-defined limits of the functions $\varphi_1[x]$, $\varphi_2[x,y]$ and $\varphi_3[x,y,z]$ in~\eqref{eq:mainresult}.

\section{Connection between curvature and entropy}

\

Petz \cite{petz1994geometry,petz2002covariance} has provided an intriguing information-theoretic interpretation of the scalar curvature of the KMB metric on the full quantum state space $\mathcal{S}(\mathcal{H})$ that we will now explore in the context of Gaussian states. Recall that the geodesic distance $\mathcal{L}(\Vb_0,\Vb)$ defined in~\eqref{eq:length} quantifies the distinguishability between two states $\Vb_0$ and $\Vb$. Thus we can define a geodesic ball 
\begin{align}
    \mathcal{B}_\epsilon(\Vb_0):=\{\Vb\in\Nn : \ \mathcal{L}(\Vb_0,\Vb)<\epsilon \}
\end{align}
containing all faithful standard Gaussian states that can be distinguished from some point $\Vb_0$ with an effort no greater than $\epsilon$ away from it. The size of this region in the asymptotic limit $\epsilon\to 0$ then gives a measure of the statistical uncertainty of the state $\Vb_0$. From differential geometry we may determine the first non-constant term of the volume of $\mathcal{B}_\epsilon(\Vb_0)$ from the scalar curvature at $\Vb_0$,
\begin{align}\label{eq:vol}
\text{Vol}\big(\mathcal{B}_\epsilon(\Vb_0)\big)=C_{2N}\bigg(\epsilon^{2N}-\frac{\text{Scal}(\Vb_0)}{12(N+1)}\epsilon^{2N+2}\bigg)+\mathcal{O}(\epsilon^{2N+3})
\end{align}
where $C_{2N}$ is the volume of the unit ball in $2N$-dimensional Euclidean space \cite{gallot2004riemannian}. Given our definition of the volume, we may interpret $\text{Scal}(\Vb_0)$ as an \textit{average} measure of statistical uncertainty in the state. In other words, greater curvature implies a smaller inference region around $\Vb_0$, making it harder to distinguish between surrounding states and hence larger statistical uncertainty. This would imply that the curvature should behave similarly to the von-Neumann entropy of the state, which for a Gaussian state is given by \cite{holevo1999capacity}
\begin{align}\label{eq:entropy}
    S_{vN}(\Vb):=-\tr{\hat{\rho}(\Vb) \ \text{ln} \ \hat{\rho}(\Vb)}=\sum_{i=1}^N\bigg(\nu_i+\frac{1}{2}\bigg)\text{log}_2\bigg(\nu_i+\frac{1}{2}\bigg)-\bigg(\nu_i-\frac{1}{2}\bigg)\text{log}_2\bigg(\nu_i-\frac{1}{2}\bigg),
\end{align}
Both the entropy and curvature~\eqref{eq:mainresult} are of course symplectic invariants. However, for our interpretation to hold true we should expect that as we increase (decrease) the curvature of a Gaussian state, this should imply an increase (decrease) in entropy, and vice versa. Thus we formalise the following conjecture,
\begin{conj}
For any pair $\Vb,\Vb'\in\Nn$ of faithful standard Gaussian states,  
\begin{align}\label{eq:conj}
    \mathrm{Scal}(\Vb')> \mathrm{Scal}(\Vb)\Leftrightarrow S_{vN}(\Vb')> S_{vN}(\Vb).
\end{align}
\end{conj}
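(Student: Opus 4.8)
The plan is to reduce the biconditional to a short, explicit list of identities. By Theorem~2 the scalar curvature is a symplectic invariant, and by~\eqref{eq:entropy} so is $S_{vN}$; hence both are symmetric functions of the spectrum, and I would work on the symmetric domain $\mathcal{D}:=\{(\nu_1,\dots,\nu_N)\,:\,\nu_i>1/2\}$. Write $S_{vN}=\sum_i h(\nu_i)$, where $h$ is the single-mode entropy density of~\eqref{eq:entropy}; $h$ is a strictly increasing, concave diffeomorphism of $(1/2,\infty)$ onto $(0,\infty)$ with $h'(\nu)=\tfrac{2}{\ln 2}\,\mathrm{arccoth}(2\nu)=\tfrac{2}{\ln 2}f(\nu)>0$ in the notation of Theorem~2. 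The change of variables $x_i:=h(\nu_i)$ identifies $\mathcal{D}$ with the open positive orthant, makes $S_{vN}=\sum_i x_i$ a linear functional, and renders the level sets $\{S_{vN}=c\}$ connected. An elementary argument then shows that~\eqref{eq:conj} is equivalent to the single statement that $\mathrm{Scal}$ is a strictly increasing function of $S_{vN}$ (two states of equal entropy must, by the biconditional, have equal curvature, so $\mathrm{Scal}=\Phi\circ S_{vN}$ for some $\Phi$, which the biconditional forces to be strictly increasing; the converse is immediate). Since $\mathrm{Scal}$ is real-analytic on $\mathcal{D}$ and $S_{vN}$ is a submersion there, this is in turn equivalent to the pair of conditions: (i) the alignment identity $h'(\nu_j)\,\partial_{\nu_i}\mathrm{Scal}=h'(\nu_i)\,\partial_{\nu_j}\mathrm{Scal}$ for all $i\neq j$ and all spectra in $\mathcal{D}$; and (ii) monotonicity of $\mathrm{Scal}$ along the diagonal $\nu_1=\cdots=\nu_N$, whose direction is pinned down by the limiting behaviour recorded in the text (that $\mathrm{Scal}$ tends to the classical constant of~\eqref{eq:classcurv} with $d=2N$ in the highly mixed limit and diverges when any mode becomes pure), forcing $\mathrm{Scal}$ to increase with $S_{vN}$.

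Condition~(ii) is, after all reductions, a one-variable sign analysis: substituting $\nu_m=\tfrac12\coth r_m$ (so that $f(\nu)=\mathrm{arccoth}(2\nu)=r$, $h'(\nu)=2r/\ln 2$, and $f_{ii}=2\sinh^2 r$, $g_{ii}=2r\tanh r$, $A_{iii}=4\sinh^3 r\cosh r$, etc.) turns $\mathrm{Scal}$ restricted to the diagonal into an explicit ratio of hyperbolic polynomials in one variable, whose derivative I would sign-analyse by a Taylor expansion at $r=0$ together with a monotonicity argument for the remainder; for $N=1$ this already proves the conjecture, since there $\mathrm{Scal}=\varphi_1[\nu]$ and~(i) is vacuous. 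Condition~(i) for general $N$ is then handled by the key structural fact that the summand of~\eqref{eq:mainresult} couples at most three eigenvalues: differentiating $\mathrm{Scal}$ in $\nu_i$ and sorting the terms of $h'(\nu_j)\partial_{\nu_i}\mathrm{Scal}-h'(\nu_i)\partial_{\nu_j}\mathrm{Scal}$ according to how many distinct eigenvalues they contain, the alignment identity for general $N$ becomes equivalent to just three requirements --- (a) the alignment for $N=2$, (b) the alignment for $N=3$, and (c) the four-point compatibility condition that $\partial_\nu\varphi_3[\nu,\nu_k,\nu_l]/h'(\nu)$ be independent of $\nu$ --- each of which, under the same $\coth$-substitution, is a closed identity among rational expressions in $\{\cosh r_m,\sinh r_m,r_m\}$ (checkable by hand for $N=2$, and by computer algebra for $N=3$ and for (c)). Once (a)--(c) hold, alignment holds for all $N$, and together with~(ii) and real-analyticity this yields the conjecture.

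The genuine difficulty, and the step I expect to resist, is the verification of (a)--(c): there is no evident structural reason why the elaborate three-point curvature functional~\eqref{eq:mainresult} should organise itself so that $\sum\varphi_1+\sum\varphi_2+\sum\varphi_3$ depends on the spectrum only through $\sum_i h(\nu_i)$, and the cancellations that (a)--(c) demand are not visible in the formulas for $\varphi_2$ and $\varphi_3$ as written. A conceptual proof would presumably have to re-express $\varphi_2$ and $\varphi_3$ through the entropy density $h$ and its derivatives, or exploit the fact that the KMB metric is the Hessian of $-S_{vN}$ in the exponential-family coordinates (the components of $\Hb_\Vb$), so as to make the alignment manifest. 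Lacking such an insight one is left with a heavy but finite computation, and it remains logically possible that one of (a), (b), (c) fails for the true $\varphi_2,\varphi_3$ --- in which case the conjecture would be false beyond the mode numbers for which it has so far been checked.
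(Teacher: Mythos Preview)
The statement you are trying to prove is labelled a \emph{Conjecture} in the paper, and the paper does not claim a proof for general $N$: it establishes only the single-mode case (Corollary~1) and otherwise offers numerical and asymptotic evidence. So beyond $N=1$ there is no ``paper's own proof'' to compare against.

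Your reduction is logically correct and is in fact the sharpest reading of the biconditional: as you observe, \eqref{eq:conj} forces $\mathrm{Scal}$ to be constant on every level set of $S_{vN}$, hence $\mathrm{Scal}=\Phi\circ S_{vN}$ for some strictly increasing $\Phi$. This is substantially stronger than the Schur-concavity formulation of Petz's original conjecture that the paper cites, and it is exactly what your alignment identities (a)--(c) encode. For $N=1$ your plan coincides in spirit with the paper's Corollary~1, though the paper actually carries out the sign analysis you only sketch (it uses $\ln x\le x-1$ rather than a Taylor-remainder bound).

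The gap is precisely the one you name in your last paragraph: you have not verified (a)--(c), and nothing in your argument, or in the paper, supplies a mechanism by which $\sum_i\varphi_1[\nu_i]+\sum_{i<j}\varphi_2[\nu_i,\nu_j]+\sum_{i<j<k}\varphi_3[\nu_i,\nu_j,\nu_k]$ should collapse to a function of $\sum_i h(\nu_i)$ alone. The paper's two-mode figure shows only that the contour plots of $S_{vN}$ and $\mathrm{Scal}$ are qualitatively similar; it does not claim that the level curves coincide, which is what your condition~(a) would require. If one evaluates $\partial_{\nu_1}\mathrm{Scal}/f(\nu_1)-\partial_{\nu_2}\mathrm{Scal}/f(\nu_2)$ from the explicit $\varphi_2$ of Theorem~2 at a generic point, there is no visible cancellation, and a single nonzero value would refute the biconditional as literally stated while leaving the weaker monotonicity-under-majorisation version untouched. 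Your reduction of the general-$N$ alignment to the finite list (a), (b), (c) also needs more care --- the cross-terms $\sum_{k\ne i,j}\partial_{\nu_i}\varphi_2[\nu_i,\nu_k]$ and $\sum_{k<l}\partial_{\nu_i}\varphi_3[\nu_i,\nu_k,\nu_l]$ do not obviously decouple in the way you describe --- but this is secondary: the decisive missing step is the verification of the identities themselves, and as you yourself say, it remains entirely possible that they fail. What you have produced is a clean reformulation that exposes how demanding the conjecture is, not a proof.
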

The monotonicity of the KMB scalar curvature under mixing was conjectured by Petz originally for finite-dimensional quantum systems \cite{petz1994geometry,petz2002covariance}. This can be formally stated in the context of majorisation theory as saying that the scalar curvature of the KMB metric on $\mathcal{S}(\mathcal{H})$ is a Schur concave function \cite{alberti1982stochasticity}. Due to the complicated nature of the curvature this conjecture remains to be proven beyond a two-dimensional Hilbert space, though it has been verified numerically in larger finite-dimensional systems \cite{michor2000curvature,dittmann2000curvature,gibilisco2005monotonicity}. In the case of Gaussian states we can investigate whether or not the conjecture holds in infinite-dimensional settings. Using Theorem 2 the statement can be proven for the single mode ($N=1$) case:
\begin{cor}
    Let $\Vb\in\mathcal{M}_1$ be a single mode covariance matrix. Then $\mathrm{Scal}(\Vb')> \mathrm{Scal}(\Vb)\Leftrightarrow S_{vN}(\Vb')> S_{vN}(\Vb)$.
\end{cor}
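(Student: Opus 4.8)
The plan is to reduce the claim to a one–variable monotonicity statement. Both $\mathrm{Scal}$ and $S_{vN}$ are symplectic invariants, so for a single mode they depend only on the unique symplectic eigenvalue $\nu>1/2$ of $\Vb$ (Williamson's theorem~\eqref{eq:symp}). For $N=1$ the double and triple sums in~\eqref{eq:mainresult} are empty, so $\mathrm{Scal}(\Vb)=\varphi_1[\nu]$; substituting the confluent values $f_{11}=2/(4\nu^2-1)$, $g_{11}=\mathrm{arccoth}(2\nu)/\nu$, $A_{111}=2\nu f_{11}^2$ and $B_{111}=(\nu f_{11}+\mathrm{arccoth}(2\nu))/(2\nu^2)$ listed in Theorem~2 turns this into an explicit elementary function $\sigma(\nu):=\mathrm{Scal}(\Vb)$, a rational expression in $\nu$ and $\mathrm{arccoth}(2\nu)$. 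Likewise $S_{vN}(\Vb)=s(\nu)$ is the $N=1$ case of~\eqref{eq:entropy}. The corollary is then equivalent to the statement that $\sigma$ and $s$ are strictly co-monotone on $(1/2,\infty)$.

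First I would dispose of the entropy: $s'(\nu)=\text{log}_2\big[(\nu+\tfrac{1}{2})/(\nu-\tfrac{1}{2})\big]>0$ for every $\nu>1/2$, so $s$ is a continuous strictly increasing bijection of $(1/2,\infty)$ onto $(0,\infty)$. It therefore remains only to show that $\sigma$ is strictly increasing on $(1/2,\infty)$ as well; the increasing (rather than decreasing) direction is pinned down by the boundary behaviour, namely $\sigma(\nu)\to-\infty$ as $\nu\to 1/2^+$ (the mode becoming pure) and $\sigma(\nu)$ tending to the $d=2$ value of the classical curvature~\eqref{eq:classcurv} as $\nu\to\infty$. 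Granting strict monotonicity, for any faithful single-mode covariance matrices $\Vb,\Vb'$ with symplectic eigenvalues $\nu,\nu'$ one obtains $\mathrm{Scal}(\Vb')>\mathrm{Scal}(\Vb)\Leftrightarrow\nu'>\nu\Leftrightarrow S_{vN}(\Vb')>S_{vN}(\Vb)$, which is exactly the asserted equivalence.

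To prove $\sigma'(\nu)>0$ I would change variables to $c:=\mathrm{arccoth}(2\nu)\in(0,\infty)$, i.e.\ $\nu=\tfrac{1}{2}\coth c$, so that $c\to0^+$ is the highly mixed limit and $c\to\infty$ the pure-state limit. In this variable $\sigma$ becomes a quotient $\sigma=P(c)/Q(c)$ where $P$ and $Q$ are explicit polynomials in $c$, $\sinh 2c$ and $\cosh 2c$, with $Q(c)>0$ on $(0,\infty)$. Since $d\nu/dc<0$, monotonicity of $\sigma$ in $\nu$ is equivalent to the sign-definiteness of the combination $P'Q-PQ'$, i.e.\ to a single transcendental inequality: that a specific function $F(c)$ built out of $c,\sinh 2c,\cosh 2c$ has constant sign for all $c>0$. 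After clearing the manifestly positive factors this is a polynomial inequality in $c,\sinh 2c,\cosh 2c$, which I would settle by expanding $F$ as a Maclaurin series in $c$ and checking that, once the leading coefficients (which vanish by the boundary behaviour) are stripped off, all remaining coefficients share a sign — equivalently, that $F$ together with a suitable finite number of its derivatives vanishes at $c=0$ with the next derivative sign-definite, together with the correct sign as $c\to\infty$.

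The reduction to $\sigma(\nu)$, the entropy monotonicity and the final logical step are all routine. The substantive content — and the step I expect to be the main obstacle — is the transcendental inequality for $F$: clearing denominators produces a long expression, and the real work is to organise the grouping of the hyperbolic terms so that the signs of the power-series coefficients become manifest rather than being hidden behind cancellations. That the inequality holds is not in doubt, since it is already forced by the divergence $\sigma\to-\infty$ at pure states and the approach to the classical constant in the highly mixed limit; the remaining task is only to turn this into a clean, self-contained certificate.
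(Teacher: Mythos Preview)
Your reduction is identical to the paper's: both quantities are symplectic invariants, so for $N=1$ everything is a function of the single symplectic eigenvalue $\nu\in(1/2,\infty)$; since $s'(\nu)>0$, the corollary reduces to showing $\sigma'(\nu)>0$. The divergence occurs at the key step, where the paper takes a much shorter route. It differentiates $\sigma(\nu)$ directly in the variable $\nu$ and then applies the elementary bound $\ln x\le x-1$ to replace the $\mathrm{arccoth}$ terms by rational ones; this produces a rational lower bound for $\sigma'(\nu)$ that is visibly positive on $(1/2,\infty)$. No change of variables, no series expansion, no organisation of hyperbolic identities is needed: one inequality does all the work.

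Your proposed route --- pass to $c=\mathrm{arccoth}(2\nu)$, write $\sigma=P/Q$ in $c,\sinh 2c,\cosh 2c$, and certify the sign of $P'Q-PQ'$ by showing its Maclaurin coefficients are eventually of one sign --- is a plausible programme but it is neither carried out nor guaranteed to succeed: there is no a priori reason the coefficients will not alternate, in which case the method stalls. More importantly, your closing remark that the inequality is ``already forced'' by the endpoint behaviour $\sigma\to-\infty$ at $\nu\to1/2^+$ and $\sigma\to\text{const}$ as $\nu\to\infty$ is incorrect: those limits constrain the \emph{net} variation of $\sigma$ but say nothing about monotonicity in between (a function can dive to $-\infty$ at one end, approach a finite limit at the other, and still oscillate). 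So as it stands the proposal has a genuine gap at precisely the step you flagged as the obstacle; the paper's $\ln x\le x-1$ trick is the missing idea that closes it in one line.
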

\begin{proof}
    Denote the single symplectic eigenvalue of $\Vb$ by $\nu$ and assume $1/2<\nu<\infty$. Then the scalar curvature~\eqref{eq:mainresult} is 
\begin{align}
    \text{Scal}(\Vb)=-\frac{\left(2\nu - f(\nu) + 4\nu^2 f(\nu)\right) \left(f(\nu) + 2\nu(-1 + 6\nu f(\nu)\right)}{8\nu^2(-1 + 4\nu^2) f^2(\nu)}, 
\end{align}
Now the entropy increases strictly monotonically on the interval $\nu\in(1/2,\infty)$ as the state becomes more mixed, so the theorem holds if and only if $\text{Scal}(\Vb)$ is strictly monotonic with respect to $\nu$. Calculating the derivative in $\nu$ directly gives
\begin{align}
\nonumber\frac{d}{d\nu}\text{Scal}(\Vb)&=\frac{8v^3 + f(v) \left(-4 \left(v^2 + 8v^4\right) + 2v \left(-1 + 16 \left(v^2 + v^4\right)\right) f(v) + \left(1 - 4v^2\right)^2 f^2(v)\right)}{4v^3 \left(1 - 4v^2\right)^2 f^3(v)} \\
&\geq\frac{-1 + 4v \left(-1 + 2v (1 + 5v)\right)}{4v^3 \left(-1 + 2v\right) (1 + 2v)^2}> 0
\end{align}
where we used $\text{ln}(x)\leq x-1$ and $1/2<\nu<\infty$.
\end{proof}
Reassuringly, for $N=1$ we also see that $\text{Scal}(\Vb)\to  -2$ as $\nu\to\infty$, meaning we recover the classical result for the standard Gaussian distributions~\eqref{eq:classcurv} when the quantum state is highly mixed. In the other regime, as we approach the vacuum state limit $\nu\to 1/2$ we see the curvature diverges, ie. $\text{Scal}(\Vb)\to-\infty$. Geometrically speaking this means the inference region~\eqref{eq:vol} becomes infinitely large, indicating no statistical uncertainty in the state. 

Proving the Petz conjecture for higher numbers of modes is non-trivial due to the complicated structure of~\eqref{eq:mainresult}, but we can explore this numerically. In Figure 1 we plot both the von Neumann entropy and scalar curvature diagrams for the two-mode ($N=2$) standard Gaussian state as a function of the two symplectic eigenvalues. It can be seen that the curvature shares the same concave behaviour as the entropy thus supporting the conjectured inequality~\eqref{eq:conj}. In all cases we observe a negative curvature, reaching the classical limit~\eqref{eq:classcurv} for the high entropy states. An important subtlety should be noted at the boundaries where the state becomes non-faithful (ie. at least one of the eigenvalues reaches $1/2$). In those limits we again observe a negative divergence in the curvature, though in the multimode case this indicates a breakdown in the Petz conjecture. This is because, while the inference region becomes infinitely large, the state may still have non-zero entropy. This emphasises that the KMB metric is only well-defined on the manifold of faithful Gaussian states. 

\begin{figure}
  \centering
  \subfloat{\includegraphics[width=0.425\textwidth]{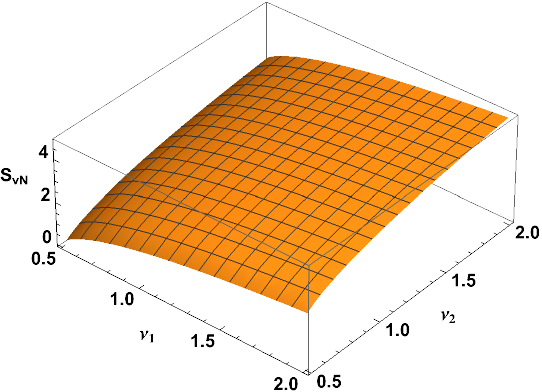}}
  \subfloat{\includegraphics[width=0.425\textwidth]{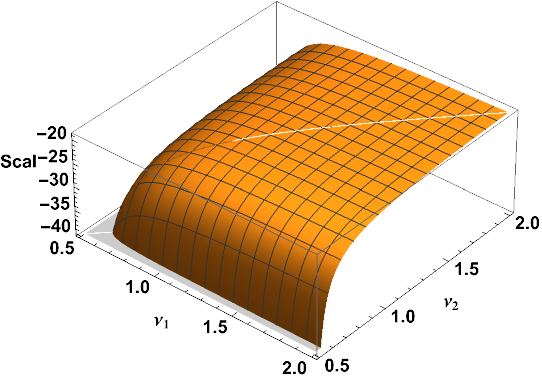}}
  \caption{\textbf{Left}: plot of the von Neumann entropy~\eqref{eq:entropy} for a two-mode standard Gaussian state as a function of the symplectic eigenvalues $\nu_1,\nu_2$. \textbf{Right}: plot of the scalar curvature~\eqref{eq:mainresult} of the same state. Note that regions where $\nu_i\to 1/2$ negatively diverge, while $\text{Scal}(\Vb)\simeq -18$ as $\nu_1\to\infty$ and $\nu_2\to\infty$. \label{sup:grid}}
\end{figure}

For further evidence one should look at much larger systems, and so we next consider a periodic harmonic chain of $N$ modes. Suppose we have a Hamiltonian describing a system of the form \cite{di2020complexity}
\begin{align}
    \hat{H}=\sum^N_{i=1}\frac{1}{2m}\hat{p}_i^2+\frac{m\omega^2}{2}\hat{q}_i^2+\frac{\kappa}{2}(\hat{q}_{i+1}-\hat{q}_{i})^2, \ \ \ \ \  \hat{q}_{N+1}=\hat{q}_{1},
\end{align}
where all modes have equal mass $m$ and frequency $\omega$, while $\kappa$ represents a coupling constant. We assume that both $\kappa$ and $m$ are non-vanishing. A thermal quantum state at inverse temperature $\beta=1/k_B T$ with this Hamiltonian is then a standard Gaussian state of the form 
\begin{align}\label{eq:thermal}
    \hat{\rho}_{eq}(T)=\frac{e^{-\hat{H}/kT}}{Z}=\frac{1}{Z} \ \text{exp}\bigg[-\frac{1}{2}\underline{\hat{x}}^T \Hb_\Vb \underline{\hat{x}}\bigg]
\end{align}
where
\begin{align}\label{eq:ham}
    \Hb_\Vb=\frac{1}{\tilde{T}}\big(\big[(\tilde{\omega}^2+2)\mathbb{I}_N-\mathbf{M}\big]\oplus \mathbb{I}_N\big)
\end{align}
Here we have introduced dimensionless variables $\tilde{\omega}=m\omega^2/\kappa$ and $\tilde{T}=kT/\sqrt{\kappa/m}$. The symmetric matrix $\mathbf{M}$ has non-vanishing elements $M_{i,i+1}=M_{i+1,i}=1$ for $1\leq i\leq N-1$ and $M_{1,N}=M_{N,1}=1$. The covariance matrix can be found by inverting~\eqref{eq:cayley} and then diagonalised using the Williamson decomposition, 
\begin{align}
    \Vb=\frac{1}{2}\text{coth}\big(i\Omb \Hb_{\Vb}/2\big) i\Omb=\Sb_\Vb \big(\Db_\Vb\oplus\Db_\Vb\big)\Sb_\Vb^T
\end{align}
Here $\Db_\Vb=\text{diag}(\nu_1,...\nu_N)$ and the symplectic eigenvalues are
\begin{align}\label{eq:symps}
    \nu_j=\frac{1}{2}\text{coth}\bigg(\frac{\Omega_j}{2\tilde{T}}\bigg), \ \ \ \ j=1,2,...,N
\end{align}
and the effective modes of the system are
\begin{align}
    \Omega_j:=\sqrt{\tilde{\omega}+4\text{sin}^2(\frac{j\pi}{N})}
\end{align}
The symplectic transformation $\Sb_\Vb$ is not needed for our purposes but its form can be found in \cite{di2020complexity}. The curvature can be computed by substituting~\eqref{eq:symps} into our main formula~\eqref{eq:mainresult}, taking care to apply the limits at the degenerate parts of the spectrum.

In the context of thermal states such as~\eqref{eq:thermal}, an alternative way to state the Petz conjecture is to say that the curvature should be an increasing function of temperature $\tilde{T}$ \cite{petz2002covariance}. We can prove this directly for the semi-classical regime when temperature is high $\tilde{T}\gg 1$. First define the ratio between the true curvature and the magnitude of the classical value~\eqref{eq:classcurv}, denoted 
\begin{align}\label{eq:Rcurv}
\mathcal{R}:=\frac{\text{Scal}(\Vb)}{N(2N-1)(N+1)}
\end{align}
A tedious Taylor expansion of~\eqref{eq:mainresult} with respect to $1/\tilde{T}$ yields the leading order quantum correction to the classical curvature, 
\begin{align}
    \mathcal{R}\simeq -1-\frac{\bigg(\frac{1}{2}\sum_{i=1}^N \Omega_i^2+\frac{5}{4}\sum^N_{i<j}\big(\Omega_i^2+\Omega_j^2\big)+\frac{2}{3}\sum^N_{i<j<k}\big(\Omega_i^2+\Omega_j^2+\Omega_k^2\big)\bigg)}{\tilde{T}^2 N(2N-1)(N+1)}, \ \ \ \ \ \ \ \tilde{T}\gg 1.
\end{align}
Clearly the second term in brackets is non-negative regardless of the number of modes, and hence the curvature grows quadratically with increasing temperature in this regime. To check lower temperatures, in Figure 2 (right) we plot $\mathcal{R}$ as a function of $\tilde{T}$ for various values of $\tilde{\omega}$, demonstrating clear monotonicity as well as a negative divergence towards the zero temperature limit. One can see that increasing the frequency causes the dramatic decrease in curvature to begin at higher temperatures, since the system is brought closer to the vacuum state. In Figure 2 (left) $\mathcal{R}$ is plotted as a function of chain length $N$ for a lower temperature regime $(\tilde{\omega}=1, \tilde{T}=0.5)$. It is seen that it approaches a thermodynamic limit with increasing $N$, confirming that the monotoncity is robust for arbitrarily large systems. This provides strong evidence that the conjecture should hold true regardless of the system size.  

\begin{figure}
  \centering
  \subfloat{\includegraphics[width=0.425\textwidth]{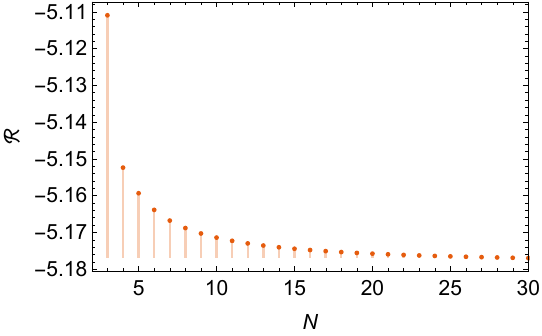}}
  \subfloat{\includegraphics[width=0.425\textwidth]{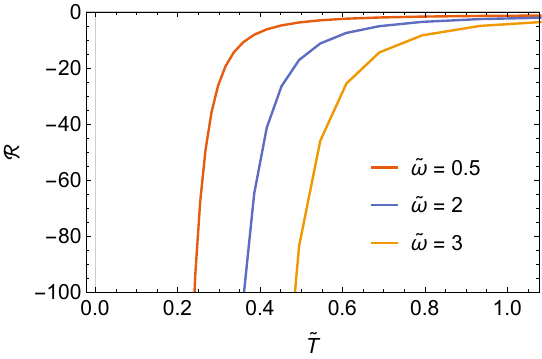}}
  \caption{\textbf{Left}: Plot of relative curvature  $\mathcal{R}$, defined by~\eqref{eq:Rcurv}, as a function of the chain length $N$ for $\tilde{\omega}=1$, $\tilde{T}=0.5$. \textbf{Right}: plot of $\mathcal{R}$ as a function of temperature $\tilde{T}$ for a chain of length $N=50$. \label{sup:grid2}}
\end{figure}

\section{Conclusion}

\

In this paper we have explored the information geometry of the manifold of faithful, standard Gaussian states with respect to the KMB metric. The central geometric objects such as the metric tensor~\eqref{eq:ds}, the geodesic curves~\eqref{eq:length}, and the Riemann~\eqref{eq:riemann} and Ricci~\eqref{eq:ric} curvature tensors were derived. The symplectic invariance of the metric was proven (Theorem 1) and used to derive an analytic expression for the Ricci scalar curvature in terms of the symplectic spectrum of the state covariance matrix (Theorem 2). We then formulated a continuous-variable extension of the Petz conjecture, proposing that the scalar curvature can be used as an entropic measure of statistical uncertainty for a faithful Gaussian state. The conjecture was proven for all single mode states (Corollary 1) and numerically verified for the two-mode state. An $N$-mode periodic thermal chain was also analysed, with the Petz conjecture proven in the semi-classical regime and numerically verified at low temperatures. Our results provide insight into the richer structure of the Gaussian quantum state space in comparison to the space of Gaussian classical distributions, as we observe a breakdown in scale invariance and more negative curvature due to quantum fluctuations. The most dramatic effect was seen at the boundary of non-faithful Gaussian states where we found that the curvature negatively diverges. Technical extensions that could be pursued would be to generalise our results to the full manifold of Gaussian states including those with non-zero displacement. Secondly, it would of course be desirable to have a full proof of the conjecture~\eqref{eq:conj} for all arbitrary $N$-mode states. These results open up a number of promising directions for future applications. Firstly, the quantities we have derived provide a means for investigating the thermodynamic geometry of bosonic systems, including the role of curvature in critical phenomena \cite{ruppeiner1998riemannian,brody2008information,banchi2014quantum} and thermodynamically optimal processes \cite{abiuso2020geometric,mehboudi2022thermodynamic}. The entropic nature of the scalar curvature suggests that it could serve as a geometric tool in continuous-variable majorisation theory \cite{van2023continuous}. As a continuous-variable theory, our formalism could also be useful for investigating the information geometry and Hessian manifolds of quantum fields \cite{erdmenger2020information,kluth2023renormalization}. Another direction would be to understand connections between the KMB metric and measures of mixed Gaussian state non-classicality \cite{yadin2018operational}, quantum correlations \cite{adesso2010quantum} and circuit complexity \cite{di2020complexity}. Finally, the curvature invariants that we have derived could be used to explore the underlying topology of the Gaussian state space, in analogy to studies of ground state manifolds such as \cite{kolodrubetz2013classifying}.

\

\noindent \emph{Acknowledgements. } H. M. acknowledges funding from a Royal Society Research Fellowship (URF/R1/231394). 

\bibliographystyle{apsrev4-1}
\bibliography{thebib.bib}

\begin{thebibliography}{74}%
\makeatletter
\providecommand \@ifxundefined [1]{%
 \@ifx{#1\undefined}
}%
\providecommand \@ifnum [1]{%
 \ifnum #1\expandafter \@firstoftwo
 \else \expandafter \@secondoftwo
 \fi
}%
\providecommand \@ifx [1]{%
 \ifx #1\expandafter \@firstoftwo
 \else \expandafter \@secondoftwo
 \fi
}%
\providecommand \natexlab [1]{#1}%
\providecommand \enquote  [1]{``#1''}%
\providecommand \bibnamefont  [1]{#1}%
\providecommand \bibfnamefont [1]{#1}%
\providecommand \citenamefont [1]{#1}%
\providecommand \href@noop [0]{\@secondoftwo}%
\providecommand \href [0]{\begingroup \@sanitize@url \@href}%
\providecommand \@href[1]{\@@startlink{#1}\@@href}%
\providecommand \@@href[1]{\endgroup#1\@@endlink}%
\providecommand \@sanitize@url [0]{\catcode `\\12\catcode `\$12\catcode
  `\&12\catcode `\#12\catcode `\^12\catcode `\_12\catcode `\%12\relax}%
\providecommand \@@startlink[1]{}%
\providecommand \@@endlink[0]{}%
\providecommand \url  [0]{\begingroup\@sanitize@url \@url }%
\providecommand \@url [1]{\endgroup\@href {#1}{\urlprefix }}%
\providecommand \urlprefix  [0]{URL }%
\providecommand \Eprint [0]{\href }%
\providecommand \doibase [0]{http://dx.doi.org/}%
\providecommand \selectlanguage [0]{\@gobble}%
\providecommand \bibinfo  [0]{\@secondoftwo}%
\providecommand \bibfield  [0]{\@secondoftwo}%
\providecommand \translation [1]{[#1]}%
\providecommand \BibitemOpen [0]{}%
\providecommand \bibitemStop [0]{}%
\providecommand \bibitemNoStop [0]{.\EOS\space}%
\providecommand \EOS [0]{\spacefactor3000\relax}%
\providecommand \BibitemShut  [1]{\csname bibitem#1\endcsname}%
\let\auto@bib@innerbib\@empty
\bibitem [{\citenamefont {Amari}(2016)}]{amari2016information}%
  \BibitemOpen
  \bibfield  {author} {\bibinfo {author} {\bibfnamefont {S.}~\bibnamefont
  {Amari}},\ }\href@noop {} {\emph {\bibinfo {title} {Information geometry and
  its applications}}},\ Vol.\ \bibinfo {volume} {194}\ (\bibinfo  {publisher}
  {Springer},\ \bibinfo {year} {2016})\BibitemShut {NoStop}%
\bibitem [{\citenamefont {Nielsen}(2020)}]{nielsen2020elementary}%
  \BibitemOpen
  \bibfield  {author} {\bibinfo {author} {\bibfnamefont {F.}~\bibnamefont
  {Nielsen}},\ }\href {\doibase 10.3390/e22101100} {\bibfield  {journal}
  {\bibinfo  {journal} {Entropy}\ }\textbf {\bibinfo {volume} {22}},\ \bibinfo
  {pages} {1100} (\bibinfo {year} {2020})}\BibitemShut {NoStop}%
\bibitem [{\citenamefont {Ruppeiner}(1998)}]{ruppeiner1998riemannian}%
  \BibitemOpen
  \bibfield  {author} {\bibinfo {author} {\bibfnamefont {G.}~\bibnamefont
  {Ruppeiner}},\ }\href {\doibase 10.1103/PhysRevE.57.5135} {\bibfield
  {journal} {\bibinfo  {journal} {Physical Review E}\ }\textbf {\bibinfo
  {volume} {57}},\ \bibinfo {pages} {5135} (\bibinfo {year}
  {1998})}\BibitemShut {NoStop}%
\bibitem [{\citenamefont {Janke}\ \emph {et~al.}(2004)\citenamefont {Janke},
  \citenamefont {Johnston},\ and\ \citenamefont
  {Kenna}}]{janke2004information}%
  \BibitemOpen
  \bibfield  {author} {\bibinfo {author} {\bibfnamefont {W.}~\bibnamefont
  {Janke}}, \bibinfo {author} {\bibfnamefont {D.}~\bibnamefont {Johnston}}, \
  and\ \bibinfo {author} {\bibfnamefont {R.}~\bibnamefont {Kenna}},\ }\href
  {\doibase 10.1016/j.physa.2004.01.023} {\bibfield  {journal} {\bibinfo
  {journal} {Physica A: Statistical Mechanics and its Applications}\ }\textbf
  {\bibinfo {volume} {336}},\ \bibinfo {pages} {181} (\bibinfo {year}
  {2004})}\BibitemShut {NoStop}%
\bibitem [{\citenamefont {Felice}\ \emph {et~al.}(2018)\citenamefont {Felice},
  \citenamefont {Cafaro},\ and\ \citenamefont
  {Mancini}}]{felice2018information}%
  \BibitemOpen
  \bibfield  {author} {\bibinfo {author} {\bibfnamefont {D.}~\bibnamefont
  {Felice}}, \bibinfo {author} {\bibfnamefont {C.}~\bibnamefont {Cafaro}}, \
  and\ \bibinfo {author} {\bibfnamefont {S.}~\bibnamefont {Mancini}},\ }\href
  {\doibase 10.1063/1.5018926} {\bibfield  {journal} {\bibinfo  {journal}
  {Chaos: An Interdisciplinary Journal of Nonlinear Science}\ }\textbf
  {\bibinfo {volume} {28}} (\bibinfo {year} {2018}),\
  10.1063/1.5018926}\BibitemShut {NoStop}%
\bibitem [{\citenamefont {Gambini}\ \emph {et~al.}(2014)\citenamefont
  {Gambini}, \citenamefont {Cassetti}, \citenamefont {Lucini},\ and\
  \citenamefont {Frery}}]{gambini2014parameter}%
  \BibitemOpen
  \bibfield  {author} {\bibinfo {author} {\bibfnamefont {J.}~\bibnamefont
  {Gambini}}, \bibinfo {author} {\bibfnamefont {J.}~\bibnamefont {Cassetti}},
  \bibinfo {author} {\bibfnamefont {M.~M.}\ \bibnamefont {Lucini}}, \ and\
  \bibinfo {author} {\bibfnamefont {A.~C.}\ \bibnamefont {Frery}},\ }\href
  {\doibase 10.1109/JSTARS.2014.2346017} {\bibfield  {journal} {\bibinfo
  {journal} {IEEE Journal of Selected Topics in Applied Earth Observations and
  Remote Sensing}\ }\textbf {\bibinfo {volume} {8}},\ \bibinfo {pages} {365}
  (\bibinfo {year} {2014})}\BibitemShut {NoStop}%
\bibitem [{\citenamefont {Caticha}\ and\ \citenamefont
  {Preuss}(2004)}]{caticha2004maximum}%
  \BibitemOpen
  \bibfield  {author} {\bibinfo {author} {\bibfnamefont {A.}~\bibnamefont
  {Caticha}}\ and\ \bibinfo {author} {\bibfnamefont {R.}~\bibnamefont
  {Preuss}},\ }\href {\doibase 10.1103/PhysRevE.70.046127} {\bibfield
  {journal} {\bibinfo  {journal} {Physical Review E}\ }\textbf {\bibinfo
  {volume} {70}},\ \bibinfo {pages} {046127} (\bibinfo {year}
  {2004})}\BibitemShut {NoStop}%
\bibitem [{\citenamefont {Amari}(2010)}]{amari2010information}%
  \BibitemOpen
  \bibfield  {author} {\bibinfo {author} {\bibfnamefont {S.-i.}\ \bibnamefont
  {Amari}},\ }\href {\doibase 10.1007/s11460-010-0101-3} {\bibfield  {journal}
  {\bibinfo  {journal} {Frontiers of Electrical and Electronic Engineering in
  China}\ }\textbf {\bibinfo {volume} {5}},\ \bibinfo {pages} {241} (\bibinfo
  {year} {2010})}\BibitemShut {NoStop}%
\bibitem [{\citenamefont {Ruppeiner}(1995)}]{ruppeiner1995riemannian}%
  \BibitemOpen
  \bibfield  {author} {\bibinfo {author} {\bibfnamefont {G.}~\bibnamefont
  {Ruppeiner}},\ }\href {\doibase 10.1103/RevModPhys.67.605} {\bibfield
  {journal} {\bibinfo  {journal} {Reviews of Modern Physics}\ }\textbf
  {\bibinfo {volume} {67}},\ \bibinfo {pages} {605} (\bibinfo {year}
  {1995})}\BibitemShut {NoStop}%
\bibitem [{\citenamefont {Salamon}\ and\ \citenamefont
  {Berry}(1983)}]{salamon1983thermodynamic}%
  \BibitemOpen
  \bibfield  {author} {\bibinfo {author} {\bibfnamefont {P.}~\bibnamefont
  {Salamon}}\ and\ \bibinfo {author} {\bibfnamefont {R.~S.}\ \bibnamefont
  {Berry}},\ }\href {\doibase 10.1103/PhysRevLett.51.1127} {\bibfield
  {journal} {\bibinfo  {journal} {Physical Review Letters}\ }\textbf {\bibinfo
  {volume} {51}},\ \bibinfo {pages} {1127} (\bibinfo {year}
  {1983})}\BibitemShut {NoStop}%
\bibitem [{\citenamefont {Mirza}\ and\ \citenamefont
  {Zamaninasab}(2007)}]{mirza2007ruppeiner}%
  \BibitemOpen
  \bibfield  {author} {\bibinfo {author} {\bibfnamefont {B.}~\bibnamefont
  {Mirza}}\ and\ \bibinfo {author} {\bibfnamefont {M.}~\bibnamefont
  {Zamaninasab}},\ }\href {\doibase 10.1088/1126-6708/2007/06/059} {\bibfield
  {journal} {\bibinfo  {journal} {Journal of High Energy Physics}\ }\textbf
  {\bibinfo {volume} {2007}},\ \bibinfo {pages} {059} (\bibinfo {year}
  {2007})}\BibitemShut {NoStop}%
\bibitem [{\citenamefont {Ruppeiner}(2008)}]{ruppeiner2008thermodynamic}%
  \BibitemOpen
  \bibfield  {author} {\bibinfo {author} {\bibfnamefont {G.}~\bibnamefont
  {Ruppeiner}},\ }\href {\doibase 10.1103/PhysRevD.78.024016} {\bibfield
  {journal} {\bibinfo  {journal} {Physical Review D}\ }\textbf {\bibinfo
  {volume} {78}},\ \bibinfo {pages} {024016} (\bibinfo {year}
  {2008})}\BibitemShut {NoStop}%
\bibitem [{\citenamefont {Brody}\ and\ \citenamefont
  {Hughston}(2001)}]{brody2001geometric}%
  \BibitemOpen
  \bibfield  {author} {\bibinfo {author} {\bibfnamefont {D.~C.}\ \bibnamefont
  {Brody}}\ and\ \bibinfo {author} {\bibfnamefont {L.~P.}\ \bibnamefont
  {Hughston}},\ }\href {\doibase 10.1016/S0393-0440(00)00052-8} {\bibfield
  {journal} {\bibinfo  {journal} {Journal of geometry and physics}\ }\textbf
  {\bibinfo {volume} {38}},\ \bibinfo {pages} {19} (\bibinfo {year}
  {2001})}\BibitemShut {NoStop}%
\bibitem [{\citenamefont {Braunstein}\ and\ \citenamefont
  {Caves}(1994)}]{braunstein1994statistical}%
  \BibitemOpen
  \bibfield  {author} {\bibinfo {author} {\bibfnamefont {S.~L.}\ \bibnamefont
  {Braunstein}}\ and\ \bibinfo {author} {\bibfnamefont {C.~M.}\ \bibnamefont
  {Caves}},\ }\href {\doibase 10.1103/PhysRevLett.72.3439} {\bibfield
  {journal} {\bibinfo  {journal} {Physical Review Letters}\ }\textbf {\bibinfo
  {volume} {72}},\ \bibinfo {pages} {3439} (\bibinfo {year}
  {1994})}\BibitemShut {NoStop}%
\bibitem [{\citenamefont {Sidhu}\ and\ \citenamefont
  {Kok}(2020)}]{sidhu2020geometric}%
  \BibitemOpen
  \bibfield  {author} {\bibinfo {author} {\bibfnamefont {J.~S.}\ \bibnamefont
  {Sidhu}}\ and\ \bibinfo {author} {\bibfnamefont {P.}~\bibnamefont {Kok}},\
  }\href {\doibase 10.1116/1.5119961} {\bibfield  {journal} {\bibinfo
  {journal} {AVS Quantum Science}\ }\textbf {\bibinfo {volume} {2}} (\bibinfo
  {year} {2020}),\ 10.1116/1.5119961}\BibitemShut {NoStop}%
\bibitem [{\citenamefont {Scandi}\ and\ \citenamefont
  {Perarnau-Llobet}(2019)}]{scandi2019thermodynamic}%
  \BibitemOpen
  \bibfield  {author} {\bibinfo {author} {\bibfnamefont {M.}~\bibnamefont
  {Scandi}}\ and\ \bibinfo {author} {\bibfnamefont {M.}~\bibnamefont
  {Perarnau-Llobet}},\ }\href {\doibase 10.22331/q-2019-10-24-197} {\bibfield
  {journal} {\bibinfo  {journal} {Quantum}\ }\textbf {\bibinfo {volume} {3}},\
  \bibinfo {pages} {197} (\bibinfo {year} {2019})}\BibitemShut {NoStop}%
\bibitem [{\citenamefont {Brandner}\ and\ \citenamefont
  {Saito}(2020)}]{brandner2020thermodynamic}%
  \BibitemOpen
  \bibfield  {author} {\bibinfo {author} {\bibfnamefont {K.}~\bibnamefont
  {Brandner}}\ and\ \bibinfo {author} {\bibfnamefont {K.}~\bibnamefont
  {Saito}},\ }\href {\doibase 10.1103/PhysRevLett.124.040602} {\bibfield
  {journal} {\bibinfo  {journal} {Physical review letters}\ }\textbf {\bibinfo
  {volume} {124}},\ \bibinfo {pages} {040602} (\bibinfo {year}
  {2020})}\BibitemShut {NoStop}%
\bibitem [{\citenamefont {Banchi}\ \emph {et~al.}(2014)\citenamefont {Banchi},
  \citenamefont {Giorda},\ and\ \citenamefont {Zanardi}}]{banchi2014quantum}%
  \BibitemOpen
  \bibfield  {author} {\bibinfo {author} {\bibfnamefont {L.}~\bibnamefont
  {Banchi}}, \bibinfo {author} {\bibfnamefont {P.}~\bibnamefont {Giorda}}, \
  and\ \bibinfo {author} {\bibfnamefont {P.}~\bibnamefont {Zanardi}},\ }\href
  {\doibase 10.1103/PhysRevE.89.022102} {\bibfield  {journal} {\bibinfo
  {journal} {Physical Review E}\ }\textbf {\bibinfo {volume} {89}},\ \bibinfo
  {pages} {022102} (\bibinfo {year} {2014})}\BibitemShut {NoStop}%
\bibitem [{\citenamefont {Hall}(1998)}]{hall1998random}%
  \BibitemOpen
  \bibfield  {author} {\bibinfo {author} {\bibfnamefont {M.~J.}\ \bibnamefont
  {Hall}},\ }\href {\doibase 10.1016/S0375-9601(98)00190-X} {\bibfield
  {journal} {\bibinfo  {journal} {Physics Letters A}\ }\textbf {\bibinfo
  {volume} {242}},\ \bibinfo {pages} {123} (\bibinfo {year}
  {1998})}\BibitemShut {NoStop}%
\bibitem [{\citenamefont {{\.Z}yczkowski}\ \emph {et~al.}(2011)\citenamefont
  {{\.Z}yczkowski}, \citenamefont {Penson}, \citenamefont {Nechita},\ and\
  \citenamefont {Collins}}]{zyczkowski2011generating}%
  \BibitemOpen
  \bibfield  {author} {\bibinfo {author} {\bibfnamefont {K.}~\bibnamefont
  {{\.Z}yczkowski}}, \bibinfo {author} {\bibfnamefont {K.~A.}\ \bibnamefont
  {Penson}}, \bibinfo {author} {\bibfnamefont {I.}~\bibnamefont {Nechita}}, \
  and\ \bibinfo {author} {\bibfnamefont {B.}~\bibnamefont {Collins}},\ }\href
  {\doibase 10.1063/1.3595693} {\bibfield  {journal} {\bibinfo  {journal}
  {Journal of Mathematical Physics}\ }\textbf {\bibinfo {volume} {52}}
  (\bibinfo {year} {2011}),\ 10.1063/1.3595693}\BibitemShut {NoStop}%
\bibitem [{\citenamefont {Pires}\ \emph {et~al.}(2016)\citenamefont {Pires},
  \citenamefont {Cianciaruso}, \citenamefont {C{\'e}leri}, \citenamefont
  {Adesso},\ and\ \citenamefont {Soares-Pinto}}]{pires2016generalized}%
  \BibitemOpen
  \bibfield  {author} {\bibinfo {author} {\bibfnamefont {D.~P.}\ \bibnamefont
  {Pires}}, \bibinfo {author} {\bibfnamefont {M.}~\bibnamefont {Cianciaruso}},
  \bibinfo {author} {\bibfnamefont {L.~C.}\ \bibnamefont {C{\'e}leri}},
  \bibinfo {author} {\bibfnamefont {G.}~\bibnamefont {Adesso}}, \ and\ \bibinfo
  {author} {\bibfnamefont {D.~O.}\ \bibnamefont {Soares-Pinto}},\ }\href
  {\doibase 10.1103/PhysRevX.6.021031} {\bibfield  {journal} {\bibinfo
  {journal} {Physical Review X}\ }\textbf {\bibinfo {volume} {6}},\ \bibinfo
  {pages} {021031} (\bibinfo {year} {2016})}\BibitemShut {NoStop}%
\bibitem [{\citenamefont {Deffner}\ and\ \citenamefont
  {Campbell}(2017)}]{deffner2017quantum}%
  \BibitemOpen
  \bibfield  {author} {\bibinfo {author} {\bibfnamefont {S.}~\bibnamefont
  {Deffner}}\ and\ \bibinfo {author} {\bibfnamefont {S.}~\bibnamefont
  {Campbell}},\ }\href {\doibase 10.1088/1751-8121/aa86c6} {\bibfield
  {journal} {\bibinfo  {journal} {Journal of Physics A: Mathematical and
  Theoretical}\ }\textbf {\bibinfo {volume} {50}},\ \bibinfo {pages} {453001}
  (\bibinfo {year} {2017})}\BibitemShut {NoStop}%
\bibitem [{\citenamefont {Bengtsson}\ and\ \citenamefont
  {{\.Z}yczkowski}(2017)}]{bengtsson2017geometry}%
  \BibitemOpen
  \bibfield  {author} {\bibinfo {author} {\bibfnamefont {I.}~\bibnamefont
  {Bengtsson}}\ and\ \bibinfo {author} {\bibfnamefont {K.}~\bibnamefont
  {{\.Z}yczkowski}},\ }\href@noop {} {\emph {\bibinfo {title} {Geometry of
  quantum states: an introduction to quantum entanglement}}}\ (\bibinfo
  {publisher} {Cambridge university press},\ \bibinfo {year}
  {2017})\BibitemShut {NoStop}%
\bibitem [{\citenamefont {Wootters}(1981)}]{wootters1981statistical}%
  \BibitemOpen
  \bibfield  {author} {\bibinfo {author} {\bibfnamefont {W.~K.}\ \bibnamefont
  {Wootters}},\ }\href {\doibase 10.1103/PhysRevD.23.357} {\bibfield  {journal}
  {\bibinfo  {journal} {Physical Review D}\ }\textbf {\bibinfo {volume} {23}},\
  \bibinfo {pages} {357} (\bibinfo {year} {1981})}\BibitemShut {NoStop}%
\bibitem [{\citenamefont {Petz}\ and\ \citenamefont
  {Toth}(1993)}]{petz1993bogoliubov}%
  \BibitemOpen
  \bibfield  {author} {\bibinfo {author} {\bibfnamefont {D.}~\bibnamefont
  {Petz}}\ and\ \bibinfo {author} {\bibfnamefont {G.}~\bibnamefont {Toth}},\
  }\href {\doibase 10.1007/BF00739578} {\bibfield  {journal} {\bibinfo
  {journal} {Letters in Mathematical Physics}\ }\textbf {\bibinfo {volume}
  {27}},\ \bibinfo {pages} {205} (\bibinfo {year} {1993})}\BibitemShut
  {NoStop}%
\bibitem [{\citenamefont {Petz}(1994)}]{petz1994geometry}%
  \BibitemOpen
  \bibfield  {author} {\bibinfo {author} {\bibfnamefont {D.}~\bibnamefont
  {Petz}},\ }\href {\doibase 10.1063/1.530611} {\bibfield  {journal} {\bibinfo
  {journal} {Journal of Mathematical Physics}\ }\textbf {\bibinfo {volume}
  {35}},\ \bibinfo {pages} {780} (\bibinfo {year} {1994})}\BibitemShut
  {NoStop}%
\bibitem [{\citenamefont {Balian}(2014)}]{balian2014entropy}%
  \BibitemOpen
  \bibfield  {author} {\bibinfo {author} {\bibfnamefont {R.}~\bibnamefont
  {Balian}},\ }\href {\doibase 10.3390/e16073878} {\bibfield  {journal}
  {\bibinfo  {journal} {Entropy}\ }\textbf {\bibinfo {volume} {16}},\ \bibinfo
  {pages} {3878} (\bibinfo {year} {2014})}\BibitemShut {NoStop}%
\bibitem [{\citenamefont {Kubo}(1966)}]{kubo1966fluctuation}%
  \BibitemOpen
  \bibfield  {author} {\bibinfo {author} {\bibfnamefont {R.}~\bibnamefont
  {Kubo}},\ }\href {\doibase 10.1088/0034-4885/29/1/306} {\bibfield  {journal}
  {\bibinfo  {journal} {Reports on progress in physics}\ }\textbf {\bibinfo
  {volume} {29}},\ \bibinfo {pages} {255} (\bibinfo {year} {1966})}\BibitemShut
  {NoStop}%
\bibitem [{\citenamefont {Hayashi}(2002)}]{hayashi2002two}%
  \BibitemOpen
  \bibfield  {author} {\bibinfo {author} {\bibfnamefont {M.}~\bibnamefont
  {Hayashi}},\ }\href {\doibase 10.1088/0305-4470/35/36/302} {\bibfield
  {journal} {\bibinfo  {journal} {Journal of Physics A: Mathematical and
  General}\ }\textbf {\bibinfo {volume} {35}},\ \bibinfo {pages} {7689}
  (\bibinfo {year} {2002})}\BibitemShut {NoStop}%
\bibitem [{\citenamefont {Abiuso}\ \emph {et~al.}(2020)\citenamefont {Abiuso},
  \citenamefont {Miller}, \citenamefont {Perarnau-Llobet},\ and\ \citenamefont
  {Scandi}}]{abiuso2020geometric}%
  \BibitemOpen
  \bibfield  {author} {\bibinfo {author} {\bibfnamefont {P.}~\bibnamefont
  {Abiuso}}, \bibinfo {author} {\bibfnamefont {H.~J.}\ \bibnamefont {Miller}},
  \bibinfo {author} {\bibfnamefont {M.}~\bibnamefont {Perarnau-Llobet}}, \ and\
  \bibinfo {author} {\bibfnamefont {M.}~\bibnamefont {Scandi}},\ }\href
  {\doibase 10.3390/e22101076} {\bibfield  {journal} {\bibinfo  {journal}
  {Entropy}\ }\textbf {\bibinfo {volume} {22}},\ \bibinfo {pages} {1076}
  (\bibinfo {year} {2020})}\BibitemShut {NoStop}%
\bibitem [{\citenamefont {Balian}\ \emph {et~al.}(1986)\citenamefont {Balian},
  \citenamefont {Alhassid},\ and\ \citenamefont
  {Reinhardt}}]{balian1986dissipation}%
  \BibitemOpen
  \bibfield  {author} {\bibinfo {author} {\bibfnamefont {R.}~\bibnamefont
  {Balian}}, \bibinfo {author} {\bibfnamefont {Y.}~\bibnamefont {Alhassid}}, \
  and\ \bibinfo {author} {\bibfnamefont {H.}~\bibnamefont {Reinhardt}},\ }\href
  {\doibase 10.1016/0370-1573(86)90005-0} {\bibfield  {journal} {\bibinfo
  {journal} {Physics Reports}\ }\textbf {\bibinfo {volume} {131}},\ \bibinfo
  {pages} {1} (\bibinfo {year} {1986})}\BibitemShut {NoStop}%
\bibitem [{\citenamefont {Floerchinger}\ and\ \citenamefont
  {Haas}(2020)}]{floerchinger2020thermodynamics}%
  \BibitemOpen
  \bibfield  {author} {\bibinfo {author} {\bibfnamefont {S.}~\bibnamefont
  {Floerchinger}}\ and\ \bibinfo {author} {\bibfnamefont {T.}~\bibnamefont
  {Haas}},\ }\href {\doibase 10.1103/PhysRevE.102.052117} {\bibfield  {journal}
  {\bibinfo  {journal} {Physical Review E}\ }\textbf {\bibinfo {volume}
  {102}},\ \bibinfo {pages} {052117} (\bibinfo {year} {2020})}\BibitemShut
  {NoStop}%
\bibitem [{\citenamefont {Michor}\ \emph {et~al.}(2000)\citenamefont {Michor},
  \citenamefont {Petz},\ and\ \citenamefont {Andai}}]{michor2000curvature}%
  \BibitemOpen
  \bibfield  {author} {\bibinfo {author} {\bibfnamefont {P.~W.}\ \bibnamefont
  {Michor}}, \bibinfo {author} {\bibfnamefont {D.}~\bibnamefont {Petz}}, \ and\
  \bibinfo {author} {\bibfnamefont {A.}~\bibnamefont {Andai}},\ }\href
  {\doibase
  https://citeseerx.ist.psu.edu/document?repid=rep1&type=pdf&doi=75c3b5e96cf2efa1b11c86b20d9f3c886f2f8fcf}
  {\bibfield  {journal} {\bibinfo  {journal} {Infinite Dimensional Analysis,
  Quantum Probability and Related Topics}\ }\textbf {\bibinfo {volume} {3}},\
  \bibinfo {pages} {1} (\bibinfo {year} {2000})}\BibitemShut {NoStop}%
\bibitem [{\citenamefont {Dittmann}(2000)}]{dittmann2000curvature}%
  \BibitemOpen
  \bibfield  {author} {\bibinfo {author} {\bibfnamefont {J.}~\bibnamefont
  {Dittmann}},\ }\href {\doibase 10.1016/S0024-3795(00)00130-0} {\bibfield
  {journal} {\bibinfo  {journal} {Linear Algebra and its Applications}\
  }\textbf {\bibinfo {volume} {315}},\ \bibinfo {pages} {83} (\bibinfo {year}
  {2000})}\BibitemShut {NoStop}%
\bibitem [{\citenamefont {Gibilisco}\ and\ \citenamefont
  {Isola}(2005)}]{gibilisco2005monotonicity}%
  \BibitemOpen
  \bibfield  {author} {\bibinfo {author} {\bibfnamefont {P.}~\bibnamefont
  {Gibilisco}}\ and\ \bibinfo {author} {\bibfnamefont {T.}~\bibnamefont
  {Isola}},\ }\href {\doibase 10.1063/1.1834693} {\bibfield  {journal}
  {\bibinfo  {journal} {Journal of mathematical physics}\ }\textbf {\bibinfo
  {volume} {46}} (\bibinfo {year} {2005}),\ 10.1063/1.1834693}\BibitemShut
  {NoStop}%
\bibitem [{\citenamefont {Petz}(2002)}]{petz2002covariance}%
  \BibitemOpen
  \bibfield  {author} {\bibinfo {author} {\bibfnamefont {D.}~\bibnamefont
  {Petz}},\ }\href {\doibase 10.1088/0305-4470/35/4/305} {\bibfield  {journal}
  {\bibinfo  {journal} {Journal of Physics A: Mathematical and General}\
  }\textbf {\bibinfo {volume} {35}},\ \bibinfo {pages} {929} (\bibinfo {year}
  {2002})}\BibitemShut {NoStop}%
\bibitem [{\citenamefont {Janyszek}(1986)}]{janyszek1986geometrical}%
  \BibitemOpen
  \bibfield  {author} {\bibinfo {author} {\bibfnamefont {H.}~\bibnamefont
  {Janyszek}},\ }\href {\doibase 10.1016/0034-4877(86)90037-6} {\bibfield
  {journal} {\bibinfo  {journal} {Reports on mathematical physics}\ }\textbf
  {\bibinfo {volume} {24}},\ \bibinfo {pages} {11} (\bibinfo {year}
  {1986})}\BibitemShut {NoStop}%
\bibitem [{\citenamefont {Janyszek}\ and\ \citenamefont
  {Mrugal}(1989)}]{janyszek1989riemannian}%
  \BibitemOpen
  \bibfield  {author} {\bibinfo {author} {\bibfnamefont {H.}~\bibnamefont
  {Janyszek}}\ and\ \bibinfo {author} {\bibfnamefont {R.}~\bibnamefont
  {Mrugal}},\ }\href {\doibase 10.1103/PhysRevA.39.6515} {\bibfield  {journal}
  {\bibinfo  {journal} {Physical Review A}\ }\textbf {\bibinfo {volume} {39}},\
  \bibinfo {pages} {6515} (\bibinfo {year} {1989})}\BibitemShut {NoStop}%
\bibitem [{\citenamefont {Wang}\ \emph {et~al.}(2007)\citenamefont {Wang},
  \citenamefont {Hiroshima}, \citenamefont {Tomita},\ and\ \citenamefont
  {Hayashi}}]{wang2007quantum}%
  \BibitemOpen
  \bibfield  {author} {\bibinfo {author} {\bibfnamefont {X.-B.}\ \bibnamefont
  {Wang}}, \bibinfo {author} {\bibfnamefont {T.}~\bibnamefont {Hiroshima}},
  \bibinfo {author} {\bibfnamefont {A.}~\bibnamefont {Tomita}}, \ and\ \bibinfo
  {author} {\bibfnamefont {M.}~\bibnamefont {Hayashi}},\ }\href {\doibase
  10.1016/j.physrep.2007.04.005} {\bibfield  {journal} {\bibinfo  {journal}
  {Physics reports}\ }\textbf {\bibinfo {volume} {448}},\ \bibinfo {pages} {1}
  (\bibinfo {year} {2007})}\BibitemShut {NoStop}%
\bibitem [{\citenamefont {Weedbrook}\ \emph {et~al.}(2012)\citenamefont
  {Weedbrook}, \citenamefont {Pirandola}, \citenamefont
  {Garc{\'\i}a-Patr{\'o}n}, \citenamefont {Cerf}, \citenamefont {Ralph},
  \citenamefont {Shapiro},\ and\ \citenamefont
  {Lloyd}}]{weedbrook2012gaussian}%
  \BibitemOpen
  \bibfield  {author} {\bibinfo {author} {\bibfnamefont {C.}~\bibnamefont
  {Weedbrook}}, \bibinfo {author} {\bibfnamefont {S.}~\bibnamefont
  {Pirandola}}, \bibinfo {author} {\bibfnamefont {R.}~\bibnamefont
  {Garc{\'\i}a-Patr{\'o}n}}, \bibinfo {author} {\bibfnamefont {N.~J.}\
  \bibnamefont {Cerf}}, \bibinfo {author} {\bibfnamefont {T.~C.}\ \bibnamefont
  {Ralph}}, \bibinfo {author} {\bibfnamefont {J.~H.}\ \bibnamefont {Shapiro}},
  \ and\ \bibinfo {author} {\bibfnamefont {S.}~\bibnamefont {Lloyd}},\ }\href
  {\doibase 10.1103/RevModPhys.84.621} {\bibfield  {journal} {\bibinfo
  {journal} {Reviews of Modern Physics}\ }\textbf {\bibinfo {volume} {84}},\
  \bibinfo {pages} {621} (\bibinfo {year} {2012})}\BibitemShut {NoStop}%
\bibitem [{\citenamefont {Adesso}\ \emph {et~al.}(2014)\citenamefont {Adesso},
  \citenamefont {Ragy},\ and\ \citenamefont {Lee}}]{adesso2014continuous}%
  \BibitemOpen
  \bibfield  {author} {\bibinfo {author} {\bibfnamefont {G.}~\bibnamefont
  {Adesso}}, \bibinfo {author} {\bibfnamefont {S.}~\bibnamefont {Ragy}}, \ and\
  \bibinfo {author} {\bibfnamefont {A.~R.}\ \bibnamefont {Lee}},\ }\href
  {\doibase 10.1142/S1230161214400010} {\bibfield  {journal} {\bibinfo
  {journal} {Open Systems \& Information Dynamics}\ }\textbf {\bibinfo {volume}
  {21}},\ \bibinfo {pages} {1440001} (\bibinfo {year} {2014})}\BibitemShut
  {NoStop}%
\bibitem [{\citenamefont {Simon}\ \emph {et~al.}(1987)\citenamefont {Simon},
  \citenamefont {Sudarshan},\ and\ \citenamefont
  {Mukunda}}]{simon1987gaussian}%
  \BibitemOpen
  \bibfield  {author} {\bibinfo {author} {\bibfnamefont {R.}~\bibnamefont
  {Simon}}, \bibinfo {author} {\bibfnamefont {E.}~\bibnamefont {Sudarshan}}, \
  and\ \bibinfo {author} {\bibfnamefont {N.}~\bibnamefont {Mukunda}},\ }\href
  {\doibase 10.1103/PhysRevA.36.3868} {\bibfield  {journal} {\bibinfo
  {journal} {Physical Review A}\ }\textbf {\bibinfo {volume} {36}},\ \bibinfo
  {pages} {3868} (\bibinfo {year} {1987})}\BibitemShut {NoStop}%
\bibitem [{\citenamefont {Tanaka}(2006)}]{tanaka2006kubo}%
  \BibitemOpen
  \bibfield  {author} {\bibinfo {author} {\bibfnamefont {F.}~\bibnamefont
  {Tanaka}},\ }\href {\doibase 10.1088/0305-4470/39/45/024} {\bibfield
  {journal} {\bibinfo  {journal} {Journal of Physics A: Mathematical and
  General}\ }\textbf {\bibinfo {volume} {39}},\ \bibinfo {pages} {14165}
  (\bibinfo {year} {2006})}\BibitemShut {NoStop}%
\bibitem [{\citenamefont {Link}\ and\ \citenamefont
  {Strunz}(2015)}]{link2015geometry}%
  \BibitemOpen
  \bibfield  {author} {\bibinfo {author} {\bibfnamefont {V.}~\bibnamefont
  {Link}}\ and\ \bibinfo {author} {\bibfnamefont {W.~T.}\ \bibnamefont
  {Strunz}},\ }\href {\doibase 10.1088/1751-8113/48/27/275301} {\bibfield
  {journal} {\bibinfo  {journal} {Journal of Physics A: Mathematical and
  Theoretical}\ }\textbf {\bibinfo {volume} {48}},\ \bibinfo {pages} {275301}
  (\bibinfo {year} {2015})}\BibitemShut {NoStop}%
\bibitem [{\citenamefont {Marian}\ and\ \citenamefont
  {Marian}(2016)}]{marian2016quantum}%
  \BibitemOpen
  \bibfield  {author} {\bibinfo {author} {\bibfnamefont {P.}~\bibnamefont
  {Marian}}\ and\ \bibinfo {author} {\bibfnamefont {T.~A.}\ \bibnamefont
  {Marian}},\ }\href {\doibase 10.1103/PhysRevA.93.052330} {\bibfield
  {journal} {\bibinfo  {journal} {Physical Review A}\ }\textbf {\bibinfo
  {volume} {93}},\ \bibinfo {pages} {052330} (\bibinfo {year}
  {2016})}\BibitemShut {NoStop}%
\bibitem [{\citenamefont {Felice}\ \emph {et~al.}(2017)\citenamefont {Felice},
  \citenamefont {Quang},\ and\ \citenamefont {Mancini}}]{felice2017volume}%
  \BibitemOpen
  \bibfield  {author} {\bibinfo {author} {\bibfnamefont {D.}~\bibnamefont
  {Felice}}, \bibinfo {author} {\bibfnamefont {M.~H.}\ \bibnamefont {Quang}}, \
  and\ \bibinfo {author} {\bibfnamefont {S.}~\bibnamefont {Mancini}},\ }\href
  {\doibase 10.1063/1.4973507} {\bibfield  {journal} {\bibinfo  {journal}
  {Journal of Mathematical Physics}\ }\textbf {\bibinfo {volume} {58}}
  (\bibinfo {year} {2017}),\ 10.1063/1.4973507}\BibitemShut {NoStop}%
\bibitem [{\citenamefont {Di~Giulio}\ and\ \citenamefont
  {Tonni}(2020)}]{di2020complexity}%
  \BibitemOpen
  \bibfield  {author} {\bibinfo {author} {\bibfnamefont {G.}~\bibnamefont
  {Di~Giulio}}\ and\ \bibinfo {author} {\bibfnamefont {E.}~\bibnamefont
  {Tonni}},\ }\href {\doibase 10.1007/JHEP12(2020)101} {\bibfield  {journal}
  {\bibinfo  {journal} {Journal of High Energy Physics}\ }\textbf {\bibinfo
  {volume} {2020}},\ \bibinfo {pages} {1} (\bibinfo {year} {2020})}\BibitemShut
  {NoStop}%
\bibitem [{\citenamefont {Siudzi{\'n}ska}\ \emph {et~al.}(2019)\citenamefont
  {Siudzi{\'n}ska}, \citenamefont {Luoma},\ and\ \citenamefont
  {Strunz}}]{siudzinska2019geometry}%
  \BibitemOpen
  \bibfield  {author} {\bibinfo {author} {\bibfnamefont {K.}~\bibnamefont
  {Siudzi{\'n}ska}}, \bibinfo {author} {\bibfnamefont {K.}~\bibnamefont
  {Luoma}}, \ and\ \bibinfo {author} {\bibfnamefont {W.~T.}\ \bibnamefont
  {Strunz}},\ }\href {\doibase 10.1103/PhysRevA.100.062308} {\bibfield
  {journal} {\bibinfo  {journal} {Physical Review A}\ }\textbf {\bibinfo
  {volume} {100}},\ \bibinfo {pages} {062308} (\bibinfo {year}
  {2019})}\BibitemShut {NoStop}%
\bibitem [{\citenamefont {Windt}\ \emph {et~al.}(2021)\citenamefont {Windt},
  \citenamefont {Jahn}, \citenamefont {Eisert},\ and\ \citenamefont
  {Hackl}}]{windt2021local}%
  \BibitemOpen
  \bibfield  {author} {\bibinfo {author} {\bibfnamefont {B.}~\bibnamefont
  {Windt}}, \bibinfo {author} {\bibfnamefont {A.}~\bibnamefont {Jahn}},
  \bibinfo {author} {\bibfnamefont {J.}~\bibnamefont {Eisert}}, \ and\ \bibinfo
  {author} {\bibfnamefont {L.}~\bibnamefont {Hackl}},\ }\href {\doibase
  10.21468/SciPostPhys.10.3.066} {\bibfield  {journal} {\bibinfo  {journal}
  {SciPost Physics}\ }\textbf {\bibinfo {volume} {10}},\ \bibinfo {pages} {066}
  (\bibinfo {year} {2021})}\BibitemShut {NoStop}%
\bibitem [{\citenamefont {Mehboudi}\ and\ \citenamefont
  {Miller}(2022)}]{mehboudi2022thermodynamic}%
  \BibitemOpen
  \bibfield  {author} {\bibinfo {author} {\bibfnamefont {M.}~\bibnamefont
  {Mehboudi}}\ and\ \bibinfo {author} {\bibfnamefont {H.~J.}\ \bibnamefont
  {Miller}},\ }\href {\doibase 10.1103/PhysRevA.105.062434} {\bibfield
  {journal} {\bibinfo  {journal} {Physical Review A}\ }\textbf {\bibinfo
  {volume} {105}},\ \bibinfo {pages} {062434} (\bibinfo {year}
  {2022})}\BibitemShut {NoStop}%
\bibitem [{\citenamefont {Poggi}\ \emph {et~al.}(2021)\citenamefont {Poggi},
  \citenamefont {Campbell},\ and\ \citenamefont
  {Deffner}}]{poggi2021diverging}%
  \BibitemOpen
  \bibfield  {author} {\bibinfo {author} {\bibfnamefont {P.~M.}\ \bibnamefont
  {Poggi}}, \bibinfo {author} {\bibfnamefont {S.}~\bibnamefont {Campbell}}, \
  and\ \bibinfo {author} {\bibfnamefont {S.}~\bibnamefont {Deffner}},\ }\href
  {\doibase 10.1103/PRXQuantum.2.040349} {\bibfield  {journal} {\bibinfo
  {journal} {PRX Quantum}\ }\textbf {\bibinfo {volume} {2}},\ \bibinfo {pages}
  {040349} (\bibinfo {year} {2021})}\BibitemShut {NoStop}%
\bibitem [{\citenamefont {Andai}(2009)}]{andai2009geometry}%
  \BibitemOpen
  \bibfield  {author} {\bibinfo {author} {\bibfnamefont {A.}~\bibnamefont
  {Andai}},\ }\href {\doibase 10.1016/j.jmva.2008.08.007} {\bibfield  {journal}
  {\bibinfo  {journal} {Journal of Multivariate Analysis}\ }\textbf {\bibinfo
  {volume} {100}},\ \bibinfo {pages} {777} (\bibinfo {year}
  {2009})}\BibitemShut {NoStop}%
\bibitem [{\citenamefont {Burbea}(1984)}]{burbea1984informative}%
  \BibitemOpen
  \bibfield  {author} {\bibinfo {author} {\bibfnamefont {J.}~\bibnamefont
  {Burbea}},\ }\href {\doibase
  https://api.semanticscholar.org/CorpusID:118051406} {\emph {\bibinfo {title}
  {Informative geometry of probability spaces}}}\ (\bibinfo  {publisher}
  {Center for Multivariate Analysis, University of Pittsburgh},\ \bibinfo
  {year} {1984})\BibitemShut {NoStop}%
\bibitem [{\citenamefont {Siegel}(2014)}]{siegel2014symplectic}%
  \BibitemOpen
  \bibfield  {author} {\bibinfo {author} {\bibfnamefont {C.~L.}\ \bibnamefont
  {Siegel}},\ }\href@noop {} {\emph {\bibinfo {title} {Symplectic geometry}}}\
  (\bibinfo  {publisher} {Elsevier},\ \bibinfo {year} {2014})\BibitemShut
  {NoStop}%
\bibitem [{\citenamefont {Lenglet}\ \emph {et~al.}(2006)\citenamefont
  {Lenglet}, \citenamefont {Rousson}, \citenamefont {Deriche},\ and\
  \citenamefont {Faugeras}}]{lenglet2006statistics}%
  \BibitemOpen
  \bibfield  {author} {\bibinfo {author} {\bibfnamefont {C.}~\bibnamefont
  {Lenglet}}, \bibinfo {author} {\bibfnamefont {M.}~\bibnamefont {Rousson}},
  \bibinfo {author} {\bibfnamefont {R.}~\bibnamefont {Deriche}}, \ and\
  \bibinfo {author} {\bibfnamefont {O.}~\bibnamefont {Faugeras}},\ }\href
  {\doibase 10.1007/s10851-006-6897-z} {\bibfield  {journal} {\bibinfo
  {journal} {Journal of Mathematical Imaging and Vision}\ }\textbf {\bibinfo
  {volume} {25}},\ \bibinfo {pages} {423} (\bibinfo {year} {2006})}\BibitemShut
  {NoStop}%
\bibitem [{\citenamefont {Pinele}\ \emph {et~al.}(2020)\citenamefont {Pinele},
  \citenamefont {Strapasson},\ and\ \citenamefont {Costa}}]{pinele2020fisher}%
  \BibitemOpen
  \bibfield  {author} {\bibinfo {author} {\bibfnamefont {J.}~\bibnamefont
  {Pinele}}, \bibinfo {author} {\bibfnamefont {J.~E.}\ \bibnamefont
  {Strapasson}}, \ and\ \bibinfo {author} {\bibfnamefont {S.~I.}\ \bibnamefont
  {Costa}},\ }\href {\doibase 10.3390/e22040404} {\bibfield  {journal}
  {\bibinfo  {journal} {Entropy}\ }\textbf {\bibinfo {volume} {22}},\ \bibinfo
  {pages} {404} (\bibinfo {year} {2020})}\BibitemShut {NoStop}%
\bibitem [{\citenamefont {Umegaki}(1962)}]{umegaki1962conditional}%
  \BibitemOpen
  \bibfield  {author} {\bibinfo {author} {\bibfnamefont {H.}~\bibnamefont
  {Umegaki}},\ }in\ \href {\doibase 10.2996/kmj/1138844604} {\emph {\bibinfo
  {booktitle} {Kodai Mathematical Seminar Reports}}},\ Vol.~\bibinfo {volume}
  {14}\ (\bibinfo {organization} {Department of Mathematics, Tokyo Institute of
  Technology},\ \bibinfo {year} {1962})\ pp.\ \bibinfo {pages}
  {59--85}\BibitemShut {NoStop}%
\bibitem [{\citenamefont {Lesniewski}\ and\ \citenamefont
  {Ruskai}(1999)}]{lesniewski1999monotone}%
  \BibitemOpen
  \bibfield  {author} {\bibinfo {author} {\bibfnamefont {A.}~\bibnamefont
  {Lesniewski}}\ and\ \bibinfo {author} {\bibfnamefont {M.~B.}\ \bibnamefont
  {Ruskai}},\ }\href {\doibase 10.1063/1.533053} {\bibfield  {journal}
  {\bibinfo  {journal} {Journal of Mathematical Physics}\ }\textbf {\bibinfo
  {volume} {40}},\ \bibinfo {pages} {5702} (\bibinfo {year}
  {1999})}\BibitemShut {NoStop}%
\bibitem [{\citenamefont {Petz}(1996)}]{petz1996monotone}%
  \BibitemOpen
  \bibfield  {author} {\bibinfo {author} {\bibfnamefont {D.}~\bibnamefont
  {Petz}},\ }\href {\doibase 10.1016/0024-3795(94)00211-8} {\bibfield
  {journal} {\bibinfo  {journal} {Linear algebra and its applications}\
  }\textbf {\bibinfo {volume} {244}},\ \bibinfo {pages} {81} (\bibinfo {year}
  {1996})}\BibitemShut {NoStop}%
\bibitem [{\citenamefont {Banchi}\ \emph {et~al.}(2015)\citenamefont {Banchi},
  \citenamefont {Braunstein},\ and\ \citenamefont
  {Pirandola}}]{banchi2015quantum}%
  \BibitemOpen
  \bibfield  {author} {\bibinfo {author} {\bibfnamefont {L.}~\bibnamefont
  {Banchi}}, \bibinfo {author} {\bibfnamefont {S.~L.}\ \bibnamefont
  {Braunstein}}, \ and\ \bibinfo {author} {\bibfnamefont {S.}~\bibnamefont
  {Pirandola}},\ }\href {\doibase 10.1103/PhysRevLett.115.260501} {\bibfield
  {journal} {\bibinfo  {journal} {Physical review letters}\ }\textbf {\bibinfo
  {volume} {115}},\ \bibinfo {pages} {260501} (\bibinfo {year}
  {2015})}\BibitemShut {NoStop}%
\bibitem [{\citenamefont {Simon}\ \emph {et~al.}(1994)\citenamefont {Simon},
  \citenamefont {Mukunda},\ and\ \citenamefont {Dutta}}]{simon1994quantum}%
  \BibitemOpen
  \bibfield  {author} {\bibinfo {author} {\bibfnamefont {R.}~\bibnamefont
  {Simon}}, \bibinfo {author} {\bibfnamefont {N.}~\bibnamefont {Mukunda}}, \
  and\ \bibinfo {author} {\bibfnamefont {B.}~\bibnamefont {Dutta}},\ }\href
  {\doibase 10.1103/PhysRevA.49.1567} {\bibfield  {journal} {\bibinfo
  {journal} {Physical Review A}\ }\textbf {\bibinfo {volume} {49}},\ \bibinfo
  {pages} {1567} (\bibinfo {year} {1994})}\BibitemShut {NoStop}%
\bibitem [{\citenamefont {Seshadreesan}\ \emph {et~al.}(2018)\citenamefont
  {Seshadreesan}, \citenamefont {Lami},\ and\ \citenamefont
  {Wilde}}]{seshadreesan2018renyi}%
  \BibitemOpen
  \bibfield  {author} {\bibinfo {author} {\bibfnamefont {K.~P.}\ \bibnamefont
  {Seshadreesan}}, \bibinfo {author} {\bibfnamefont {L.}~\bibnamefont {Lami}},
  \ and\ \bibinfo {author} {\bibfnamefont {M.~M.}\ \bibnamefont {Wilde}},\
  }\href {\doibase 10.1063/1.5007167} {\bibfield  {journal} {\bibinfo
  {journal} {Journal of Mathematical Physics}\ }\textbf {\bibinfo {volume}
  {59}} (\bibinfo {year} {2018}),\ 10.1063/1.5007167}\BibitemShut {NoStop}%
\bibitem [{\citenamefont {Chen}(2005)}]{chen2005gaussian}%
  \BibitemOpen
  \bibfield  {author} {\bibinfo {author} {\bibfnamefont {X.-y.}\ \bibnamefont
  {Chen}},\ }\href {\doibase 10.1103/PhysRevA.71.062320} {\bibfield  {journal}
  {\bibinfo  {journal} {Physical Review A}\ }\textbf {\bibinfo {volume} {71}},\
  \bibinfo {pages} {062320} (\bibinfo {year} {2005})}\BibitemShut {NoStop}%
\bibitem [{\citenamefont {Gallot}\ \emph {et~al.}(2004)\citenamefont {Gallot},
  \citenamefont {Hulin},\ and\ \citenamefont
  {Lafontaine}}]{gallot2004riemannian}%
  \BibitemOpen
  \bibfield  {author} {\bibinfo {author} {\bibfnamefont {S.}~\bibnamefont
  {Gallot}}, \bibinfo {author} {\bibfnamefont {D.}~\bibnamefont {Hulin}}, \
  and\ \bibinfo {author} {\bibfnamefont {J.}~\bibnamefont {Lafontaine}},\
  }\href@noop {} {\emph {\bibinfo {title} {Riemannian Geometry}}},\
  Universitext\ (\bibinfo  {publisher} {Springer Berlin Heidelberg},\ \bibinfo
  {year} {2004})\BibitemShut {NoStop}%
\bibitem [{\citenamefont {Holevo}\ \emph {et~al.}(1999)\citenamefont {Holevo},
  \citenamefont {Sohma},\ and\ \citenamefont {Hirota}}]{holevo1999capacity}%
  \BibitemOpen
  \bibfield  {author} {\bibinfo {author} {\bibfnamefont {A.~S.}\ \bibnamefont
  {Holevo}}, \bibinfo {author} {\bibfnamefont {M.}~\bibnamefont {Sohma}}, \
  and\ \bibinfo {author} {\bibfnamefont {O.}~\bibnamefont {Hirota}},\ }\href
  {\doibase 10.1103/PhysRevA.59.1820} {\bibfield  {journal} {\bibinfo
  {journal} {Physical Review A}\ }\textbf {\bibinfo {volume} {59}},\ \bibinfo
  {pages} {1820} (\bibinfo {year} {1999})}\BibitemShut {NoStop}%
\bibitem [{\citenamefont {Alberti}\ and\ \citenamefont
  {Uhlmann}(1982)}]{alberti1982stochasticity}%
  \BibitemOpen
  \bibfield  {author} {\bibinfo {author} {\bibfnamefont {P.~M.}\ \bibnamefont
  {Alberti}}\ and\ \bibinfo {author} {\bibfnamefont {A.}~\bibnamefont
  {Uhlmann}},\ }\href@noop {} {\emph {\bibinfo {title} {Stochasticity and
  partial order}}}\ (\bibinfo  {publisher} {Deutscher Verlag der Wissenschaften
  Berlin},\ \bibinfo {year} {1982})\BibitemShut {NoStop}%
\bibitem [{\citenamefont {Brody}\ and\ \citenamefont
  {Hook}(2008)}]{brody2008information}%
  \BibitemOpen
  \bibfield  {author} {\bibinfo {author} {\bibfnamefont {D.~C.}\ \bibnamefont
  {Brody}}\ and\ \bibinfo {author} {\bibfnamefont {D.~W.}\ \bibnamefont
  {Hook}},\ }\href {\doibase 10.1088/1751-8113/42/2/023001/meta} {\bibfield
  {journal} {\bibinfo  {journal} {Journal of Physics A: Mathematical and
  Theoretical}\ }\textbf {\bibinfo {volume} {42}},\ \bibinfo {pages} {023001}
  (\bibinfo {year} {2008})}\BibitemShut {NoStop}%
\bibitem [{\citenamefont {Van~Herstraeten}\ \emph {et~al.}(2023)\citenamefont
  {Van~Herstraeten}, \citenamefont {Jabbour},\ and\ \citenamefont
  {Cerf}}]{van2023continuous}%
  \BibitemOpen
  \bibfield  {author} {\bibinfo {author} {\bibfnamefont {Z.}~\bibnamefont
  {Van~Herstraeten}}, \bibinfo {author} {\bibfnamefont {M.~G.}\ \bibnamefont
  {Jabbour}}, \ and\ \bibinfo {author} {\bibfnamefont {N.~J.}\ \bibnamefont
  {Cerf}},\ }\href {\doibase 10.22331/q-2023-05-24-1021} {\bibfield  {journal}
  {\bibinfo  {journal} {Quantum}\ }\textbf {\bibinfo {volume} {7}},\ \bibinfo
  {pages} {1021} (\bibinfo {year} {2023})}\BibitemShut {NoStop}%
\bibitem [{\citenamefont {Erdmenger}\ \emph {et~al.}(2020)\citenamefont
  {Erdmenger}, \citenamefont {Grosvenor},\ and\ \citenamefont
  {Jefferson}}]{erdmenger2020information}%
  \BibitemOpen
  \bibfield  {author} {\bibinfo {author} {\bibfnamefont {J.}~\bibnamefont
  {Erdmenger}}, \bibinfo {author} {\bibfnamefont {K.}~\bibnamefont
  {Grosvenor}}, \ and\ \bibinfo {author} {\bibfnamefont {R.}~\bibnamefont
  {Jefferson}},\ }\href {\doibase doi: 10.21468/SciPostPhys.8.5.073} {\bibfield
   {journal} {\bibinfo  {journal} {SciPost Physics}\ }\textbf {\bibinfo
  {volume} {8}},\ \bibinfo {pages} {073} (\bibinfo {year} {2020})}\BibitemShut
  {NoStop}%
\bibitem [{\citenamefont {Kluth}\ \emph {et~al.}(2023)\citenamefont {Kluth},
  \citenamefont {Millington},\ and\ \citenamefont
  {Saffin}}]{kluth2023renormalization}%
  \BibitemOpen
  \bibfield  {author} {\bibinfo {author} {\bibfnamefont {Y.}~\bibnamefont
  {Kluth}}, \bibinfo {author} {\bibfnamefont {P.}~\bibnamefont {Millington}}, \
  and\ \bibinfo {author} {\bibfnamefont {P.}~\bibnamefont {Saffin}},\ }\href
  {\doibase 10.48550/arXiv.2311.17199} {\bibfield  {journal} {\bibinfo
  {journal} {arXiv preprint arXiv:2311.17199}\ } (\bibinfo {year} {2023}),\
  10.48550/arXiv.2311.17199}\BibitemShut {NoStop}%
\bibitem [{\citenamefont {Yadin}\ \emph {et~al.}(2018)\citenamefont {Yadin},
  \citenamefont {Binder}, \citenamefont {Thompson}, \citenamefont
  {Narasimhachar}, \citenamefont {Gu},\ and\ \citenamefont
  {Kim}}]{yadin2018operational}%
  \BibitemOpen
  \bibfield  {author} {\bibinfo {author} {\bibfnamefont {B.}~\bibnamefont
  {Yadin}}, \bibinfo {author} {\bibfnamefont {F.~C.}\ \bibnamefont {Binder}},
  \bibinfo {author} {\bibfnamefont {J.}~\bibnamefont {Thompson}}, \bibinfo
  {author} {\bibfnamefont {V.}~\bibnamefont {Narasimhachar}}, \bibinfo {author}
  {\bibfnamefont {M.}~\bibnamefont {Gu}}, \ and\ \bibinfo {author}
  {\bibfnamefont {M.}~\bibnamefont {Kim}},\ }\href {\doibase
  10.1103/PhysRevX.8.041038} {\bibfield  {journal} {\bibinfo  {journal}
  {Physical Review X}\ }\textbf {\bibinfo {volume} {8}},\ \bibinfo {pages}
  {041038} (\bibinfo {year} {2018})}\BibitemShut {NoStop}%
\bibitem [{\citenamefont {Adesso}\ and\ \citenamefont
  {Datta}(2010)}]{adesso2010quantum}%
  \BibitemOpen
  \bibfield  {author} {\bibinfo {author} {\bibfnamefont {G.}~\bibnamefont
  {Adesso}}\ and\ \bibinfo {author} {\bibfnamefont {A.}~\bibnamefont {Datta}},\
  }\href {\doibase 10.1103/PhysRevLett.105.030501} {\bibfield  {journal}
  {\bibinfo  {journal} {Physical review letters}\ }\textbf {\bibinfo {volume}
  {105}},\ \bibinfo {pages} {030501} (\bibinfo {year} {2010})}\BibitemShut
  {NoStop}%
\bibitem [{\citenamefont {Kolodrubetz}\ \emph {et~al.}(2013)\citenamefont
  {Kolodrubetz}, \citenamefont {Gritsev},\ and\ \citenamefont
  {Polkovnikov}}]{kolodrubetz2013classifying}%
  \BibitemOpen
  \bibfield  {author} {\bibinfo {author} {\bibfnamefont {M.}~\bibnamefont
  {Kolodrubetz}}, \bibinfo {author} {\bibfnamefont {V.}~\bibnamefont
  {Gritsev}}, \ and\ \bibinfo {author} {\bibfnamefont {A.}~\bibnamefont
  {Polkovnikov}},\ }\href {\doibase 10.1103/PhysRevB.88.064304} {\bibfield
  {journal} {\bibinfo  {journal} {Physical Review B}\ }\textbf {\bibinfo
  {volume} {88}},\ \bibinfo {pages} {064304} (\bibinfo {year}
  {2013})}\BibitemShut {NoStop}%
\bibitem [{\citenamefont {Wilde}\ \emph {et~al.}(2017)\citenamefont {Wilde},
  \citenamefont {Tomamichel}, \citenamefont {Lloyd},\ and\ \citenamefont
  {Berta}}]{wilde2017gaussian}%
  \BibitemOpen
  \bibfield  {author} {\bibinfo {author} {\bibfnamefont {M.~M.}\ \bibnamefont
  {Wilde}}, \bibinfo {author} {\bibfnamefont {M.}~\bibnamefont {Tomamichel}},
  \bibinfo {author} {\bibfnamefont {S.}~\bibnamefont {Lloyd}}, \ and\ \bibinfo
  {author} {\bibfnamefont {M.}~\bibnamefont {Berta}},\ }\href {\doibase
  10.1103/PhysRevLett.119.120501} {\bibfield  {journal} {\bibinfo  {journal}
  {Physical review letters}\ }\textbf {\bibinfo {volume} {119}},\ \bibinfo
  {pages} {120501} (\bibinfo {year} {2017})}\BibitemShut {NoStop}%
\end{thebibliography}%

\appendix

\

\section{A. Integral representation for inverse hyperbolic cotangent matrix function}\label{app:A}

\

\noindent Consider the function 
\begin{align}
    \text{arccoth}(x)=\frac{1}{2}\text{ln}\bigg(\frac{x+1}{x-1}\bigg)
\end{align}
with domain $(-\infty,-1)\cup(1,\infty)$. We note the logarithm can be expressed in integral form:
\begin{align}
    \text{ln} (x)=\int^\infty_0 d\lambda \ \bigg(\frac{1}{1+\lambda}-\frac{1}{x+\lambda}\bigg).
\end{align}
This means we can also express $\text{arccoth}(x)$ as
\begin{align}\label{eq:arcoth}
    \nonumber\text{arccoth}(x)&=\frac{1}{2}\int^\infty_0 d\lambda \bigg(\frac{1}{x+(\lambda-1)}-\frac{1}{x+(\lambda+1)}\bigg), \\
    \nonumber&=\frac{1}{2}\int^\infty_{-1} d\lambda \frac{1}{x+\lambda}-\frac{1}{2}\int^\infty_{1} d\lambda \ \frac{1}{x+\lambda}, \\
    &=\frac{1}{2}\int^1_{-1} d\lambda \ \frac{1}{x+\lambda},
\end{align}
where in the second line performed the change of variables $\lambda\to\lambda-1$ and $\lambda\to\lambda+1$ to the first and second term respectively. It follows that since the matrix $\Vb(i\Omb)$ is hermitian it is diagonalisable, and we can use~\eqref{eq:arcoth} to write
\begin{align}
    \text{arccoth}(2i\Vb\Omb)=\frac{1}{2}\int^1_{-1} d\lambda \ \big[2\Vb (i\Omb)+\lambda\id\big]^{-1}, 
\end{align}
which is well defined for $\Vb>i\Omb/2$ and thus $2\Vb+i\Omb>0$ by transpose. We note the identity
\begin{align}\label{eq:inv_om}
    (i\Omb)\big[2\Vb (i\Omb)+\lambda\id\big]^{-1}=\big[2\Vb+\lambda(i\Omb)\big]^{-1},
\end{align}
which concludes the derivation of~\eqref{eq:int_rep}

\section{B. Diagonalising the covariance matrix }\label{app:B}

\

For calculation purposes we need to express functions such as $[2\Vb+i\lambda\Omb]^{-1}$ in a diagonal form. One can show that the matrix $\Vb(i\Omb)$ is diagonalised using the symplectic decompositions~\eqref{eq:symp} according to \cite{wilde2017gaussian}
\begin{align}\label{eq:symp2}
    \Vb(i\Omb)=\Sb_\Vb\big(\Ub\otimes \id_N\big)\big((-\Db_\Vb)\oplus \Db_\Vb\big)\big(\Ub^\dagger\otimes \id_N\big)\Sb_\Vb^{-1}
\end{align}
where $\id_N$ is the $N\times N$ identity matrix, $\Ub$ is a unitary matrix given by
\begin{align}
    \Ub=\frac{1}{\sqrt{2}}\left(\begin{array}{cc}
          1 & 1   \\ 
          i & -i  
    \end{array}\right)
\end{align}
and $\Db_\Vb=\text{diag}(\nu_1,...\nu_N)$ is the diagonal matrix of symplectic eigenvalues. It is useful to note that $i\Omb=-\sigma_{Y}\otimes\id_N$, with $\sigma_Y$ the Pauli $Y$-matrix, which means we can write
\begin{align}
    \big(\Ub^\dagger \otimes \id_N\big)(i\Omb)=(-\sigma_Z\otimes\id_N)\big(\Ub^\dagger\otimes\id_N\big).
\end{align}
Another useful relation stems from inverting $\Sb\Omb\Sb^T=\Omb$, which gives
\begin{align}
    \Sb^{-1}(i\Omb)=(i\Omb)\Sb^T.
\end{align}
Combining these with~\eqref{eq:symp2} we have
\begin{align}
    \nonumber [2\Vb+i\lambda\Omb]&=\big[2\Vb(i\Omb)+\lambda\big](i\Omb), \\
    \nonumber&=\Sb_\Vb\big(\Ub\otimes \id_N\big)\big((\lambda-2\Db_\Vb)\oplus (\lambda+2\Db_\Vb)\big)\big(\Ub^\dagger\otimes \id_N\big)\Sb_\Vb^{-1}(i\Omb), \\
    \nonumber&=\Sb_\Vb\big(\Ub\otimes \id_N\big)\big((\lambda-2\Db_\Vb)\oplus (\lambda+2\Db_\Vb)\big)\big(\Ub^\dagger\otimes \id_N\big)(i\Omb)\Sb_\Vb^{T}, \\
    \nonumber&=\Sb_\Vb\big(\Ub\otimes \id_N\big)\big((\lambda-2\Db_\Vb)\oplus (\lambda+2\Db_\Vb)\big)\big(-\sigma_Z\otimes \id_N\big)\big(\Ub^\dagger\otimes \id_N\big)\Sb_\Vb^{T}, \\
    &=\Sb_\Vb\big(\Ub\otimes \id_N\big)\big((2\Db_\Vb-\lambda)\oplus (2\Db_\Vb+\lambda)\big)\big(\Ub^\dagger\otimes \id_N\big)\Sb_\Vb^{T},
\end{align}
which, upon inverting, gives us
\begin{align}
    \nonumber[2\Vb+i\lambda\Omb]^{-1}&=(\Sb^T_\Vb)^{-1}\big(\Ub\otimes \id_N\big)\big((2\Db_\Vb-\lambda)^{-1}\oplus (2\Db_\Vb+\lambda)^{-1}\big)\big(\Ub^\dagger\otimes \id_N\big)\Sb_\Vb^{-1}, \\
    &=\Omb\Sb_\Vb\Omb\big(\Ub\otimes \id_N\big)\big((2\Db_\Vb-\lambda)^{-1}\oplus (2\Db_\Vb+\lambda)^{-1}\big)\big(\Ub^\dagger\otimes \id_N\big)\Omb \Sb^T \Omb,
\end{align}
where we used $\Omb\Sb=(\Sb^T)^{-1}\Omb$.

\section{C. Evaluation of trace functionals}\label{app:C}

\

\noindent Consider two matrices of the product-sum form
\begin{align}
    \tilde{\Ub}^\dagger\Ab\tilde{\Ub}=\sum_{\nu=1}^2\sum_{ij}\bar{a}_{ij}^\nu\mathbf{a}^\nu \otimes \eb_{ij}, \ \ \ \ \ \ \tilde{\Ub}^\dagger\Bb\tilde{\Ub}=\sum_{\nu=1}^2\sum_{ij}\bar{b}_{ij}^\nu\mathbf{b}^\nu \otimes \eb_{ij},
\end{align}
By combining~\eqref{eq:Trace} with~\eqref{eq:scalar_HS} we focus on evaluating
\begin{align}
\nonumber K(\Ab,\Bb)=&\frac{1}{4}\sum^2_{\nu\mu\nu'\mu'=1}\sum_{ijkl}\bigg((\bar{b}^{\mu'}_{li} \ \bar{a}^{\nu'}_{kl} \ \bar{b}^\mu_{jk} \ \bar{a}^\nu_{ij}) \ \Tr{m^{-1}_{il}(\mathbf{b}^{\mu'})m_{ikl}[m_{ik}^{-1}(m_{ijk}[m^{-1}_{ij}(\mathbf{a}^\nu),\mathbf{b}^\mu]),\mathbf{a}^{\nu'}]} \\
\nonumber & \ \ \ \ \ \ \ \ \ \ \ \ \  +(\bar{b}^{\mu'}_{kl} \ \bar{a}^{\nu'}_{li} \ \bar{b}^\mu_{jk} \ \bar{a}^\nu_{ij}) \ \Tr{m^{-1}_{kl}(\mathbf{b}^{\mu'})m_{lik}[\mathbf{a}^{\nu'},m_{ik}^{-1}(m_{ijk}[m^{-1}_{ij}(\mathbf{a}^\nu),\mathbf{b}^\mu])]} \\
\nonumber& \ \ \ \ \ \ \ \ \ \ \ \ \ \ +(\bar{b}^{\mu'}_{lk} \ \bar{a}^{\nu'}_{jl} \ \bar{b}^\mu_{ki} \ \bar{a}^\nu_{ij}) \ \Tr{m^{-1}_{kl}(\mathbf{b}^{\mu'})m_{kjl}[m_{jk}^{-1}(m_{kij}[\mathbf{b}^\mu,m^{-1}_{ij}(\mathbf{a}^\nu)]),\mathbf{a}^{\nu'}]} \\
\nonumber& \ \ \ \ \ \ \ \ \ \ \ \ \ +(\bar{b}^{\mu'}_{jl} \ \bar{a}^{\nu'}_{lk} \ \bar{b}^\mu_{ki} \ \bar{a}^\nu_{ij}) \ \Tr{m^{-1}_{lj}(\mathbf{b}^{\mu'})m_{lkj}[\mathbf{a}^{\nu'},m_{jk}^{-1}(m_{kij}[\mathbf{b}^\mu,m^{-1}_{ij}(\mathbf{a}^\nu)])]} \\
\nonumber& \ \ \ \ \ \ \ \ \ \ \ \ \ \ -(\bar{b}^{\mu'}_{li} \ \bar{b}^{\nu'}_{kl} \ \bar{a}^\mu_{jk} \ \bar{a}^\nu_{ij}) \ \Tr{m^{-1}_{il}(\mathbf{b}^{\mu'})m_{ikl}[m_{ik}^{-1}(m_{ijk}[m^{-1}_{ij}(\mathbf{a}^\nu),\mathbf{a}^\mu]),\mathbf{b}^{\nu'}]} \\
\nonumber & \ \ \ \ \ \ \ \ \ \ \ \ \ -(\bar{b}^{\mu'}_{kl} \ \bar{b}^{\nu'}_{li} \ \bar{a}^\mu_{jk} \ \bar{a}^\nu_{ij}) \ \Tr{m^{-1}_{kl}(\mathbf{b}^{\mu'})m_{lik}[\mathbf{b}^{\nu'},m_{ik}^{-1}(m_{ijk}[m^{-1}_{ij}(\mathbf{a}^\nu),\mathbf{a}^\mu])]} \\
\nonumber& \ \ \ \ \ \ \ \ \ \ \ \ \ \ -(\bar{b}^{\mu'}_{lk} \ \bar{b}^{\nu'}_{jl} \ \bar{a}^\mu_{ki} \ \bar{a}^\nu_{ij}) \ \Tr{m^{-1}_{kl}(\mathbf{b}^{\mu'})m_{kjl}[m_{jk}^{-1}(m_{kij}[\mathbf{a}^\mu,m^{-1}_{ij}(\mathbf{a}^\nu)]),\mathbf{b}^{\nu'}]} \\
\nonumber& \ \ \ \ \ \ \ \ \ \ \ \ \ -(\bar{b}^{\mu'}_{jl} \ \bar{b}^{\nu'}_{lk} \ \bar{a}^\mu_{ki} \ \bar{a}^\nu_{ij}) \ \Tr{m^{-1}_{lj}(\mathbf{b}^{\mu'})m_{lkj}[\mathbf{b}^{\nu'},m_{jk}^{-1}(m_{kij}[\mathbf{a}^\mu,m^{-1}_{ij}(\mathbf{a}^\nu)])]}\bigg), 
\end{align}
The traces just consist of $2\times 2$ matrix multiplication and so can be computed in a tedious but straightforward manner;  we give the results line by line:

\

\noindent \textbf{First Line:}
\begin{align*}
\frac{\bar{b}^{\mu'}_{li} \ \bar{a}^{\nu'}_{kl} \ \bar{b}^\mu_{jk} \ \bar{a}^\nu_{ij}}{f_{ij} f_{ik} f_{il} g_{ij} g_{ik} g_{il}} \Biggl( &  B_{kil} f_{il} f_{ik} \Bigl( (a_{11}^{\nu'} a_{21}^\nu b_{11}^\mu b_{12}^{\mu'} + a_{12}^\nu a_{22}^{\nu'} b_{21}^{\mu'} b_{22}^\mu) B_{jik} f_{ij}+ (a_{11}^\nu a_{22}^{\nu'} b_{12}^\mu b_{21}^{\mu'} + a_{11}^{\nu'} a_{22}^\nu b_{12}^{\mu'} b_{21}^\mu) B_{ikj} g_{ij} \Bigr) \\
& \ \  + B_{ilk}f_{il}g_{ik} \Bigl( (a_{21}^{\nu'} a_{21}^\nu b_{12}^{\mu'} b_{12}^\mu + a_{12}^{\nu'} a_{12}^\nu b_{21}^{\mu'} b_{21}^\mu) B_{ijk} f_{ij} + (a_{11}^\nu a_{12}^{\nu'} b_{11}^\mu b_{21}^{\mu'} + a_{21}^{\nu'} a_{22}^\nu b_{12}^{\mu'} b_{22}^\mu) A_{ijk} g_{ij} \Bigr)   \\
+ &  B_{ikl} g_{il} f_{ik} \Bigl( (a_{12}^{\nu'} a_{21}^\nu b_{11}^\mu b_{22}^{\mu'} + a_{12}^\nu a_{21}^{\nu'} b_{11}^{\mu'} b_{22}^\mu) B_{jik} f_{ij}  + (a_{11}^\nu a_{21}^{\nu'} b_{11}^{\mu'} b_{12}^\mu + a_{12}^{\nu'} a_{22}^\nu b_{21}^\mu b_{22}^{\mu'}) B_{ikj} g_{ij} \Bigr) \\ 
&+ A_{ikl}g_{il}g_{ik} \Bigl( (a_{11}^{\nu'} a_{12}^\nu b_{11}^{\mu'} b_{21}^\mu + a_{21}^\nu a_{22}^{\nu'} b_{12}^\mu b_{22}^{\mu'}) B_{ijk} f_{ij} + (a_{11}^{\nu'} a_{11}^\nu b_{11}^{\mu'} b_{11}^\mu + a_{22}^{\nu'} a_{22}^\nu b_{22}^{\mu'} b_{22}^\mu) A_{ijk} g_{ij} \Bigr)   \Biggr)
\end{align*}
\textbf{Second Line:}
\begin{align*}
\frac{\bar{b}^{\mu'}_{kl} \ \bar{a}^{\nu'}_{li} \ \bar{b}^\mu_{jk} \ \bar{a}^\nu_{ij}}{f_{ij} f_{ik} f_{kl} g_{ij} g_{ik} g_{kl}} \Biggl( & f_{kl} B_{lki} f_{ik} \Bigl( (a_{21}^\nu a_{22}^{\nu'} b_{11}^\mu b_{12}^{\mu'} + a_{11}^{\nu'} a_{12}^\nu b_{21}^{\mu'} b_{22}^\mu) B_{jik} f_{ij} + (a_{11}^{\nu'} a_{11}^\nu b_{12}^\mu b_{21}^{\mu'} + a_{22}^{\nu'} a_{22}^\nu b_{12}^{\mu'} b_{21}^\mu) B_{ikj} g_{ij} \Bigr) \\
& + B_{ilk} f_{kl} g_{ik} \Bigl( (a_{12}^{\nu'} a_{21}^\nu b_{12}^\mu b_{21}^{\mu'} + a_{12}^\nu a_{21}^{\nu'} b_{12}^{\mu'} b_{21}^\mu) B_{ijk} f_{ij} + (a_{11}^\nu a_{21}^{\nu'} b_{11}^\mu b_{12}^{\mu'} + a_{12}^{\nu'} a_{22}^\nu b_{21}^{\mu'} b_{22}^\mu) A_{ijk} g_{ij} \Bigr) \\
+ & B_{lik} g_{kl} f_{ik} \Bigl( (a_{12}^{\nu'} a_{21}^\nu b_{11}^{\mu'} b_{11}^\mu + a_{12}^\nu a_{21}^{\nu'} b_{22}^{\mu'} b_{22}^\mu) B_{jik} f_{ij} + (a_{12}^{\nu'} a_{22}^\nu b_{11}^{\mu'} b_{21}^\mu + a_{11}^\nu a_{21}^{\nu'} b_{12}^\mu b_{22}^{\mu'}) B_{ikj} g_{ij} \Bigr) \\
& + A_{lik} g_{kl} g_{ik} \Bigl( (a_{11}^{\nu'} a_{12}^\nu b_{11}^{\mu'} b_{21}^\mu + a_{21}^\nu a_{22}^{\nu'} b_{12}^\mu b_{22}^{\mu'}) B_{ijk} f_{ij} + (a_{11}^{\nu'} a_{11}^\nu b_{11}^{\mu'} b_{11}^\mu + a_{22}^{\nu'} a_{22}^\nu b_{22}^{\mu'} b_{22}^\mu) A_{ijk} g_{ij} \Bigr) \Biggr).
\end{align*}
\textbf{Third Line:}
\begin{align*}
    \frac{\bar{b}^{\mu'}_{lk} \ \bar{a}^{\nu'}_{jl} \ \bar{b}^\mu_{ki} \ \bar{a}^\nu_{ij}}{f_{ij} f_{jk} f_{kl} g_{ij} g_{jk} g_{kl}} \Biggl( & f_{kl} B_{jkl} f_{jk} \Bigl( (a_{12}^\nu a_{22}^{\nu'} b_{11}^\mu b_{21}^{\mu'} + a_{11}^{\nu'} a_{21}^\nu b_{12}^{\mu'} b_{22}^\mu) B_{kji} f_{ij} + (a_{22}^{\nu'} a_{22}^\nu b_{12}^\mu b_{21}^{\mu'} + a_{11}^{\nu'} a_{11}^\nu b_{12}^{\mu'} b_{21}^\mu) B_{ikj} g_{ij} \Bigr) \\
& + B_{klj} f_{kl} g_{jk} \Bigl( (a_{12}^{\nu'} a_{21}^\nu b_{12}^\mu b_{21}^{\mu'} + a_{12}^\nu a_{21}^{\nu'} b_{12}^{\mu'} b_{21}^\mu) B_{kij} f_{ij} + (a_{11}^\nu a_{12}^{\nu'} b_{11}^\mu b_{21}^{\mu'} + a_{21}^{\nu'} a_{22}^\nu b_{12}^{\mu'} b_{22}^\mu) A_{kij} g_{ij} \Bigr) \\
+ & B_{kjl} g_{kl} f_{jk} \Bigl( (a_{12}^\nu a_{21}^{\nu'} b_{11}^{\mu'} b_{11}^\mu + a_{12}^{\nu'} a_{21}^\nu b_{22}^{\mu'} b_{22}^\mu) B_{kji} f_{ij} + (a_{21}^{\nu'} a_{22}^\nu b_{11}^{\mu'} b_{12}^\mu + a_{11}^\nu a_{12}^{\nu'} b_{21}^\mu b_{22}^{\mu'}) B_{ikj} g_{ij} \Bigr) \\
& + A_{kjl} g_{kl} g_{jk} \Bigl( (a_{11}^{\nu'} a_{21}^\nu b_{11}^{\mu'} b_{12}^\mu + a_{12}^\nu a_{22}^{\nu'} b_{21}^\mu b_{22}^{\mu'}) B_{kij} f_{ij} + (a_{11}^{\nu'} a_{11}^\nu b_{11}^{\mu'} b_{11}^\mu + a_{22}^{\nu'} a_{22}^\nu b_{22}^{\mu'} b_{22}^\mu) A_{kij} g_{ij} \Bigr) \Biggr).
\end{align*}
\textbf{Fourth line:}
\begin{align*}
    \frac{\bar{b}^{\mu'}_{jl} \ \bar{a}^{\nu'}_{lk} \ \bar{b}^\mu_{ki} \ \bar{a}^\nu_{ij}}{f_{ij} f_{jk} f_{lj} g_{ij} g_{jk} g_{lj}} \Biggl( & f_{lj} B_{ljk} f_{jk} \Bigl( (a_{11}^{\nu'} a_{12}^\nu b_{11}^\mu b_{21}^{\mu'} + a_{21}^\nu a_{22}^{\nu'} b_{12}^{\mu'} b_{22}^\mu) B_{kji} f_{ij} + (a_{11}^{\nu'} a_{22}^\nu b_{12}^\mu b_{21}^{\mu'} + a_{11}^\nu a_{22}^{\nu'} b_{12}^{\mu'} b_{21}^\mu) B_{ikj} g_{ij} \Bigr) \\
& + B_{klj} f_{lj} g_{jk} \Bigl( (a_{21}^{\nu'} a_{21}^\nu b_{12}^{\mu'} b_{12}^\mu + a_{12}^{\nu'} a_{12}^\nu b_{21}^{\mu'} b_{21}^\mu) B_{kij} f_{ij} + (a_{11}^\nu a_{21}^{\nu'} b_{11}^\mu b_{12}^{\mu'} + a_{12}^{\nu'} a_{22}^\nu b_{21}^{\mu'} b_{22}^\mu) A_{kij} g_{ij} \Bigr) \\
+ & B_{lkj} g_{lj} f_{jk} \Bigl( (a_{12}^\nu a_{21}^{\nu'} b_{11}^\mu b_{22}^{\mu'} + a_{12}^{\nu'} a_{21}^\nu b_{11}^{\mu'} b_{22}^\mu) B_{kji} f_{ij} + (a_{11}^\nu a_{12}^{\nu'} b_{11}^{\mu'} b_{21}^\mu + a_{21}^{\nu'} a_{22}^\nu b_{12}^\mu b_{22}^{\mu'}) B_{ikj} g_{ij} \Bigr) \\
& + A_{lkj} g_{lj} g_{jk} \Bigl( (a_{11}^{\nu'} a_{21}^\nu b_{11}^{\mu'} b_{12}^\mu + a_{12}^\nu a_{22}^{\nu'} b_{21}^\mu b_{22}^{\mu'}) B_{kij} f_{ij} + (a_{11}^{\nu'} a_{11}^\nu b_{11}^{\mu'} b_{11}^\mu + a_{22}^{\nu'} a_{22}^\nu b_{22}^{\mu'} b_{22}^\mu) A_{kij} g_{ij} \Bigr) \Biggr).
\end{align*}
\textbf{Fifth line:}
\begin{align*}
    \frac{\bar{b}^{\mu'}_{li} \ \bar{b}^{\nu'}_{kl} \ \bar{a}^\mu_{jk} \ \bar{a}^\nu_{ij}}{f_{ij} f_{ik} f_{il} g_{ij} g_{ik} g_{il}} \Biggl( & f_{il} B_{kil} f_{ik} \Bigl( (a_{11}^\mu a_{21}^\nu b_{11}^{\nu'} b_{12}^{\mu'} + a_{12}^\nu a_{22}^\mu b_{21}^{\mu'} b_{22}^{\nu'}) B_{jik} f_{ij} + (a_{21}^\mu a_{22}^\nu b_{11}^{\nu'} b_{12}^{\mu'} + a_{11}^\nu a_{12}^\mu b_{21}^{\mu'} b_{22}^{\nu'}) B_{ikj} g_{ij} \Bigr) \\
& + B_{ilk} f_{il} g_{ik} \Bigl( (a_{12}^\mu a_{21}^\nu b_{12}^{\mu'} b_{21}^{\nu'} + a_{12}^\nu a_{21}^\mu b_{12}^{\nu'} b_{21}^{\mu'}) B_{ijk} f_{ij} + (a_{22}^\mu a_{22}^\nu b_{12}^{\mu'} b_{21}^{\nu'} + a_{11}^\mu a_{11}^\nu b_{12}^{\nu'} b_{21}^{\mu'}) A_{ijk} g_{ij} \Bigr) \\
+ & B_{ikl} g_{il} f_{ik} \Bigl( (a_{12}^\nu a_{22}^\mu b_{11}^{\mu'} b_{21}^{\nu'} + a_{11}^\mu a_{21}^\nu b_{12}^{\nu'} b_{22}^{\mu'}) B_{jik} f_{ij} + (a_{11}^\nu a_{12}^\mu b_{11}^{\mu'} b_{21}^{\nu'} + a_{21}^\mu a_{22}^\nu b_{12}^{\nu'} b_{22}^{\mu'}) B_{ikj} g_{ij} \Bigr) \\
& + A_{ikl} g_{il} g_{ik} \Bigl( (a_{12}^\nu a_{21}^\mu b_{11}^{\nu'} b_{11}^{\mu'} + a_{12}^\mu a_{21}^\nu b_{22}^{\nu'} b_{22}^{\mu'}) B_{ijk} f_{ij} + (a_{11}^\mu a_{11}^\nu b_{11}^{\nu'} b_{11}^{\mu'} + a_{22}^\mu a_{22}^\nu b_{22}^{\nu'} b_{22}^{\mu'}) A_{ijk} g_{ij} \Bigr) \Biggr).
\end{align*}
\textbf{Sixth line:}
\begin{align*}
    \frac{\bar{b}^{\mu'}_{kl} \ \bar{b}^{\nu'}_{li} \ \bar{a}^\mu_{jk} \ \bar{a}^\nu_{ij}}{f_{ij} f_{ik} f_{kl} g_{ij} g_{ik} g_{kl}} \Biggl( & f_{kl} B_{lki} f_{ik} \Bigl( (a_{12}^\nu a_{22}^\mu b_{11}^{\nu'} b_{21}^{\mu'} + a_{11}^\mu a_{21}^\nu b_{12}^{\mu'} b_{22}^{\nu'}) B_{jik} f_{ij} + (a_{11}^\nu a_{12}^\mu b_{11}^{\nu'} b_{21}^{\mu'} + a_{21}^\mu a_{22}^\nu b_{12}^{\mu'} b_{22}^{\nu'}) B_{ikj} g_{ij} \Bigr) \\
& + B_{ilk} f_{kl} g_{ik} \Bigl( (a_{12}^\nu a_{21}^\mu b_{12}^{\mu'} b_{21}^{\nu'} + a_{12}^\mu a_{21}^\nu b_{12}^{\nu'} b_{21}^{\mu'}) B_{ijk} f_{ij} + (a_{11}^\mu a_{11}^\nu b_{12}^{\mu'} b_{21}^{\nu'} + a_{22}^\mu a_{22}^\nu b_{12}^{\nu'} b_{21}^{\mu'}) A_{ijk} g_{ij} \Bigr) \\
+ & B_{lik} g_{kl} f_{ik} \Bigl( (a_{11}^\mu a_{21}^\nu b_{11}^{\mu'} b_{12}^{\nu'} + a_{12}^\nu a_{22}^\mu b_{21}^{\nu'} b_{22}^{\mu'}) B_{jik} f_{ij} + (a_{21}^\mu a_{22}^\nu b_{11}^{\mu'} b_{12}^{\nu'} + a_{11}^\nu a_{12}^\mu b_{21}^{\nu'} b_{22}^{\mu'}) B_{ikj} g_{ij} \Bigr) \\
& + A_{lik} g_{kl} g_{ik} \Bigl( (a_{12}^\nu a_{21}^\mu b_{11}^{\nu'} b_{11}^{\mu'} + a_{12}^\mu a_{21}^\nu b_{22}^{\nu'} b_{22}^{\mu'}) B_{ijk} f_{ij} + (a_{11}^\mu a_{11}^\nu b_{11}^{\nu'} b_{11}^{\mu'} + a_{22}^\mu a_{22}^\nu b_{22}^{\nu'} b_{22}^{\mu'}) A_{ijk} g_{ij} \Bigr) \Biggr).
\end{align*}
\textbf{Seventh line:}
\begin{align*}
    \frac{\bar{b}^{\mu'}_{lk} \ \bar{b}^{\nu'}_{jl} \ \bar{a}^\mu_{ki} \ \bar{a}^\nu_{ij}}{f_{ij} f_{jk} f_{kl} g_{ij} g_{jk} g_{kl}} \Biggl( & f_{kl} B_{jkl} f_{jk} \Bigl( (a_{21}^\nu a_{22}^\mu b_{11}^{\nu'} b_{12}^{\mu'} + a_{11}^\mu a_{12}^\nu b_{21}^{\mu'} b_{22}^{\nu'}) B_{kji} f_{ij} + (a_{11}^\nu a_{21}^\mu b_{11}^{\nu'} b_{12}^{\mu'} + a_{12}^\mu a_{22}^\nu b_{21}^{\mu'} b_{22}^{\nu'}) B_{ikj} g_{ij} \Bigr) \\
& + B_{klj} f_{kl} g_{jk} \Bigl( (a_{12}^\nu a_{21}^\mu b_{12}^{\mu'} b_{21}^{\nu'} + a_{12}^\mu a_{21}^\nu b_{12}^{\nu'} b_{21}^{\mu'}) B_{kij} f_{ij} + (a_{22}^\mu a_{22}^\nu b_{12}^{\mu'} b_{21}^{\nu'} + a_{11}^\mu a_{11}^\nu b_{12}^{\nu'} b_{21}^{\mu'}) A_{kij} g_{ij} \Bigr) \\
+ & B_{kjl} g_{kl} f_{jk} \Bigl( (a_{11}^\mu a_{12}^\nu b_{11}^{\mu'} b_{21}^{\nu'} + a_{21}^\nu a_{22}^\mu b_{12}^{\nu'} b_{22}^{\mu'}) B_{kji} f_{ij} + (a_{12}^\mu a_{22}^\nu b_{11}^{\mu'} b_{21}^{\nu'} + a_{11}^\nu a_{21}^\mu b_{12}^{\nu'} b_{22}^{\mu'}) B_{ikj} g_{ij} \Bigr) \\
& + A_{kjl} g_{kl} g_{jk} \Bigl( (a_{12}^\mu a_{21}^\nu b_{11}^{\nu'} b_{11}^{\mu'} + a_{12}^\nu a_{21}^\mu b_{22}^{\nu'} b_{22}^{\mu'}) B_{kij} f_{ij} + (a_{11}^\mu a_{11}^\nu b_{11}^{\nu'} b_{11}^{\mu'} + a_{22}^\mu a_{22}^\nu b_{22}^{\nu'} b_{22}^{\mu'}) A_{kij} g_{ij} \Bigr) \Biggr).
\end{align*}
\textbf{Eighth line:}
\begin{align*}
    \frac{\bar{b}^{\mu'}_{jl} \ \bar{b}^{\nu'}_{lk} \ \bar{a}^\mu_{ki} \ \bar{a}^\nu_{ij}}{f_{ij} f_{jk} f_{lj} g_{ij} g_{jk} g_{lj}} \Biggl( & f_{lj} B_{ljk} f_{jk} \Bigl( (a_{11}^\mu a_{12}^\nu b_{11}^{\nu'} b_{21}^{\mu'} + a_{21}^\nu a_{22}^\mu b_{12}^{\mu'} b_{22}^{\nu'}) B_{kji} f_{ij} + (a_{12}^\mu a_{22}^\nu b_{11}^{\nu'} b_{21}^{\mu'} + a_{11}^\nu a_{21}^\mu b_{12}^{\mu'} b_{22}^{\nu'}) B_{ikj} g_{ij} \Bigr) \\
& + B_{klj} f_{lj} g_{jk} \Bigl( (a_{12}^\mu a_{21}^\nu b_{12}^{\mu'} b_{21}^{\nu'} + a_{12}^\nu a_{21}^\mu b_{12}^{\nu'} b_{21}^{\mu'}) B_{kij} f_{ij} + (a_{11}^\mu a_{11}^\nu b_{12}^{\mu'} b_{21}^{\nu'} + a_{22}^\mu a_{22}^\nu b_{12}^{\nu'} b_{21}^{\mu'}) A_{kij} g_{ij} \Bigr) \\
+ & B_{lkj} g_{lj} f_{jk} \Bigl( (a_{21}^\nu a_{22}^\mu b_{11}^{\mu'} b_{12}^{\nu'} + a_{11}^\mu a_{12}^\nu b_{21}^{\nu'} b_{22}^{\mu'}) B_{kji} f_{ij} + (a_{11}^\nu a_{21}^\mu b_{11}^{\mu'} b_{12}^{\nu'} + a_{12}^\mu a_{22}^\nu b_{21}^{\nu'} b_{22}^{\mu'}) B_{ikj} g_{ij} \Bigr) \\
& + A_{lkj} g_{lj} g_{jk} \Bigl( (a_{12}^\mu a_{21}^\nu b_{11}^{\nu'} b_{11}^{\mu'} + a_{12}^\nu a_{21}^\mu b_{22}^{\nu'} b_{22}^{\mu'}) B_{kij} f_{ij} + (a_{11}^\mu a_{11}^\nu b_{11}^{\nu'} b_{11}^{\mu'} + a_{22}^\mu a_{22}^\nu b_{22}^{\nu'} b_{22}^{\mu'}) A_{kij} g_{ij} \Bigr) \Biggr).
\end{align*}
Each member of the basis set~\eqref{eq:basis} remains in a product-sum form after application of the rotation $\tilde{\Ub}$, and so we may ultimately compute $K(\Ab,\Bb)$ using the above expressions by substituting the relevant matrix elements from each basis member  of $\mathcal{B}_{2N}$. A trick we can use here is to restrict our calculation to a $3\times 3$ subspace for each matrix $\{a^\nu_{ij}\}_{i,j=1,2,3}$ and $\{b^\nu_{ij}\}_{i,j=1,2,3}$,  then replace our final indices according to the rule $1\mapsto i$, $2\mapsto j$ and $3\mapsto k$. Overall this is too tedious to do by hand but can be easily done via assistance from a symbolic computing program. 

Now, since we compute the trace functional~\eqref{eq:scalar_HS} by multiplying a set of diagonal matrices, and the functional is quadratic with respect to pairs of orthogonal basis vectors, terms containing disjoint indices such as $\mathcal{K}(\mathbf{a}_n\otimes\eb_{ii},\mathbf{a}_m\otimes\eb_{jj})$ for $i\neq j$  and $\mathcal{K}(\mathbf{a}_n\otimes\bb_{ij},\mathbf{a}_m\otimes\bb_{kl})=0$ for $i\neq j\neq k\neq l$ will vanish. As a consequence of this, we can divide the summation within~\eqref{eq:scalar2} into nine distinct non-zero contributions, 
\begin{align}
    \text{Scal}(\Vb)=\sum_{l=1}^{9} \mathcal{K}_l(\Vb)
\end{align}
where we eventually find
\begin{align}
    \mathcal{K}_1(\Vb):=\sum_{i}\sum_{n\neq m=1}^3  \mathcal{K}(\mathbf{a}_n \otimes\eb_{ii},\mathbf{a}_m\otimes\eb_{ii})=\sum_i \bigg[\frac{B_{iii} (2 A_{iii}g_{ii}-B_{iii} f_{ii})}{f_{ii}^2 \ g_{ii}^2}\bigg]
\end{align}
\begin{align}
    \mathcal{K}_2(\Vb)=\sum_{i<j}\sum_{n\neq m=1}^2  \mathcal{K}(\mathbf{a}_n\otimes\bb_{ij},\mathbf{a}_m\otimes\bb_{ij})=\sum_{i<j}\frac{1}{4f_{ij}g_{ij}}\bigg[\frac{B_{iij}^2}{ g_{ii} }+\frac{B_{ijj}^2}{   g_{jj}}-\frac{A_{iij} B_{iji}}{ f_{ii}  } - \frac{A_{ijj} B_{jij}}{  f_{jj} } 
\bigg]
\end{align}
\begin{align}
    \nonumber\mathcal{K}_3(\Vb)&:=\sum_{i<j}\sum_{n=1}^3\sum_{m=1}^2 \bigg( \mathcal{K}(\mathbf{a}_n\otimes\eb_{ii},\mathbf{a}_m\otimes\bb_{ij})+\mathcal{K}(\mathbf{a}_n\otimes\eb_{jj},\mathbf{a}_m\otimes\bb_{ij}) \\
    \nonumber& \ \ \ \ \ \ \ \ \ \ \ \ \ \ \ \ \ \ \ \ \ \  \ \ \ \ \ \ \ \  +\mathcal{K}(\mathbf{a}_m\otimes\bb_{ij},\mathbf{a}_n\otimes\eb_{ii})+\mathcal{K}(\mathbf{a}_m\otimes\bb_{ij},\mathbf{a}
    _n\otimes\eb_{jj})\bigg), \\
    \nonumber&=\sum_{i<j}\frac{1}{f_{ij}}\bigg[\frac{A_{iij}^2}{4f_{ii}f_{ij}}+\frac{A_{ijj}^2}{4f_{jj}f_{ij}}-\frac{A_{iij}B_{iii}}{g_{ii}f_{ii}}-\frac{A_{ijj}B_{jjj}}{g_{jj}f_{jj}}-\frac{A_{iij}A_{iii}}{f_{ii}^2}-\frac{A_{ijj}A_{jjj}}{f_{jj}^2}\bigg] \\
    & \ \ \ \ \ \ \ \ \ \  +\frac{1}{g_{ij}}\bigg[\frac{B_{iji}^2}{4f_{ii}g_{ij}}+\frac{B_{jij}^2}{4f_{jj}g_{ij}}+\frac{B_{iij}^2}{f_{ij}g_{ii}}+\frac{B_{ijj}^2}{f_{ij}g_{jj}}-\frac{B_{iji}B_{iii}}{g_{ii}f_{ii}}-\frac{B_{jij}B_{jjj}}{g_{jj}f_{jj}}-\frac{B_{iji}A_{iii}}{f_{ii}^2}-\frac{B_{jij}A_{jjj}}{f_{jj}^2}\bigg]
\end{align}

\begin{align}
    \nonumber \mathcal{K}_4(\Vb)&:=\sum_{i<j}\sum_{n=1}^2  \bigg(\mathcal{K}(\mathbf{a}_n\otimes\bb_{ij},\mathbf{g}_{ij})+\mathcal{K}(\mathbf{g}_{ij},\mathbf{a}_n\otimes\bb_{ij}) +\mathcal{K}(\mathbf{a}_n\otimes\bb_{ij},\tilde{\mathbf{g}}_{ij})+\mathcal{K}(\tilde{\mathbf{g}}_{ij},\mathbf{a}_n\otimes\bb_{ij}), \\
    &=\sum_{i<j}\frac{1}{2g_{ij}}\bigg[\frac{B_{iij}^2}{g_{ii}f_{ij}}+\frac{B_{ijj}^2}{g_{jj}f_{ij}}-\frac{B_{iji}^2}{2g_{ij}f_{ii}}-\frac{B_{jij}^2}{2g_{ij}f_{jj}}\bigg]-\frac{1}{2f_{ij}}\bigg[\frac{A_{iij}B_{iji}}{f_{ii}g_{ij}}+\frac{A_{ijj}B_{jij}}{f_{jj}g_{ij}}+\frac{A_{iij}^2}{2f_{ij}f_{ii}}+\frac{A_{ijj}^2}{2f_{ij}f_{jj}}\bigg],
\end{align}
\begin{align}
    \nonumber\mathcal{K}_5(\Vb)&:=\sum_{i<j}\sum_{n=1}^3 \bigg(  \mathcal{K}(\mathbf{a}_n\otimes\eb_{ii},\mathbf{g}_{ij})+\mathcal{K}(\mathbf{a}_n\otimes\eb_{jj},\mathbf{g}_{ij})  +\mathcal{K}(\mathbf{g}_{ij},\mathbf{a}_n\otimes\eb_{ii})+\mathcal{K}(\mathbf{g}_{ij},\mathbf{a}_n\otimes\eb_{jj}) \\
    \nonumber& \ \ \ \ \ \ \ \ \ \ \ \ \ \ \ \ \ \ \ \ \ \  \ \ \ \ \ \ \ \ +\mathcal{K}(\mathbf{a}_n\otimes\eb_{ii},\tilde{\mathbf{g}}_{ij})+\mathcal{K}(\mathbf{a}_n\otimes\eb_{jj},\tilde{\mathbf{g}}_{ij})  +\mathcal{K}(\tilde{\mathbf{g}}_{ij},\mathbf{a}_n\otimes\eb_{ii})+\mathcal{K}(\tilde{\mathbf{g}}_{ij},\mathbf{a}_n\otimes\eb_{jj})\bigg), \\
    \nonumber&=\sum_{i<j}\frac{A_{iij}}{f_{ii}f_{ij}}\bigg[\frac{A_{iij}}{4f_{ij}}+\frac{B_{iii}}{g_{ii}}+\frac{A_{iii}}{2f_{ii}}\bigg]+\frac{A_{ijj}}{f_{jj}f_{ij}}\bigg[\frac{A_{ijj}}{4f_{ij}}+\frac{B_{jjj}}{g_{jj}}+\frac{A_{jjj}}{2f_{jj}}\bigg] \\
    & \ \ \ \ \ \ \ \ \ \ \ \  +\frac{1}{g_{ij}}\bigg[\frac{B_{iij}^2}{g_{ii}f_{ij}}+\frac{B_{ijj}^2}{g_{jj}f_{ij}}+\frac{B_{iji}}{f_{ii}}\bigg(\frac{B_{iji}}{4g_{ij}}-\frac{B_{iii}}{4g_{ii}}-\frac{A_{iii}}{2f_{ii}}\bigg)+\frac{B_{jij}}{f_{jj}}\bigg(\frac{B_{jij}}{g_{ij}}+\frac{B_{jjj}}{g_{jj}}+\frac{A_{jjj}}{2f_{jj}}\bigg)\bigg]
\end{align}
\begin{align}
    \mathcal{K}_6(\Vb)=\sum_{i<j}\mathcal{K}(\mathbf{g}_{ij},\mathbf{g}_{ij})+\mathcal{K}(\mathbf{g}_{ij},\tilde{\mathbf{g}}_{ij})+\mathcal{K}(\tilde{\mathbf{g}}_{ij},\mathbf{g}_{ij})+\mathcal{K}(\tilde{\mathbf{g}}_{ij},\tilde{\mathbf{g}}_{ij})=\sum_{i<j}\frac{1}{4f_{ij}g_{ij}}\bigg[\frac{B_{iij}^2}{g_{ii}}+\frac{B_{ijj}^2}{g_{jj}}-\frac{A_{iij}B_{iji}}{f_{ii}}-\frac{A_{ijj}B_{jij}}{f_{jj}}\bigg],
\end{align}
\begin{align}
\nonumber\mathcal{K}_7(\Vb)&:=\sum_{i<j<k}\sum_{n,m=1}^2 \bigg( \mathcal{K}(\mathbf{a}_n\otimes\bb_{ij},\mathbf{a}_m\otimes\bb_{jk})+\mathcal{K}(\mathbf{a}_n\otimes\bb_{ik},\mathbf{a}_m\otimes\bb_{jk})+\mathcal{K}(\mathbf{a}_n\otimes\bb_{ij},\mathbf{a}_m\otimes\bb_{ik}) \\
    \nonumber& \ \ \ \ \ \ \ \ \ \ \ \ \ \ \ \ \ \ \ \ \ \  \ \ \ \ \ \ \ \ \ \ \ \ \ \ \   \mathcal{K}(\mathbf{a}_m\otimes\bb_{jk},\mathbf{a}_n\otimes\bb_{ij})+\mathcal{K}(\mathbf{a}_m\otimes\bb_{jk},\mathbf{a}_n\otimes\bb_{ik})+\mathcal{K}(\mathbf{a}_n\otimes\bb_{ik},\mathbf{a}_m\otimes\bb_{ij})\bigg), \\
    \nonumber&=\frac{3}{8}\bigg[\frac{A_{ijk}^2}{f_{ij}f_{ik}f_{jk}}+\frac{B_{ikj}^2}{f_{ij}g_{ik}g_{jk}}+\frac{B_{ijk}^2}{f_{ik}g_{jk}g_{ij}}+\frac{B_{jik}^2}{f_{jk}g_{ij}g_{ik}}\bigg]-\frac{1}{4}\bigg[\frac{A_{jjk}}{f_{jj}f_{jk}}+\frac{B_{jkj}}{f_{jj}g_{jk}}\bigg]\bigg[\frac{B_{jij}}{g_{ij}}+\frac{A_{ijj}}{f_{ij}}\bigg] \\
    & \ \ \ \ \ \ \ \ -\frac{1}{4}\bigg[\frac{A_{ijj}}{f_{ii}f_{ij}}+\frac{B_{iji}}{f_{ii}g_{ij}}\bigg]\bigg[\frac{B_{iki}}{g_{ik}}+\frac{A_{iik}}{f_{ik}}\bigg]-\frac{1}{4}\bigg[\frac{A_{ikk}}{f_{kk}f_{ik}}+\frac{B_{kik}}{f_{kk}g_{ik}}\bigg]\bigg[\frac{B_{kjk}}{g_{jk}}+\frac{A_{jkk}}{f_{jk}}\bigg],
\end{align}
\begin{align}
    \nonumber\mathcal{K}_8(\Vb)&:=\sum_{i<j<k}\bigg( \mathcal{K}(\mathbf{g}_{ij},\mathbf{g}_{jk})+\mathcal{K}(\mathbf{g}_{ik},\mathbf{g}_{jk})  +\mathcal{K}(\mathbf{g}_{jk},\mathbf{g}_{ij})+\mathcal{K}(\mathbf{g}_{jk},\mathbf{g}_{ik})+\mathcal{K}(\mathbf{g}_{ij},\mathbf{g}_{ik})+\mathcal{K}(\mathbf{g}_{ik},\mathbf{g}_{ij})  \\
    \nonumber& \ \ \ \ \ \ \ \ \ \ \ \ \ \ \ \ \ \ \ \ \ \ \ \ \ \ \   +\mathcal{K}(\tilde{\mathbf{g}}_{ij},\tilde{\mathbf{g}}_{jk})+\mathcal{K}(\tilde{\mathbf{g}}_{ik},\tilde{\mathbf{g}}_{jk})  +\mathcal{K}(\tilde{\mathbf{g}}_{jk},\tilde{\mathbf{g}}_{ij})+\mathcal{K}(\tilde{\mathbf{g}}_{jk},\tilde{\mathbf{g}}_{ik})++\mathcal{K}(\tilde{\mathbf{g}}_{ij},\tilde{\mathbf{g}}_{ik})+\mathcal{K}(\tilde{\mathbf{g}}_{ik},\tilde{\mathbf{g}}_{ij}) \\
    \nonumber& \ \ \ \ \ \ \ \ \ \ \ \ \ \ \ \ \ \ \ \ \ \ \ \ \ \ \  +\mathcal{K}(\mathbf{g}_{ij},\tilde{\mathbf{g}}_{jk})+\mathcal{K}(\mathbf{g}_{ik},\tilde{\mathbf{g}}_{jk})  +\mathcal{K}(\mathbf{g}_{jk},\tilde{\mathbf{g}}_{ij})+\mathcal{K}(\mathbf{g}_{jk},\tilde{\mathbf{g}}_{ik})+\mathcal{K}(\mathbf{g}_{ij},\tilde{\mathbf{g}}_{ik})+\mathcal{K}(\mathbf{g}_{ik},\tilde{\mathbf{g}}_{ij}) \\
    \nonumber& \ \ \ \ \ \ \ \ \ \ \ \ \ \ \ \ \ \ \ \ \ \ \ \ \ \ \  +\mathcal{K}(\tilde{\mathbf{g}}_{ij},\mathbf{g}_{jk})+\mathcal{K}(\tilde{\mathbf{g}}_{ik},\mathbf{g}_{jk})  +\mathcal{K}(\tilde{\mathbf{g}}_{jk},\mathbf{g}_{ij})+\mathcal{K}(\tilde{\mathbf{g}}_{jk},\mathbf{g}_{ik})+\mathcal{K}(\tilde{\mathbf{g}}_{ij},\mathbf{g}_{ik})+\mathcal{K}(\tilde{\mathbf{g}}_{ik},\mathbf{g}_{ij})\bigg), \\
    &=\mathcal{K}_7(\Vb),
\end{align}
\begin{align}
   \nonumber\mathcal{K}_9(\Vb)&:=\sum_{i<j<k}\sum_{n=1}^2 \bigg( \mathcal{K}(\mathbf{a}_n\otimes\bb_{ij},\mathbf{g}_{jk})+\mathcal{K}(\mathbf{a}_n\otimes\bb_{ik},\mathbf{g}_{jk})  +\mathcal{K}(\mathbf{g}_{jk},\mathbf{a}_n\otimes\bb_{ij})+\mathcal{K}(\mathbf{g}_{jk},\mathbf{a}_n\otimes\bb_{ik}), \\
   \nonumber& \ \ \ \ \ \ \ \ \ \ \ \ \ \ \ \ \ \ \ \ \ \  +\mathcal{K}(\mathbf{a}_n\otimes\bb_{ij},\tilde{\mathbf{g}}_{jk})+\mathcal{K}(\mathbf{a}_n\otimes\bb_{ik},\tilde{\mathbf{g}}_{jk})  +\mathcal{K}(\tilde{\mathbf{g}}_{jk},\mathbf{a}_n\otimes\bb_{ij})+\mathcal{K}(\tilde{\mathbf{g}}_{jk},\mathbf{a}_n\otimes\bb_{ik})\\
   \nonumber& \ \ \ \ \ \ \ \ \ \ \ \ \ \ \ \ \ \ \ \ \ \  +\mathcal{K}(\mathbf{a}_n\otimes\bb_{ij},\mathbf{g}_{ik})+\mathcal{K}(\mathbf{a}_n\otimes\bb_{ik},\mathbf{g}_{ij})  +\mathcal{K}(\mathbf{a}_n\otimes\bb_{ij},\tilde{\mathbf{g}}_{ik})+\mathcal{K}(\mathbf{a}_n\otimes\bb_{ik},\tilde{\mathbf{g}}_{ij})  \bigg), \\
   &=2\mathcal{K}_7(\Vb)
\end{align}
Note that we simplified these expressions using the symmetries $f_{ij}=f_{ji}$, $g_{ij}=g_{ji}$, $B_{ijk}=B_{kji}$ and $A_{ijk}=A_{ikj}=A_{jik}=A_{kij}=A_{kji}$. Grouping these expressions together and simplifying yields the final form for the scalar curvature~\eqref{eq:mainresult}.

\end{document}